\numberwithin{equation}{section}
\begin{document}
\title{Becoming Large, Becoming Infinite}
\subtitle{The Anatomy of Thermal Physics and Phase Transitions in Finite Systems}
\author{David A. Lavis, Reimer K\"uhn\\ and Roman Frigg}
\institute{King's College London,
Department of Mathematics,
 London WC2R 2LS, U.K. (for RK and DAL);
 London School of Economics,
 Centre for Philosophy of Natural and Social Science,
 London WC2A 2AE, U.K. (for RF and DAL).
 For correspondence: \email{david.lavis@kcl.ac.uk}}
 \date{\today}
 \maketitle
\begin{abstract} This paper presents an in-depth analysis of the anatomy of both thermodynamics
and statistical mechanics, together with the relationships between their constituent parts.
 Based on this analysis, using the renormalization
   group and finite-size scaling,  we give a definition of a large but finite
  system and argue that phase transitions are represented correctly, as incipient singularities in such systems.
We describe the role of the thermodynamic limit.  And we
explore the implications of this picture of critical phenomena for the questions of reduction and emergence.
\end{abstract}
\keywords{scaling, renormalization, large systems, incipient singularities, reduction, emergence.}
\tableofcontents
 \section{Introduction}\label{intro}
 Thermodynamics and statistical mechanics coexist in a collaborative relationship within the envelope of thermal physics.  In many presentations of the subject, particularly in undergraduate texts,
 it is heuristically advantageous to intermingle the macroscopic concepts of thermodynamics with the micro-picture provided by statistical mechanics.  And it is,
 of course, self-evident that statistical mechanics\footnote{At least in its application to physics, rather than to its more modern application to sociological and geographical problems.} needs the basic structure of thermodynamics with inter-theory connecting relationships defining the thermodynamic quantities like internal energy, temperature and entropy.
 On the other hand, there are some advantages, both aesthetic and mathematical, in producing an account of thermodynamics
which makes no reference to the underlying  microstructure  of the system, as would seem to be one of the aims of (among others) the books of \citet{giles1} and \citet{buch1} and the papers
  of \citeauthor{l&y2}.\footnote{\citet{l&y2} is the most comprehensive account of their work, with briefer versions in
   \citet{l&y1} and  \citet{l&y3}. The extension to non-equilibrium is given in  \citet{l&y4}.}  For \citeauthor{buch1} we have the first law implying the existence
   of the internal energy function $U$ and  \citeauthor{cara1}'s (\citeyear{cara1}) version of the second law yielding the entropy $S$ and temperature $T$;
   for \citeauthor{l&y2} three sets of axioms accomplish the same task. This, together with an account of the nature of adiabatic processes (as described, for example, in \citealt{buch1},
   Chaps. 5 and 6;  \citealt{l&y2},  Sect.\ 2.1;   \citealt{Lavis-ep}, Sect. 2.1.1) provides the basic
framework into which the models of statistical mechanics  are embedded.

This raises the question of how statistical mechanics and thermodynamics relate to each other. Attempts to answer this question run up against a problem. The neat labels  `statistical mechanics' and `thermodynamics' mask the fact that neither theory is a monolithic bloc. Indeed, each has a complicated internal structure with several layers of different theoretical postulates and assumptions. So the question of how statistical mechanics and thermodynamics relate ought to be interpreted as the more complex
question of (a) what the internal structure of each theory is and of (b) how the various parts of each theory relate to the various other parts of the other theory. The complexity of the internal structures of both theories, as well as the intricacy of their interrelations, seems to have been somewhat under appreciated in the philosophical literature on the subject, and so the first aim of this paper is to present an in-depth analysis of the anatomy of both theories and the connections between their parts.\footnote{For surveys of the philosophical discussions about statistical mechanics and thermodynamics see, for instance, \citet{skl1}, \citet{Uff3} and \citet{Frigg6}.}

\vspace{0.2cm}
 \begin{figure}[h]
\includegraphics[width=12cm]{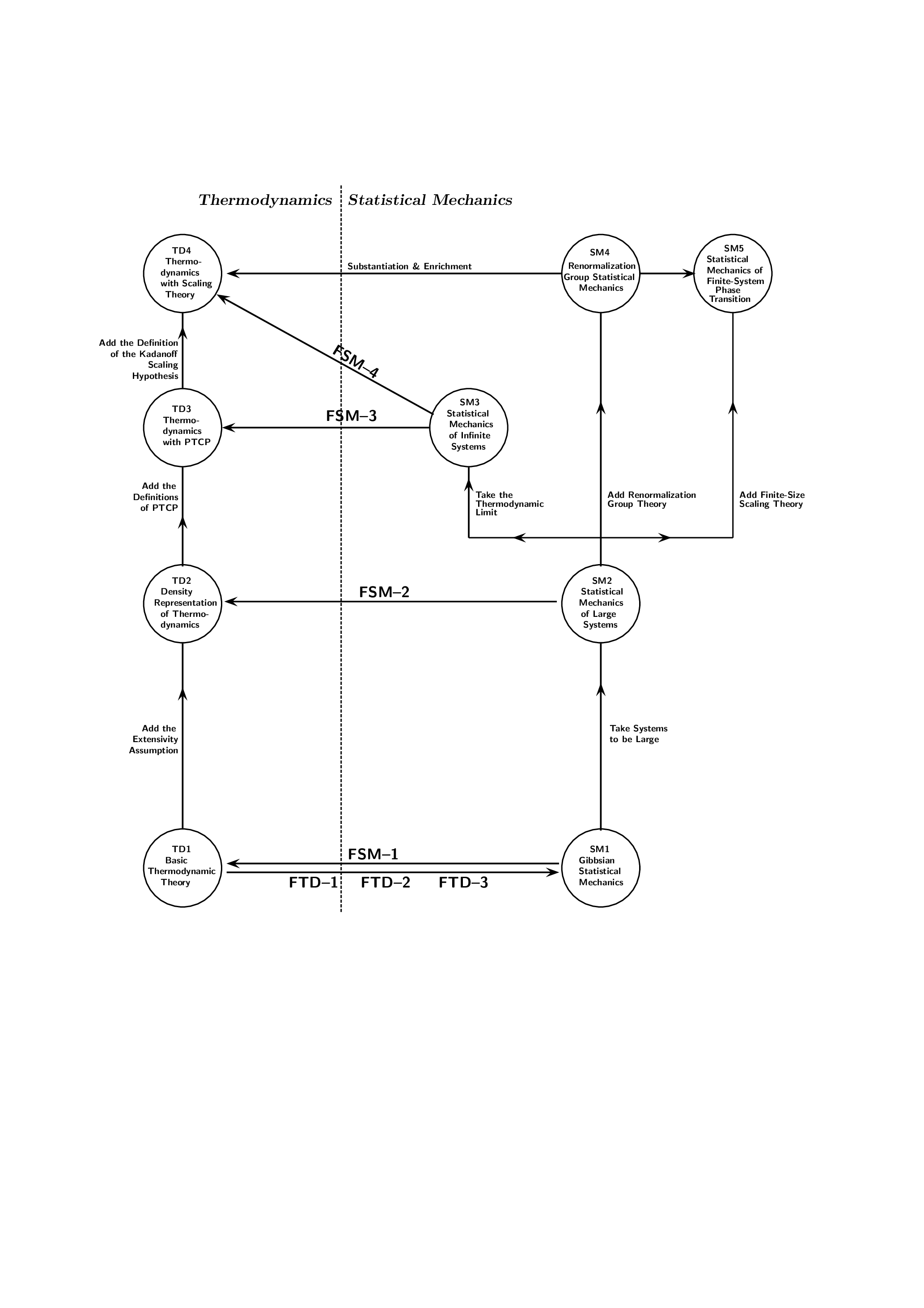}
\caption{Schematic representation of the relationship between thermodynamics and statistical mechanics.  }\label{Fig1}
\end{figure}
\noindent Fig. \ \ref{Fig1} provides a schematic advance summary of the analysis that we develop in this paper.  It sees statistical mechanics and thermodynamics as
 parallel developments, each decomposed into separate levels representing the stages of theory-based development in which  features are added to the system.
  The cross-interactions between the levels in the two columns contain interventions integral to this development.
    On the left are  the levels for thermodynamics, as  described in detail  in   Sect.\ \ref{therm}.
These levels are related to each other by adopting special assumptions,
beginning at the bottom with  \emb{ basic thermodynamic theory} (labelled $\textsf{TD1}$).
 Adding the extensivity assumption to this theory takes us to the next level,
 the \emb{density representation of thermodynamics} (labelled $\textsf{TD2}$).
 Augmenting $\textsf{TD2}$ with the notion of phase transitions and critical phenomena
 (PTCP) gives \emb{thermodynamics with PTCP} (labelled $\textsf{TD3}$).
 Finally, supplementing $\textsf{TD3}$ with a version of the Kadanoff scaling hypothesis leads
 us to \emb{thermodynamics with scaling theory} (labelled $\textsf{TD4}$).

  The parallel development for statistical mechanics
 is represented on the right of Fig.\ \ref{Fig1}, as  described in detail in Sect.\ \ref{statmech}. The picture here
  is a little more complicated, involving, as we explain in our discussion, three different paths. At the bottom is the fundamental theory,
   which we here take to be \emb{Gibbsian statistical mechanics}
    (labelled $\textsf{SM1}$).\footnote{We set aside Boltzmannian statistical mechanics.
    For discussion of the relation between  Gibbsian and Boltzmannian statistical mechanics
     see \citet{lavis5} and \citet{Frigg&W3}.}
     Assuming that the systems to which the theory is applied are large leads us to the next layer, \emb{large statistical mechanical systems} (labelled $\textsf{SM2}$). This  marks a branching point in the structure of the theory: three different additions can be made to $\textsf{SM2}$, resulting in three different branches. Adding the thermodynamic limit to $\textsf{SM2}$ leads to the \emb{statistical mechanics of infinitely large systems} (labelled $\textsf{SM3}$). Adding renormalization group techniques to $\textsf{SM2}$ leads to the  \emb{renormalization group approach to statistical mechanics} (labelled $\textsf{SM4}$).
 Finally, adding the analysis of phase transitions for finite systems\footnote{Where, as described in Sect.\ \ref{ourp}, phase transitions are defined   in a way which
  avoids the involvement of singularities.} to $\textsf{SM2}$ leads to  the \emb{statistical mechanics of finite-system phase transitions} (labelled $\textsf{SM5}$).

\vspace{0.2cm}

\noindent It is our aim in this work to keep the developments of thermodynamics and statistical mechanics as separate as possible,
 in order to make visible  the internal structure of each separate theory. However, as indicated above, on close examination
 it becomes evident that there are in fact some `messages', both implicit and explicit, sent  \emb{\underline{f}rom \underline{s}tatistical \underline{m}echanics} (\sfF\sfS\sfM), that is to say from  the microstructure, to thermodynamics, which provides  the macrostructure. These are spelled out in  \FSM--\refsf{fsm-01}, \FSM--\refsf{fsm-02}, \FSM--\refsf{fsm-03}, \FSM--\refsf{fsm-04}.   In the other direction  the  connecting relationships  \emb{\underline{f}rom \underline{t}hermo\underline{d}ynamics} (\sfF\sfT\sfD), labelled  \FTD--\refsf{ftd-01}, \FTD--\refsf{ftd-02}, \FTD--\refsf{ftd-03},
 identify quantities in statistical mechanics with thermodynamic variables. As we shall see \FSM--\refsf{fsm-01} also plays a role in the
connecting  process and can be seen as in dialogue
 with \FTD--\refsf{ftd-03}.  The remaining  interventions  \FSM--\refsf{fsm-02}, \FSM--\refsf{fsm-03}, \FSM--\refsf{fsm-04}, can be viewed as an aid to the
 clarification of a number of important issues. We discuss these links between elements of both theories in the appropriate places in Sects.\ \ref{therm} and \ \ref{statmech}.

\vspace{0.2cm}

\noindent Much of the recent interest in the relationship between thermodynamics and statistical mechanics has concentrated on PTCP.
 It is the second aim of this paper to revisit the issue of PTCP in the light of our analysis of the internal structure of the two theories
  and their interrelations.  Doing so will lead us to some unexpected, and we think important, conclusions.

 In the modern theory of critical phenomena, dating from the middle of the 1960s,\footnote{For an historic account see \citet{domb1}.}
 critical exponents, which classify the type of singular behaviour in the approach to a critical region, play an important role.
 In our development of thermodynamics in Sect.\ \ref{therm} scaling theory is the final destination with scaling laws relating
  these critical exponents.  However, as already indicated and as described below, thermodynamics is a structured shell into
  which particular models are embedded, either by the assumption of a phenomenological
form for the entropy function or from statistical mechanics.  In the absence of such an embedding it is not possible
 to calculate values for critical exponents, nor to discuss universality.
 This is the idea  \citep{2-k2} that all critical situations\footnote{Of which there may be more than one in any model.}
  can be divided into universality classes, characterized by the values of their critical exponents and
differentiated by  a small number of properties of which the most important are  the (physical) dimension
 $d$ of the system and  the symmetry group of the  order parameter.
The first, but not the second, of these plays an important role in our
discussions,\footnote{For an account of the role of the order parameter in critical phenomena see, for example,
\citet[][Sect. 1.2]{2-b47}.} in particular in  the case of the Ising model, which we shall use
as an illustrative example throughout this work.
This, the most  well-known and thoroughly investigated model in the statistical mechanics of lattice systems,
 is briefly described in Appen.\ \ref{tim}. With the list of critical exponents given there for $d\cequals2$,
  $d\cequals3$ and $d\ge 4$, it provides an example of the dependence of these exponents and hence
   the universality class on the dimension of the system. The dimension $d$ is also of importance, in our
    discussion of scaling theory in Sect. \ref{scathr}, of finite-size scaling in Sect.\ \ref{fssc}
and of phenomenological renormalization in Sect.\ \ref{wtfn} (c).

\vspace{0.2cm}

\noindent  These observations concerning universality classes  together  with the inter-theory connecting relationships  \FSM--\refsf{fsm-02}, \FSM--\refsf{fsm-03}, \FSM--\refsf{fsm-04},
  provide the impetus to investigate, and clarify a number of important issues relating to PTCP.  These are (not necessarily
  in the order in which they arise in the discussion):
\begin{enumerate}[(i)]
\item Are infinite systems really  necessary in thermodynamics or statistical mechanics and:
\begin{enumerate}[(a)]
\item If so what for?
\item  If they are, is this solely because extensivity is not exactly true  in most cases in statistical mechanics?
\item Is the thermodynamic limit irrelevant to thermodynamics or has it already  been
implicitly applied?\footnote{It is an interesting observation that discussions of PTCP in the context purely of thermodynamics
 (e.g. \citeauthor{1-p1}, \citeyear{1-p1}, Chap. 9; \citeauthor{2-b2}, \citeyear{2-b2}) rarely if ever feel the need to
invoke or even refer to the thermodynamic limit.}
\item Is the thermodynamic limit in statistical mechanics necessary for the implementation of the procedures of the renormalization group?
\item Is there a  meaningful way to represent PTCP in finite systems?
\end{enumerate}
\item Given that, in thermodynamics, critical behaviour involves discontinuities in densities and singularities in response functions,
 is this necessarily still the case in statistical mechanics?
 \item Are the ideas of enrichment and substantiation helpful in describing the relationship between thermodynamics and statistical mechanics?
\item Where do reduction and emergence feature in the accounts of the relationship between thermodynamics and statistical mechanics?
\end{enumerate}
 As indicated, in the title of this work and by the progression between levels in the statistical mechanical column in Fig.\ \ref{Fig1},  we will discuss these issues with a special focus on
 large systems and infinite systems. In particular we shall address the question
  as to where realism is to be found,  in the study of  large systems, because real systems are finite but large (in the sense that they typically have $\sim 10^{23}$
constituents), or in the thermodynamic limit of an infinite system,  because singular behaviour (in susceptibilities and compressibilities) is believed to be experimentally
 observed, and in theories this arises only in the thermodynamic limit.
 This broad categorization of  large systems  is refined in Sect.\ \ref{ourp}.
The process of taking the thermodynamic limit is the determination of the asymptotic properties of a system as it becomes infinitely large.  In general this will involve
taking $d$ limits in each of the linear dimensions of the system and such a $d$-dimensionally infinite system, which where appropriate we call a \emb{fully-infinite system}, is implicitly the object of investigation by scaling theory in Sect.\ \ref{scathr}.\footnote{ \label{exthlt}Underlying this description is, of course, the question of the {\em existence} of the thermodynamic limit
and whether it depends on the boundary conditions  of the erstwhile finite system. For a discussion of these questions
see, for example, \citet{griff3}, \citet[][10--41]{griff2}  and  \citet[][Chaps.\  2 and 3]{1-r2}.}
 However, relevant to our discussions is the case of a \emb{partially-infinite system}, where the limit is taken in only $\frakd<d$ dimensions.
  Here it is $\frakd$ rather than $d$ which should count for the critical behaviour as the dimension of the system.
The idea underlying our approach to PTCP is that reality lies with \emb{fully-finite systems} ($\frakd=0$) and that the judgment as to whether
the large system will show behaviour which in practical terms is indistinguishable from singular behaviour is based
on comparing the behaviour of systems of ever increasing size to see whether their properties indicate  convergence towards those of the infinite
system.  In principle, as described in Sect.\ \ref{ourp} this limiting process is in all $d$ dimensions.  In practice, as we see in our discussion of $d=2$ transfer matrix calculations
in Sect.\ \ref{extls}, it also has relevance to the case where one limit has already been taken and increasing size is in the remaining dimension.

\vspace{0.2cm}

 \noindent Thus, as we have indicated, Sects.\ \ref{therm} and \ref{statmech} trace the steps in our developments of
 thermodynamics and statistical mechanics with the inter-theory connections between them; with  Sect.\  \ref{tlnism} addressing
 different proposed resolutions to the contradiction between the finiteness of real systems and the perceived necessity of phase
 transitions being portrayed as singularities in infinite systems. Sect.\ \ref{ptfs} discusses the proposal of \citet{main2}
 for representing the occurrence of phase transitions in finite systems.  Using the account of finite-size scaling in Sect.\ \ref{fssc}
 we propose in Sect.\ \ref{ourp} our alternative quantitative account  for phase transitions in finite {\em large} systems.
 Sect.\ \ref{emred} contains some after-thoughts on enrichment, substantiation, reduction and emergence and  our conclusions are
 in Sect.\  \ref{concl}.
\section{From Classical Thermodynamics to Scaling Theory}\label{therm}
Accounts of thermodynamics range from those designed for the practical needs of engineers to those which aim for a degree of mathematical rigour. However, all
 share some common features and assumptions some of which are at variance with the  insights gained in statistical mechanics.  As indicated above, we flag these differences in the form
 of messages from statistical mechanics (\FSM--\refsf{fsm-01} to \FSM--\refsf{fsm-04}).
\subsection{The Structure of Thermodynamics}\label{tstruth}
All accounts of thermodynamics contain (in some form or another)  the first law, which establishes the existence of the internal energy function $U$ and the second law which
  establishes the existence of the entropy $S$ and temperature $T$. Details are not necessary for the present discussion.
   The only thing we need to carry forward is the fundamental thermodynamic differential
form.  Given a thermodynamic system with:
\begin{enumerate}[(i)]
\item One mechanic extensive/intensive\footnote{The extensive variables $U$, $X$ and $N$ scale with the size of the system, intensive variables $T$, $\xi$ and $\mu$ are invariant with respect to such scaling.  }
  conjugate variable pair $(X,\xi)$, where $X$ could stand for the volume $V$ or magnetic moment $\pM$ with conjugate
intensive variables,  which in the case of $V$ is the (negative) pressure $-P$ and  in the case of $\pM$ is the magnetic field $\pH$;
\item A (dimensionless) extensive variable $N$ which counts the number of units of mass in the system with a conjugate (intensive) energy $\mu$, called the chemical potential, carried by
 each unit of mass;\footnote{In most presentations of thermodynamics $N$ is simply taken to be the number of  particles in the system. Our usage is designed to avoid
 reference to the microstructure of the system and to allow $N$ to have non-integer values.}
\end{enumerate}
 for a differential change in the space  $\Xi_0$ of the variables $(U,X,N)$ the differential change in the entropy $S$ satisfies\footnote{At this point it is convenient:
\begin{enumerate}[(i)]
\item To clarify the dimensionality of the thermodynamic variables. It is straightforward
to show that, by scaling with respect to suitable constants,  $T$ and $\xi$ can be made of the
 dimensions of energy ($\rmJ\cequals\rmm^2\rmk\rmg\,\rms^{-2}$) and $U$, $S$ and $ X$
  made dimensionless. In the case of $U$ this is achieved by factoring out an energy constant $\varepsilon>0$. This is
the field--extensive variable representation of \citet[][Sect. 1.1]{lavisnew}, where scaling for $S$ and $T$ is effected using Boltzmann's constant $k_\tB$.
The further change to the coupling--extensive variable representation is achieved by taking ratios of $\varepsilon$, $\xi$ and $\mu$ with respect to $T$ as shown in (\ref{althsp1a}).
\item To observe that the generalization to more than one mechanical variable pair is straightforward.
\item To emphasise that this differential form should not be understood as some sort of equilibrium process in the space $\Xi_0$ \citep{Lavis-ep}.
\end{enumerate}}\label{fevr}
\begin{equation}
\rmd S=\zeta_1\rmd U -\zeta_2\rmd X -\zeta_3\rmd N,
\label{althsp1}
\end{equation}
where
\begin{equation}
\zeta_1\cequals\varepsilon/T,\tripsep\zeta_2\cequals\xi/T,\tripsep\zeta_3\cequals\mu/T,
\label{althsp1a}
\end{equation}
are couplings. It is clear that the couplings are intensive {\em and} dimensionless.
 That the  variables $(U,X,N)\in\Xi_0$ appear as differentials on the right of (\ref{althsp1}) should be understood as signifying that they are independent
variables. This means that the system is thermally, mechanically   and chemically isolated with $U$, $X$ and  $N$ fixed by an experimenter.
 Legendre transformations can be used to replace $U$ and $X$ successively as independent variables
by $\zeta_1$ and $\zeta_2$. Firstly, with Helmholtz free energy
\begin{equation}
\Phi_1\cequals \zeta_1 U-S,
\label{althsp30}
\end{equation}
we have
\begin{equation}
\rmd\, \Phi_1=U\rmd\zeta_1+\zeta_2 \rmd X +\zeta_3 \rmd N,
\label{althsp40}
\end{equation}
so that the independent variables are $(\zeta_1,X,N)\in\Xi_1$.  The system is in contact with a source of thermal energy at temperature $T=\varepsilon/\zeta_1$.
Secondly, with Gibbs free energy
\begin{equation}
\Phi_2\cequals \zeta_1  U-\zeta_2 X-S,
\label{althsp31}
\end{equation}
we have
\begin{equation}
\rmd\, \Phi_2=U\rmd \zeta_1- X\rmd\zeta_2  +\zeta_3 \rmd N,
\label{althsp41}
\end{equation}
so that the independent variables are $(\zeta_1,\zeta_2,N)\in\Xi_2$. The system is now, through $\zeta_2$, also in mechanical contact with its environment, be it a fluid system subject to a pressure $P$
or a magnetic system subject to a field $\pH$.  The couplings $\zeta_1$ and $\zeta_2$ are referred to as the {\em thermal and field (or mechanical) couplings} respectively.

It is tempting to suppose that this process could be taken one step further, interchanging the roles of $N$ and $\zeta_3$.  However, it is not difficult to see that the Legendre
transformation implementing this would involve a free energy $\Phi_3$ which is constant and can thus without loss of generality be taken to be identically zero.
 A viable form of thermodynamics must retain (at least) one extensive variable (here we choose that to be $N$, although we could have used $X$) which registers the size of the system.

Observing that in thermodynamics the uncontrolled variables remain  constant when the corresponding  controlled variables are held constant,
this is now the point for the first message  from statistical mechanics:
\[\mbox{\mydbox{
\vspace{-0.3cm}
  \begin{fsm}
  \spacecol
Unlike in thermodynamics, extensive variables in statistical mechanics
that  are uncontrolled quantities {\em fluctuate} even when the corresponding controlled  variables are kept constant.
(In $\Xi_1$ the energy corresponding to the internal energy $U$ fluctuates, and in $\Xi_2$ the variable corresponding to $X$,
be it the volume or the magnetic moment, fluctuates. This is  born out by experiment \citep{MacD}.)
The variances of the fluctuations are given in terms of response functions and are $\mcO(N)$. This means that standard deviations of fluctuations
are $\mcO(\sqrt{N})$ and become negligibly small compared to $\mcO(N)$ variables only in the thermodynamic limit $N\to \infty$.
\end{fsm}
\vspace{-0.2cm}
}}\]
\customlabel{fsm-01}{\arabic{fsm}}
  For fixed $N$ let $(U,X,N)\ato(U',X',N)$ denote an adiabatic process.    It can be shown \citep{Lavis-nt}, from
\citeauthor{cara1}'s first version of the second law\break \citep{cara1},\footnote{Or from some extensions to the approach of \citet{l&y2}.}
 that thermodynamic systems are of four types according to whether  the adiabatic process gives $U\le U'$ or   $U\ge U'$ and $S\le S'$  or $S\ge S'$
corresponding,  respectively, to the possibilities of  the temperature and heat capacity being positive or
  negative.\footnote{The Clausius version of the second law needs modification to include negative temperatures \citep{ramsey1,lands3}
and both the Kelvin-Planck and Clausius versions need modification to accommodate negative heat capacities (\citeauthor{Lavis-nt}, op. cit.).}
 Standard accounts of thermodynamics concentrate solely on the case where both internal energy and entropy increase, which is
  the situation where both temperature and heat capacity are positive. We shall restrict out attention to that case.\footnote{
 It is a matter of dispute \citep[see, for example,][and references therein]{Lavis-nt} whether statistical mechanical models support the
existence of negative temperatures, and the experimental evidence is also questioned.  The same is the case for negative heat capacities.}
\subsection{Extensivity and the Thermodynamic   Limit}\label{eatl}
Departing from the formulation \textsf{TD1} of the structure of thermodynamics we  ascend the left-hand column in Fig.\ \ref{Fig1},
where it is now useful to consider the embedding of particular models.
In this context they are of two types, ones which posit a phenomenological equation of state and ones derived from some microstructure according to the procedures
of statistical mechanics.  Most examples in the first category,  the perfect gas equation, the Weiss-field equation for ferromagnetism and the van der Waals equation\footnote{And a number
of lesser known relationships like the   Redlich--Kwong   and  Dieterici equations of state.}
 introduce the models in terms of an equation relating the mechanical variable pair $(X,\xi)$ and $N$ to the temperature.  However, it is more consonant with our approach to begin with a {\em defining} relationship for the entropy
surface $S(U,X,N)$, from which $T$, $\xi$ and $\mu$, or equivalently the couplings $\zeta_1$, $\zeta_2$ and $\zeta_3$  can be calculated using (\ref{althsp1}). Thus:
\begin{enumerate}[$\bullet$]
\item For the \emb{perfect gas}
\begin{equation}
S(U,V,N)\cequals Nc+\tfrac{3}{2}N\ln\Big(\ssfrac{U}{N}\Big)+N\ln\Big(\ssfrac{V}{N}\Big),
\label{prefl2}
\end{equation}
for some constant $c$,\footnote{Which can be evaluated using statistical mechanics but whose value is unimportant here.}
giving\footnote{Remember that $\zeta_2\cequals -P/T$.}
\begin{equation}
  T=\ssfrac{2U\varepsilon}{3N},\pairsep P=\ssfrac{NT}{V}.
\label{prefl1}
\end{equation}
\item For the \emb{van der Waals fluid}
\begin{equation}
S(U,V,N)\cequals Nc+\tfrac{3}{2}N\ln\Big(\ssfrac{U}{N}+\ssfrac{ N}{V}\Big)+N\ln\Big(\ssfrac{V}{N}-1\Big),
\label{tperdl11}
\end{equation}
giving
\begin{equation}
T= \ssfrac{2}{3}\varepsilon\Big(\ssfrac{U}{N}+\ssfrac{N}{V}\Big),\tripsep
P= \ssfrac{NT }{V- N}-\ssfrac{\varepsilon N^2}{V^2}.
 \label{tperdl13}
\end{equation}
\end{enumerate}
The entropy (\ref{prefl2})  is a concave function of $(U,V)$,  but for (\ref{tperdl11}) it is necessary to take the concave envelope.  This is, of course,
equivalent in the case of the \citet{1-v2} fluid and other phenomenological equations of state to the application of Maxwell's equal areas rule
\citep{3-m1}, which avoid the inclusion of unstable states and leads to a first-order gas-liquid phase transition (see Sect.\ \ref{scalt}).

It will be noted that, for both the perfect gas and van der Waals fluid with densities $u\cequals U/N$ and $v\cequals V/N$,
there exists an entropy density $s$ satisfying
 \begin{equation}
 s\cequals\frac{S(u N,vN,N)}{N}= s(u,v),
\label{entdenc}
\end{equation}
for all $N>0$, which avoids any reference to the size $N$ of the system.
But, of course,  these are rather special models and the question arises as to whether entropy, in general,
when $X$ replaces $V$ and $x\cequals X/N$ replaces $v$, satisfies
 \begin{equation}
 s\cequals\frac{S(u N,{x} N,N)}{N}= s(u,{x}),\pairsep\forall\vsmallsep N>0.
\label{entden}
\end{equation}
For this question the following result is important:
\begin{theorem}\label{exttl}
\spacecol\,\,
(\ref{entden}) is true iff
\begin{equation}
S(\lambda U,\lambda X,\lambda N)=\lambda S(U,X,N),\pairsep\forall\vsmallsep \lambda>0,
\label{extra1.1}
\end{equation}
is true.
\end{theorem}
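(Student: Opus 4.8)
The plan is to prove the two implications by direct substitution, using the observation that the scaling parameter $\lambda$ in (\ref{extra1.1}) and the reciprocal of the system size $N$ in (\ref{entden}) play interchangeable roles.

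First I would show that (\ref{extra1.1}) implies (\ref{entden}). Assuming $S$ is homogeneous of degree one, specialize (\ref{extra1.1}) by replacing $(U,X,N)$ with $(uN,xN,N)$ and setting $\lambda\cequals 1/N$; this yields $S(uN,xN,N)/N = S(u,x,1)$ for every $N>0$. The right-hand side does not involve $N$, so the quotient on the left is independent of $N$, which is precisely (\ref{entden}), with the entropy density identified as $s(u,x)\cequals S(u,x,1)$.

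Conversely, to get (\ref{extra1.1}) from (\ref{entden}), suppose the quotient $S(uN,xN,N)/N$ is independent of $N$, and denote its common value by $s(u,x)$. Given any $(U,X,N)\in\Xi_0$ with $N>0$, set $u\cequals U/N$ and $x\cequals X/N$. Then (\ref{entden}) applied at size $N$ gives $S(U,X,N)=N\,s(u,x)$, while (\ref{entden}) applied at size $\lambda N$, with the same densities $u$ and $x$, gives $S(\lambda U,\lambda X,\lambda N)=\lambda N\,s(u,x)$ for any $\lambda>0$. Eliminating $s(u,x)$ between these two relations gives $S(\lambda U,\lambda X,\lambda N)=\lambda S(U,X,N)$, which is (\ref{extra1.1}).

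I do not expect a genuine obstacle; the only point needing care is the domain. For the quantified statements to make sense --- both the ``for all $N>0$'' in (\ref{entden}) and the substitutions above --- the space $\Xi_0$ must be a cone, i.e.\ invariant under multiplication by positive scalars, so that in particular the reference point $(u,x,1)$ lies in $\Xi_0$ whenever $(uN,xN,N)$ does. This is tacit in the extensivity framework and I would simply flag it; with it in place the two short substitutions above constitute the entire argument.
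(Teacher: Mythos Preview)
Your proof is correct and follows essentially the same route as the paper: both directions proceed by the substitution $\lambda\leftrightarrow 1/N$ and the identification $s(u,x)=S(u,x,1)$, with the reverse direction obtained by applying the density relation at two different sizes and eliminating $s(u,x)$. Your remark about the domain $\Xi_0$ needing to be a positive cone is a fair technical caveat that the paper leaves implicit.
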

\begin{proof}
That (\ref{entden}) follows from (\ref{extra1.1}) is easily seen by taking $\lambda\cequals 1/N$ and defining $s(u,x)\cequals S(u,x,1)$.

\vspace{0.2cm}

\noindent In the reverse direction, this last relationship $s(u,x)=S(u,x,1)$ in fact follows from (\ref{entden}) by setting $N=1$. Then
from (\ref{entden})   $S(U,X,N)= NS(U/N,X/N,1)$  and again setting $\lambda\cequals 1/N$ recovers (\ref{extra1.1}).
\qed
\end{proof}
Equation (\ref{extra1.1}) is the condition  that $S$ is an extensive function and it is easily shown from (\ref{althsp30}) and (\ref{althsp31}) that the free energies
$\Phi_1$ and $\Phi_2$ are extensive functions if and only if the entropy is an extensive function. But, as pointed out by  \citet[][Sect.\  2]{menon}
and show in Sect.\ \ref{extls},
\[\mbox{\mydbox{
\vspace{-0.3cm}
 \begin{fsm}
  \spacecol
The extensivity of entropy and of free energies assumed in thermodynamics is not
 exactly true for all systems in statistical mechanics, but is approximately true for large systems.
\end{fsm}
\vspace{0.2cm}
}}\]
\customlabel{fsm-02}{\arabic{fsm}}
For entropy the thermodynamic limit in statistical mechanics, assuming  it exists,\footnote{See footnote \ref{exthlt}.} is given by
\begin{equation}
\lim_{N\to\infty}\frac{S(u N,{x} N,N)}{N}= s(u,{x}).
\label{entdenl}
\end{equation}
 But for thermodynamics the corresponding formula is (\ref{entden}), without the need for the limiting process. Exact extensivity
in thermodynamics can be regarded as unnecessary or trivially true.

\vspace{0.2cm}

\noindent  Differentiating (\ref{extra1.1})  with respect to $\lambda$, and substituting from (\ref{althsp1}) gives
\begin{equation}
S=\zeta_1 U -\zeta_2 X -\zeta_3 N,
\label{efe2}
\end{equation}
when $\lambda$ is put equal to 1. From (\ref{althsp1}) and (\ref{efe2}),
\begin{equation}
u \rmd \zeta_1 -{x}\rmd \zeta_2 -\rmd \zeta_3 = 0\, ,
\label{efe3}
\end{equation}
which is a version of  the \emb{Gibbs-Duhem relationship}. In terms of densities (\ref{efe2}) becomes
\begin{equation}
s=\zeta_1 u -\zeta_2 {x} -\zeta_3,
\label{efe2x}
\end{equation}
and  substituting into (\ref{althsp1})--(\ref{althsp41})
{\jot=0.25cm
\begin{eqnarray}
 \rmd s&=&\zeta_1\rmd u-\zeta_2\rmd {x}-(s-\zeta_1 u+\zeta_2 {x}+\zeta_3)\rmd N/N\nonumber\\*
 &=&\zeta_1\rmd u-\zeta_2\rmd {x}.
 \label{dens1}
 \end{eqnarray}
 Then, for free-energy densities $\phi_1\cequals \Phi_1/N$ and $\phi_2\cequals \Phi_2/N$,
  \begin{eqnarray}
  \phi_1&=&\zeta_1 u-s=\zeta_2 {x} +\zeta_3,\smallsep
 \rmd\phi_1=u\rmd\zeta_1+\zeta_2\rmd {x},
 \label{dens2}\\
\phi_2&=&\zeta_1 u-\zeta_2{x}-s=\zeta_3,\smallsep
\rmd\phi_2=u\rmd\zeta_1-{x}\rmd\zeta_2.
 \label{dens3}
 \end{eqnarray}
These are the fundamental  size-free  thermodynamic  relationships in terms of density variables and density functions. They are exact in
 thermodynamics but  approximately true only for large systems in statistical mechanics.
The question of large systems and the thermodynamic limit in statistical mechanics is treated in Sects.\ \ref{extls},  \ref{tlnism} and \ref{ourp}.
\subsection{Thermodynamics with PTCP}\label{scalt}
Having arrived at a formulation of thermodynamics in terms of densities and couplings the modern theory of  PTCP is
 largely concerned with an investigation and classification of the singular properties of  systems \citep[see e.g.][]{2-b2}.
 Specifically the singularities which could occur  on the  hypersurface of the entropy density, or the appropriate free-energy density, which defines the state
 of the system. However we should be forewarned that the account of statistical mechanics in Sect.\  \ref{statmech} concludes that:
\[\mbox{\mydbox{
\vspace{-0.3cm}
 \begin{fsm}
  \spacecol
  The association of PTCP with  singularities in the entropy and free-energy densities which is made in thermodynamics can be made   in statistical mechanics only for infinite systems.
\end{fsm}
\vspace{0.2cm}
}}\]
\customlabel{fsm-03}{\arabic{fsm}}
\noindent The association of PTCP with singularities in both \textsf{TD3} and \textsf{SM3}
leads to a tendency for them to be mistakenly conflated. (We shall discuss this in more detail in relation to limit reduction in Sect.\ \ref{limred}.)

\vspace{0.2cm}

\noindent    We now consider  three thermodynamic spaces,   $\tXi_0$,  $\tXi_1$ and  $\tXi_2$,  which correspond respectively to the spaces  $\Xi_0$,  $\Xi_1$ and $\Xi_2$
defined in Sect.\ \ref{tstruth}     except that now densities
   replace extensive variables.  In reverse order, since this is more heuristically  transparent:
\begin{figure}{t}
  \includegraphics[width=4cm]{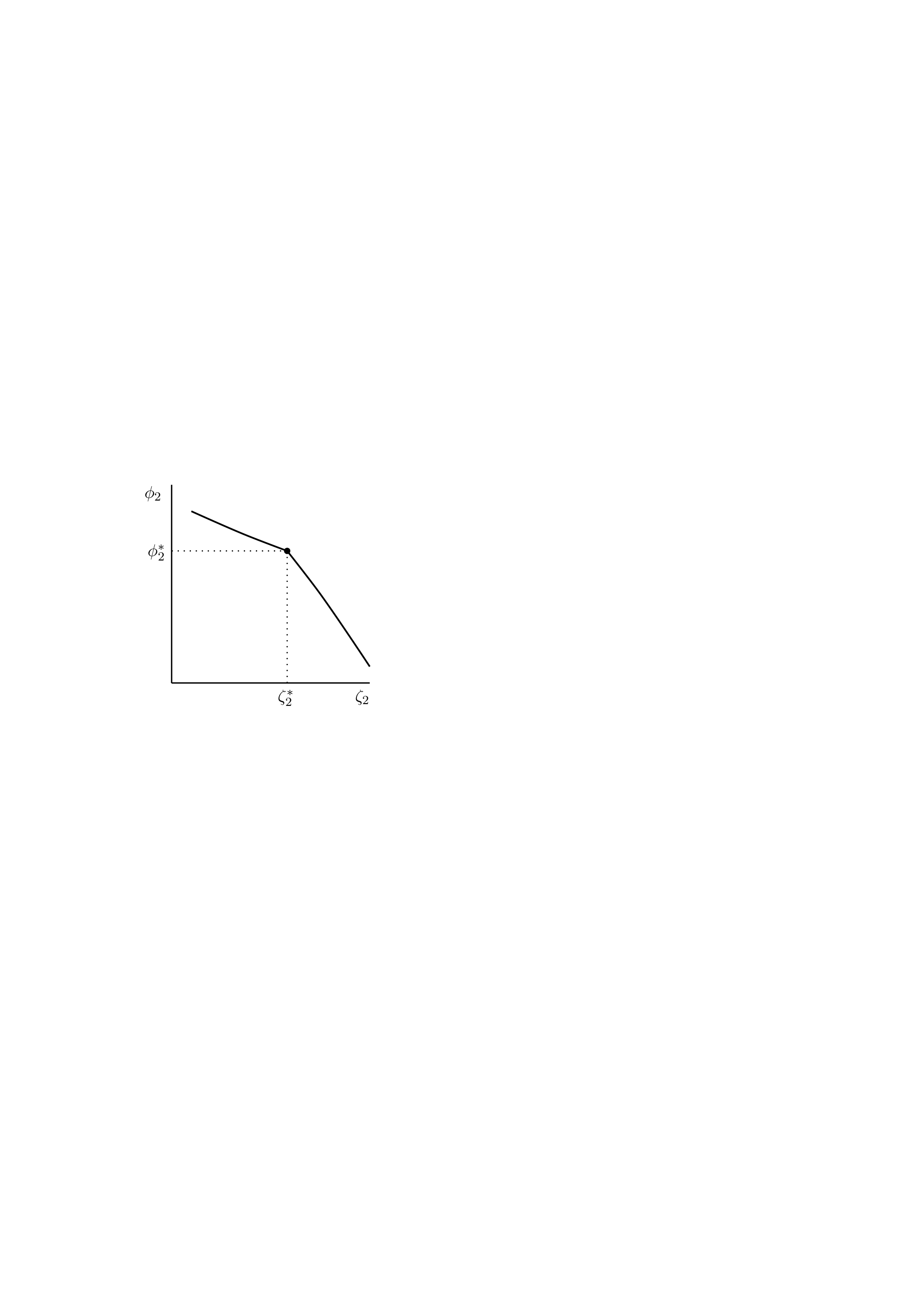}
\caption{A first-order
transition showing as a discontinuity of slope in an isothermal section ($\zeta_1=\zeta_1^\star$)
of  $\phi_2=\zeta_3$  plotted against $\zeta_2$.}\label{Fig2}
\includegraphics[width=4cm]{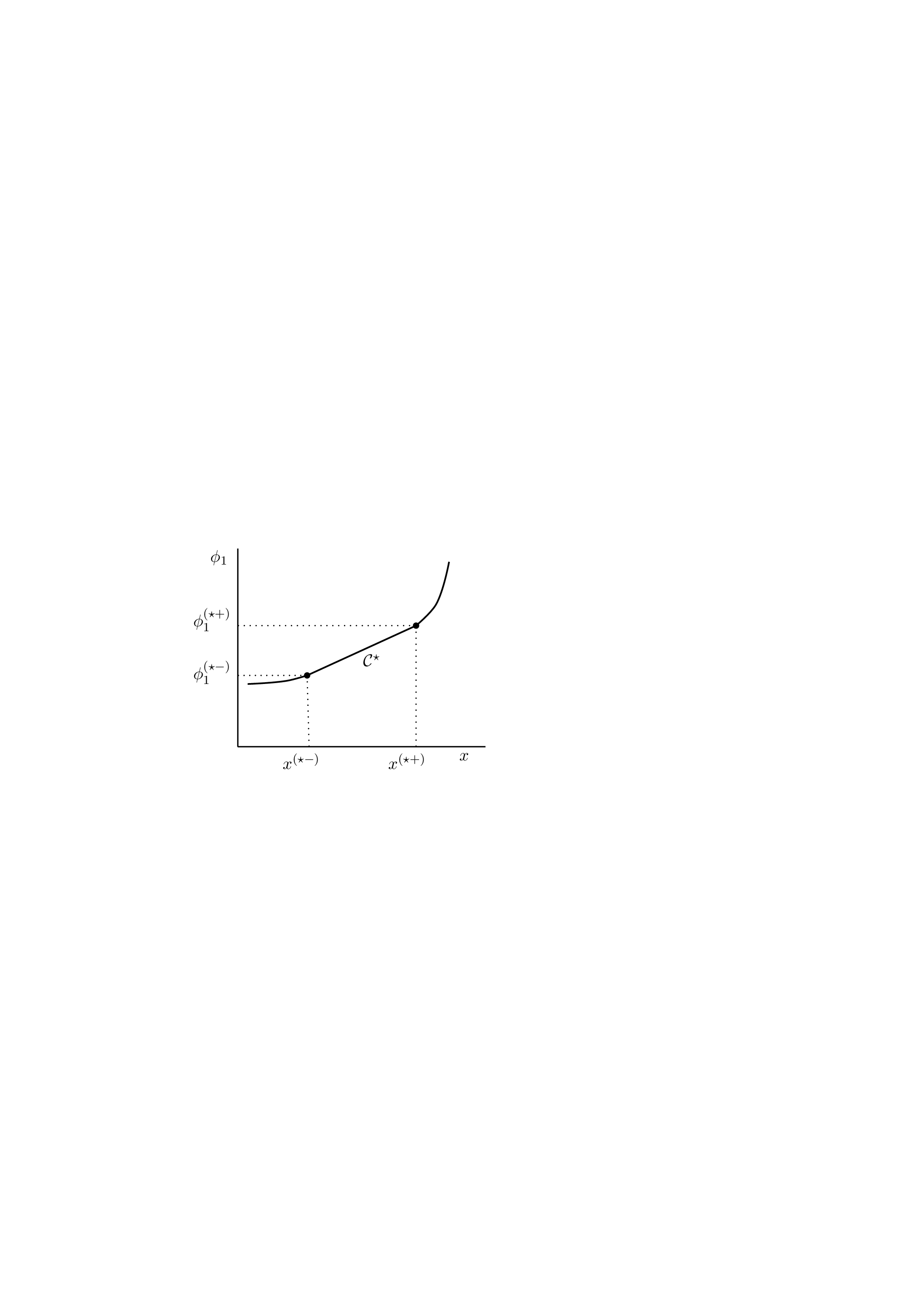}
\caption{A first-order transition showing as the linear section $\mcC^\star$  in an isothermal section
of the $\phi_1$ surface.}\label{Fig3}
  \includegraphics[width=4cm]{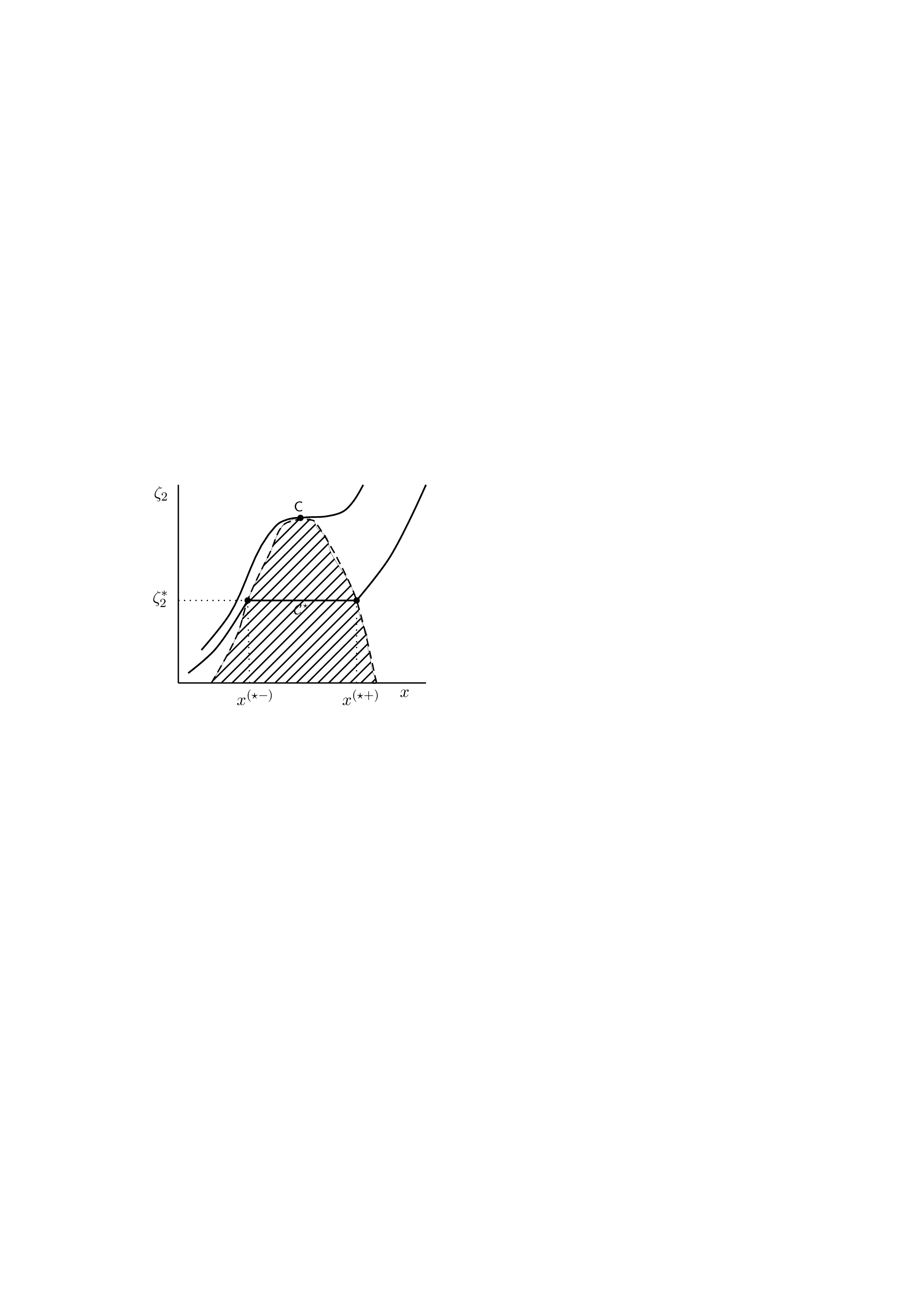}
\caption{A first-order transition showing as a horizontal part $\mcC^\star$ of an isotherm of
$\zeta_2$ plotted against ${x}$ together with the isotherm through the critical point $\sfC$.  As $\zeta_1$ varies the ends
of $\mcC^\star$ trace the boundary of the coexistence region (shaded). }\label{Fig4}
\includegraphics[width=6cm]{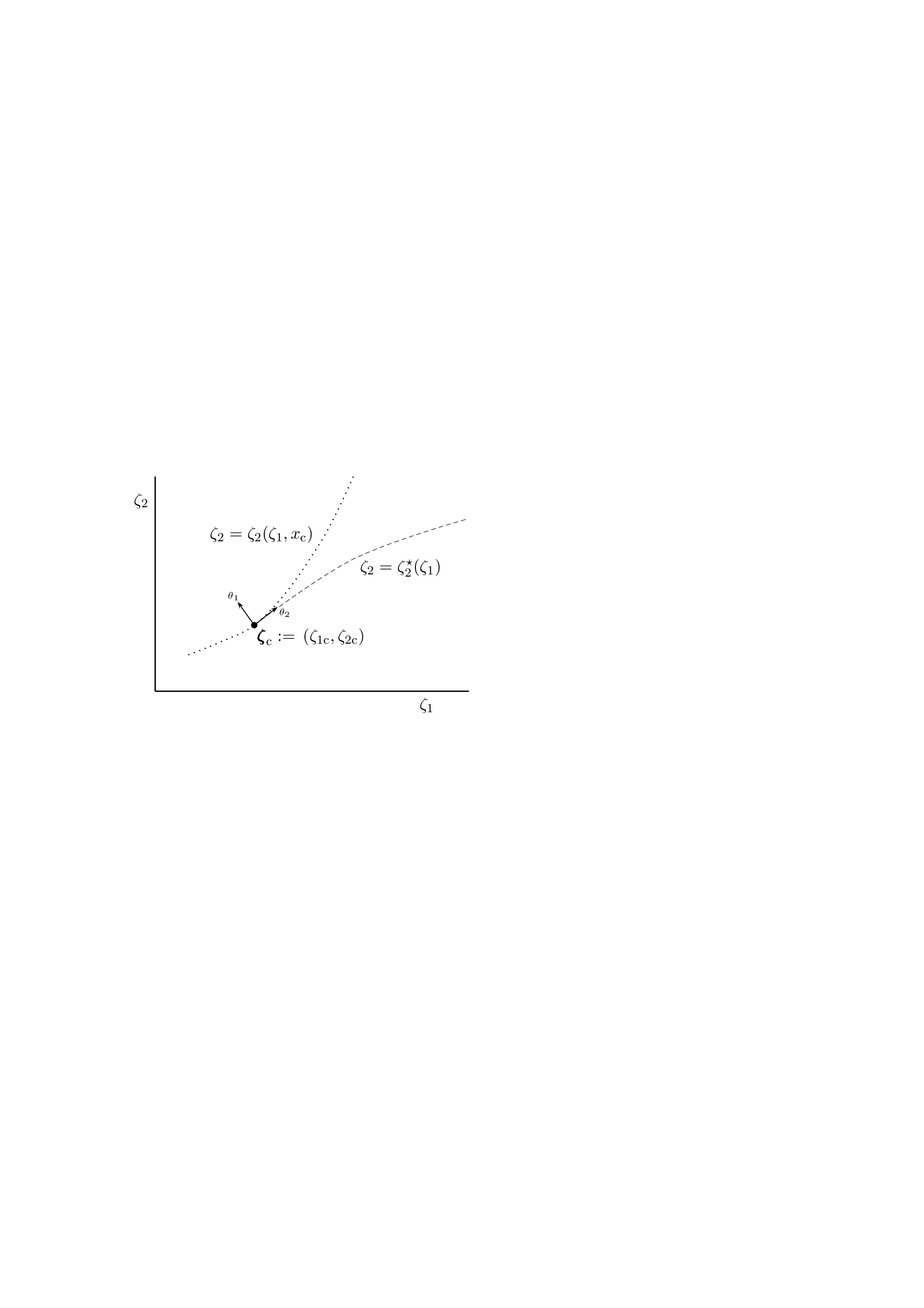}
\caption{A critical point $(\zeta_{1\rmc} , \zeta_{2\rmc} )$ in
$\tXi_2$. The first-order transition
(coexistence curve) $\zeta_2 = \zeta_2^\star (\zeta_1 )$
is represented by a broken line and the critical isochore, along which the density $x$ takes its critical value $x=x_\rmc$
 by a dotted line.  The directions of the axes of the two relevant scaling fields at the critical point, as described in Sect.\ \ref{scathr},
are shown. }\label{Fig5}
\end{figure}
   \begin{enumerate}[(i)]
  \item \underline{In the space $\tXi_2$  of the vector   $\bzeta\cequals(\zeta_1,\zeta_2)$}  the free-energy density $ \phi_2(\zeta_1,\zeta_2)$
is a surface with normal in the direction $(1,-u,{x})$ and
phases are separated by lines of transitions. The simplest example is a  line $\mcL^\star$ across which there is
 a discontinuity of the gradient $\nabla\phi_2=(u,-{x})$; an isothermal section ($\zeta_1$ constant) of this surface is shown in Fig.\ \ref{Fig2}.
The point $\bzeta^\star\cequals(\zeta_1^\star,\zeta_2^\star)\in\mcL^\star$, with $\zeta_3^\star=\phi_2(\bzeta^\star)$.
$\mcL^\star$  can be regarded as representing the \emb{coexistence of two phases} with different
densities.  As $\bzeta$ is varied across $\mcL^\star$  through $\bzeta^\star$  there is a
 \emb{first-order phase transition} where the densities change discontinuously.
 In the case of both fluid and magnetic systems a first-order transition will involve a discontinuity of the internal energy density $u$. In a fluid system there will be a discontinuity
 of (physical) density as the system changes between a liquid and a gas.  In a magnetic system there will be a discontinuity in the magnetization (or equivalently the magnetization density)
 as shown for the Ising model in Fig.\ \ref{Fig9}.
 \item \underline{In the space $\tXi_1$ of the vector $(\zeta_1,{x})$}  the free-energy density $\phi_1(\zeta_1,{x})$ is a surface convex
with respect to ${x}$ with normal   in the direction $(1,-u,-\zeta_2)$, as shown by an  isothermal ($\zeta_1=\zeta_1^\star$) section in Fig.\ \ref{Fig3}.
  A first-order transition corresponds to  the part of the isotherm, labelled $\mcC^\star$,   which is  linear with respect to  ${x}$.
 At the ends of   $(\zeta_1^\star,x^{(\star +})$  and  $(\zeta_1^\star,x^{(\star -)})$ of  $\mcC^\star$  all three couplings $\zeta_1$,  $\zeta_2$ and $\zeta_3$ have the same values
as is otherwise shown in Fig.\ \ref{Fig2}.  Typically, as $\zeta_1^\star$ varies along $\mcL^\star$ the ends of $\mcC^\star$ converge to a \emb{critical point} where
the system exhibits a \emb{second-order transition}. There the densities are continuous
but one or more of the response functions (that is to say the curvature components of the free-energy surface) is singular.\footnote{Such critical points can also occur as lines.
A line of first-order transitions can terminate on a line of second-order transitions at a point called a \emb{critical end-point},
or be continued as a line of second-order transitions  at a point   called a \emb{tricritical point}.} A projection
of the linear coexistence region in Fig.\ \ref{Fig3} is shown in Fig.\ \ref{Fig4}, and the situation where the  corresponding transition line $\mcL^\star$
terminates is shown in Fig.\ \ref{Fig5}.
\item \underline{The space $\tXi_0$ of the vector $(u,{x})$,} in which the entropy density $s(u,x)$ is a concave surface
is similar to that for $\phi_1(\zeta_1,{x})$,\footnote{That convexity is replaced by concavity is clear from the negative sign of $s$ in (\ref{dens2}).}
 except that now the linear generator $\mcC^\star$ of the coexistence region
has endpoints $(u^{(\star +)},x^{(\star +})$  and  $(u^{(\star -)},x^{(\star -)})$.
As $\bzeta^\star$ varies along $\mcL^\star$,  $\mcC^\star$ traces out the boundary of a
ruled\footnote{A  ruled surface (like, for example, the surface of a cylinder) is one densely covered by a set of straight lines.}
region  on the entropy surface with $\mcC^\star$ converging in one direction  to the critical point described in (ii).
 \end{enumerate}
\emb{Critical exponents} at the critical point are associated with
the curvature of the coexistence curve in $\tXi_1$ and the coexistence line in $\tXi_2$,
and the asymptotic singular behaviour
of the (per particle) heat capacities $c_{x}$ and $c_\xi$ at constant density and field respectively and  a response function
   $\varphi_{\vtmT}$, which in a fluid corresponds to the compressibility
  and in a magnet to the susceptibility. It will also be useful to include the coefficient of thermal expansion $\alpha_\xi$.
   These are defined together with their critical exponents in Appen. \ref{refnce}.
The heat capacities $c_{x}$ and $c_\xi$
are normally positive and from (\ref{cp43}) it follows that, if $\varphi_{\vtmT}>0$, then $c_\xi$ dominates
both $c_{x}$ and $\alpha^2_\xi/\varphi_{\vtmT}$ as ${T}\to {T}_{\rmc}$.
For the critical exponents $\cursigma$ and $\cursigma'$ characterizing the singularity of $c_x$ on approach to the critical point from
 above and below $T_\rmc$, and the analogously defined critical exponents $\curalpha$ and $\curalpha'$ characterizing the
 singularity of $c_\xi$, and $\curgamma$ and $\curgamma'$ characterizing the singularity of $\varphi_{\vtmT}$, as
  well $\curbeta$ characterizing the curvature of the coexistence curve, this means that
\begin{equation}
\cursigma \ge \curalpha\, ,\tripsep \cursigma' \ge
 \curalpha',\pairsep
\cursigma'+2\curbeta+\curgamma'\ge 2\, .\label{cp601}
\end{equation}
The condition $\varphi_{\vtmT}>0$ is true
for a magnetic system and
in this case the third  inequality in  (\ref{cp601})
 was first established by
\citet{2-r3}. The stronger condition
\begin{equation}
\curalpha'+2\curbeta+\curgamma'\ge 2\label{cp602}
\end{equation}
was obtained by \citet{2-g3} for both magnetic
and fluid systems
using the convexity properties of the free energy.
In fact it is a consequence of scaling theory (Sect.\ \ref{scathr}) that,
for systems with a special symmetry which is present in magnetic systems where, as for the Ising model in Appen.\ \ref{tim},
 the coexistence curve coincides with the zero field axis,
$\cursigma'=\curalpha'$ and inequalities
(\ref{cp601}) and (\ref{cp602}) become identical. Otherwise $\cursigma'=\curgamma'$.
\citet{2-g3}\label{2-g3-4-002} also derived a number of other inequalities.
In particular
\begin{equation}
\curgamma'\ge \curbeta(\curdelta-1),
\label{cp603}
\end{equation}
where $\curdelta$, given by (\ref{cp49}),  is the exponent characterizing the (critical) equation of state.
\subsection{Thermodynamics with Scaling Theory}\label{scathr}
In view of our aim to keep as distinct as possible the developments of thermodynamics and statistical mechanics, we choose here
to present  scaling theory as a mathematical axiomatization of the properties of PTCP in thermodynamics. Although, as we see below, it has deep
roots in, and is substantiated by, statistical mechanics, in particular renormalization group theory,\footnote{The assumption (\ref{eu3})   that, near to a critical region,  the
 free-energy density can be divided  into smooth and singular parts  is  justified in terms
of the form of the Hamiltonian  \citep[][p.\ 3519]{2-h3}.}   where, in almost all cases,\footnote{An exception
being the one-dimensional Ising model  \citep[see e.g.][Sect.\ 15.5.1]{lavisnew}.} the realization of this
picture of scaling  involves approximations and yields scaling forms of only local validity.

  Originating in the work of (among others)
\citet{2-w2,2-w1} and  \citet{2-k1} our approach is essentially that of  \citet{2-h3}. Given here in brief
 outline\footnote{For  a more detailed account see, for example, \citet[][Chap. 4]{lavisnew}.} it  is
sufficient for an analysis of power-law singularities in the critical region.\footnote{For statistical mechanical systems like
 the Ising model with $d = 2$ which exhibit logarithmic singularities, it has been shown by
\citet{2-n1}\label{2-n1-4-001} that a slight generalization  needs to be used.}

\vspace{0.2cm}

\noindent Suppose we have the free-energy density of a system in terms of its maximum number of independent couplings.  In the discussion above that maximum number
was two, but for the moment  we generalize to $n$ couplings so the free-energy density is $\phi_n(\bzeta)$, where $\bzeta\cequals(\zeta_1,\zeta_2,\ldots,\zeta_n)$,
which is represented as a hypersurface of dimension $n$  in the $(n+1)$-dimensional space $(\phi_n,\bzeta)$.
Now suppose that there is a critical region $\mcC$  of dimension $n-s$. Although $\phi_n(\bzeta)$
 itself is continuous and finite across and within $\mcC$ it may have discontinuous first-order derivatives, meaning that $\mcC$ is a region of phase coexistence
with a first-order transition when, as is shown in Fig.\ \ref{Fig2}, the phase point crosses through $\mcC$, or it may have singular second-order derivatives  in  $\mcC$,
as is the case in the situation described above where a line of first-order transition terminates at a critical point.\footnote{Or there may be discontinuities
or singularities in higher-order derivatives; but we shall for simplicity concentrate solely on cases involving first- and second-order derivatives.}

With respect to some origin  $\bzeta^\circ\in \mcC$    a system of  orthogonal curvilinear coordinates $\theta_1,\theta_2,\ldots,\theta_n$
called \emb{scaling fields} is constructed. These are  smooth functions of the couplings which parameterize $\mcC$ so that
$\theta_1 = \cdots = \theta_s = 0$ within $\mcC$.
The scaling fields in this subset are called \emb{relevant}  with those in the  remaining subset  $\theta_{s+1},\theta_{s+2},\ldots\theta_n$, called \emb{irrelevant},
acting as a local set of coordinates within $\mcC$.\footnote{ `Relevance' here refers to their role in an understanding of the nature of the criticality in $\mcC$.}
The free-energy density   $\phi_n(\bzeta)$  is separated into two parts
\begin{equation}
\phi_n(\bzeta) = \phi_{\smth}(\bzeta) + \phi_{\sing}(\triangle\bzeta)\, ,
\label{eu3}
\end{equation}
where $\phi_{\smth}(\bzeta)$  is a regular function and, with $\triangle\bzeta\cequals \bzeta-\bzeta^\circ$,   $\phi_{\sing}(\triangle\bzeta)$, for which     $\phi_{\sing}(\bzero)=0$,
contains  all the non-smooth parts of $\phi_n(\bzeta)$ in $\mcC$. It is now assumed  that
  $\phi_{\sing}(\triangle\bzeta)$ can be re-coordinated in terms of the scaling fields so
  that it is a generalized homogeneous function satisfying
the \emb{Kadanoff scaling hypothesis}\footnote{The physical dimension $d$ of the system  is not something which plays a significant role in most of thermodynamics. It is
included here to bring compatibility with the discussion of statistical mechanics. It could be removed by redefining $\lambda$.}
\begin{equation}
{\phi}_{\sing}({\lambda}^{y_1}{\theta}_1,\ldots,{\lambda}^{y_n}{\theta}_n) = {\lambda}^d {\phi}_{\sing}
({\theta}_1,\ldots ,{\theta}_n)\, ,
\label{egs4}
\end{equation}
for all real  $\lambda>0$, where  $d$ is the physical dimension of the system,
 and $y_j$, $j=1,2,\ldots,n$ are \emb{scaling exponents} satisfying
\begin{equation}
y_j > 0,\vsmallsep  j = 1,\ldots,s,\pairsep
y_j < 0,\vsmallsep   j = s + 1,\ldots,n.
\label{egs5}
\end{equation}
The exponents in the first subset are, like the corresponding scaling fields,  called \emb{relevant}\,
and those in the  latter subset are called \emb{irrelevant}.\footnote{
 We have for the sake of simplicity excluded  the possibility of a zero exponent;  such an exponent is called
\emb{marginal}.   Marginal  exponents are associated in renormalization group theory with an `underlying' parameter of the system,
often resulting in lines of fixed points as we see in our treatment of the one-dimensional Ising model in Sect.\ \ref{wtfn}.
 It will also be assumed that no exponent is complex. In practice this is not always the case
\citep[see e.g.][Sect.\ 15.5.2]{lavisnew}, but situations arising from complex exponents are
not difficult to interpret in particular examples.}
Of the assumptions made here, that scaling fields can be derived
 is not particularly demanding; at the very least it is usually straightforward to obtain their linear parts near to the origin.
And the division of the free-energy density (\ref{eu3}) into smooth and singular parts  has very little content until we explore in more detail the consequences of the scaling hypothesis (\ref{egs4})
which we now do for the case of a critical point terminating a coexistence curve.

\vspace{0.2cm}

\noindent There are many general accounts of scaling theory, treating a variety of critical phenomena. Here we restrict attention to
 the case of a critical point terminating a line of first-order transitions, as shown in Fig.\  \ref{Fig5}.  So we have two critical regions.
 The first is the critical point with two relevant scaling fields and scaling exponents  with axes chosen perpendicular to and along the coexistence curve.
For this we shall show that  the critical exponents defined in Appen.\ \ref{refnce}, can be expressed in terms of the two scaling exponents.
 The second is the coexistence curve which has one relevant and one irrelevant scaling field constructed with respect to some chosen origin (not shown in Fig.\ \ref{Fig5}) on the coexistence curve.

  For the sake of further simplifying our presentation we restrict attention to a simple ferromagnetic system with $\xi\cequals \pH$, the magnetic field,  $X\cequals \pM$,
the magnetization and $x\cequals m=\pM/N$, the magnetization density. The coupling $\zeta_1$ is the thermal coupling so we relabel it as $\zeta_\vtmT=\varepsilon/T$ and $\zeta_2$ is the field coupling which we relabel as $\zeta_{\vtpH}=\pH/T$.
 This model, of which an example in statistical mechanics is the Ising model described in Appen.\ \ref{tim},
  has the advantage of having the special symmetry that the coexistence curve lies along the zero-field axis in an interval $T\in[0,T_\rmc]$
with $\pH_\rmc=m_\rmc=0$. This axis with $T>T_\rmc$ is the critical isochore.
Thus (referring to Fig.\ \ref{Fig5})  the  coexistence curve lies along the $\zeta_\vtpH=0$ axis in an interval   $[\zeta_{\vtmT\rmc},\infty)$.
This same phase diagram for the Ising model, now plotted with respect to the temperature $T$ and the magnetic field $\pH$, is shown in Fig.\ \ref{Fig8}.

\vspace{0.2cm}

\noindent We consider separately the critical point and the coexistence curve,
beginning with the critical point where  we can take the scaling fields to be
\begin{equation}
 \theta_\vtmT\cequals\zeta_\vtmT-\zeta_{\vtmT\rmc}=\varepsilon\left(\ssfrac{1}{T}-\ssfrac{1}{T_{\rmc}}\right)\ge 0,
 \pairsep
\theta_\vtpH\cequals\zeta_\vtpH=\ssfrac{\pH}{T}.
\label{scflh}
\end{equation}
The scaling hypothesis (\ref{egs4}) becomes
\begin{equation}
{\phi}_{\sing}(\lambda^{y_\vtmT}\theta_\vtmT,\lambda^{y_\vtpH}\theta_\vtpH) = {\lambda}^d {\phi}_{\sing}(\theta_\vtmT,\theta_\vtpH)\, ,
\label{egs4m}
\end{equation}
and, from  (\ref{eu3})  and (\ref{cpden}),
\begin{eqnarray}
m=- \frac{\partial  \phi_{\smth}}{\partial \zeta_\vtpH} & -& \frac{\partial \phi_{\sing}}{\partial\theta_\vtpH},
\label{egsx1}\\
\frac{\partial{\phi}_{\sing}}{\partial\theta_\vtpH}({\lambda}^{y_\vtmT}{\theta}_\tmT,{\lambda}^{y_\vtpH}{\theta}_\vtpH) &=& {\lambda}^{d-y_\vtpH}
\frac{\partial{\phi}_{\sing}}{\partial\theta_\vtpH}({\theta}_\tmT,{\theta}_\vtpH).
\label{egs4x}
\end{eqnarray}
Since $m_\rmc=0$,  ${\partial  \phi_{\smth}}/{\partial \zeta_\vtpH}=0$ at the critical point.  For an approach to the critical point   along the coexistence curve $\theta_\vtpH=0$ and setting
$\lambda\cequals \theta_\vtmT^{-1/y_\vtmT}$ in (\ref{egs4x}) and substituting into (\ref{egsx1}) gives
\begin{equation}
m \simeq -\theta_{\vtmT}^{(d-y_\vtpH)/y_\vtmT}\frac{\partial{\phi}_{\sing}}{\partial\theta_\vtpH}(1,0) \sim (T_\rmc-T)^{(d-y_\vtpH)/y_\vtmT},
\label{betai1}
\end{equation}
which, when comparing with (\ref{cp48}) establish the identification
\begin{equation}
\curbeta=(d-y_\vtpH)/y_\vtmT.
\label{betai2}
\end{equation}
At this point we could carry out a similar procedure for the response functions in (\ref{cdrsfn}) and (\ref{cp43}) to determine the critical exponents defined  in (\ref{cp60})--(\ref{cp55}).
However,  the analysis can be shortened by a closer examination of the way that the expression (\ref{betai2}) for $\curbeta$ was obtained. From this we see that the scaling exponent  $y_\vtpH$
in the numerator indicates  that differentiation was once with respect to $\zeta_\vtpH$. And that the approach was in the direction of varying $\zeta_\vtmT$
is indicated by the scaling exponent $y_\vtmT$ in the denominator. So with the same reasoning it follows from (\ref{cp49}) that
\begin{equation}
\curdelta= y_\vtpH/(d-y_\vtpH),
\label{betai3}
\end{equation}
and bearing in mind that the analysis yields singularities  for response functions so  $\phi_{\smth}$ can play no role, from (\ref{cp51}),
\begin{equation}
\curgamma=\curgamma^\prime=(2y_\vtpH-d)/y_\vtmT.
\label{betai4}
\end{equation}
When we come to consider $c_\xi\cequals c_\vtpH$, given by (\ref{cdrsfn}), the situation becomes a little more complicated, since there are three terms and we need to know which dominates
as the critical point is approached. This will depend on the relative magnitudes of $y_\vtmT$ and $y_\vtpH$ and it can be shown (\citeauthor{lavisnew}, ibid, Sect. 4.5.1) that, in general for a critical point terminating a line of first-transitions, the exponent associated with approaches tangential to the coexistence curve is smaller (less relevant) than that associated with an approach at a non-zero angle to this curve. These are called  respective  \emb{weak and strong approaches} and in the present context we have $y_\vtpH>y_\vtmT$, these being respectively the weak and strong exponents.
Returning to the formula for $ c_\vtpH$ in (\ref{cdrsfn})  we see that the third term on the right-hand side would be the one that dominates meaning that, from (\ref{cp55}),
$\cursigma=\cursigma^\prime= \curgamma$. However, because of the symmetry of the magnetic model $\zeta_{2\rmc}\cequals \zeta_{\vtpH\rmc}=0$ and the only remaining term is the first,
meaning that
\begin{equation}
\cursigma=\cursigma^\prime=(2y_\vtmT-d)/y_\vtmT.
\label{betai5}
\end{equation}
Finally we need to determine the asymptotic form for $c_x\cequals c_m$ using  (\ref{cp43}).  Here the situation need a more detailed analysis, when it can be shown
 (\citeauthor{lavisnew}, ibid, Sect. 4.5.4) that, {\em whether or not the magnetic symmetry applies}  cancellation of coefficients  leads to an asymptotic form equivalent to
 that of a second-order derivative with respect to $\zeta_\vtmT$; that is,
\begin{equation}
\curalpha=\curalpha^\prime=(2y_\vtmT-d)/y_\vtmT.
\label{betai6}
\end{equation}
This means that it is the asymptotic form of the heat capacity with constant intensive variable (pressure or magnetic field)  which is dependent on symmetry.
In the magnetic system the exponent is the same as that of the heat capacity with constant extensive variable (the magnetization) and in a fluid, where there
is no symmetry it is equal to that of $\varphi_{\vtmT}$, which is the compressibility.
Equations (\ref{betai2})--(\ref{betai6}) are formulae
for the exponents $\curalpha$, $\curbeta$, $\curgamma$
and $\curdelta$ in terms of $y_\vtmT$ and $y_\vtpH$. They are, therefore,
not independent and two relationships exist between them. These can be
expressed in the form $\curalpha + 2\curbeta + \curgamma= 2$,
called the \emb{Essam--Fisher scaling law}, which
is a strengthening of the inequality (\ref{cp602})
and $\curgamma'  =  \curbeta (\curdelta - 1)$,
called the \emb{Widom scaling law}, which is a strengthening of the inequality (\ref{cp603}).

For the coexistence curve,  scaling fields,  chosen with respect to some arbitrary origin $\zeta_\vtmT=\zeta^\circ_\vtmT$, $\zeta_\vtpH=0$   are
\begin{equation}
 \theta^\prime_\vtmT\cequals\zeta_\vtmT-\zeta^\circ_{\vtmT},
 \pairsep
\theta^\prime_\vtpH\cequals\zeta_\vtpH=\pH/T,
\label{scflhcc}
\end{equation}
with $y^\prime_\vtmT$ and $y^\prime_\vtpH$ irrelevant and relevant exponents respectively.
In general it can be shown that relevant exponents are less than or equal to $d$  meaning in this case  that
 $0<y^\prime_\vtpH\le d$.  With primes attached to the exponents and fields (\ref{egsx1})
 and (\ref{egs4x}) continue to applied to the magnetization density. If $y^\prime_\vtpH< d$
\begin{equation}
\frac{\partial{\phi}_{\sing}}{\partial\theta_\vtpH}(0,0)=0
\label{egs4x1}
\end{equation}
and  $m$ is continuous at  the origin; there is no first-order phase transition.
If $y^\prime_\vtpH = d$ then (\ref{egs4x1}) does not necessarily hold.
There may be a contribution to (\ref{egsx1}) from the
derivative of ${\phi}_{\sing}$.
This will be the only way in which the magnetization can be discontinuous across the coexistence curve.
So a scaling exponent equal to $d$ is a necessary, but not sufficient condition for a first-order
transition.   An example of such a first-order transition with an exponent of $d$ is at zero temperature in the one-dimension Ising model (Sect.\ \ref{wtfn}(a)).
Discontinuities in higher-order derivatives can be treated in a similar way.
\subsection{Dimensionality and Phase Transitions}\label{dapt}
Although, as we have seen,  thermodynamics, and particularly its treatment of PTCP, assumes that the system is infinite, the dimension $d$
of the system entered into the discussion in Sect.\ \ref{scathr}.  And once dimensionality has entered then finiteness has also appeared.
Thus, for example, a two-dimensional system can be viewed as a three-dimensional system of   `thickness' one in the third dimension and it is only
a small step from there to increase the thickness to two.  In Sect.\ \ref{intro}  we referred to the classification of singularities in terms of universality classes.
This, as we  asserted, can be discussed
 only in the context of statistical mechanics, with $d$ one of the factors determining the universality
class of an occurrence of singular behaviour.  If the number of directions in which the system is infinite  is increased, then its critical behaviour
will change from one universality class to another.
This is an example of what in scaling and renormalization group theory is called `cross-over'.\footnote{Of course, such
a change of universality class is counter-factual \citep{HKT1}, in the sense that one cannot change the dimension or extensivity properties of a real system.}
The dimension of the system affects not just the universality class of singular behaviour but whether it occurs at all. However, that dimension is not $d$
but $\frakd\le d$, the number of directions in which the system is infinite.\footnote{The connection between the thermodynamic limit and extensivity is retained in a partially-infinite system
with $N_k$ sites in the $k$-direction and $N_1N_2\cdots N_d=N$,
when,  in the case, for example, of entropy, (\ref{extra1.1}) is replaced by
\[S(\lambda^\prime U,\lambda^\prime X,\lambda_1 N_1,\lambda_2N_2,\ldots,\lambda_\frakd N_\frakd, N^{(\frakd)})=\lambda^\prime S(U,X,N_1,N_2,\ldots,N_\frakd, N^{(\frakd)})\, ,\]
where $N^{(\frakd)}\cequals N_{\frakd+1}N_{\frakd+2}\cdots N_d$ and  $\lambda'\cequals \lambda_1\lambda_2\cdots \lambda_\frakd$.}
  And the final message sent from statistical mechanics to thermodynamics is that:
\[\mbox{\mydbox{
\vspace{-0.2cm}
  \begin{fsm}
  \spacecol
 There exists a {\em lower-critical dimension} $d_{\tL\tC}$ such that, if $\frakd\le d_{\tL\tC}<d$ singular behaviour can occur
in the fully-infinite system but not in the partially-infinite system. If $d>\frakd>d_{\tL\tC}$ then singular behaviour can occur in both, but in different universality classes.
\end{fsm}
\vspace{0.2cm}
}}\]
\customlabel{fsm-04}{\arabic{fsm}}
\section{From Gibbsian Statistical Mechanics to the Renormalization Group}\label{statmech}
The move from thermodynamics to statistical mechanics is, we shall argue, an {\em enrichment} and {\em substantiation} of the picture we have of
any system under investigation.  This operates at two levels. The first is structural, where renormalization group theory embedded
in statistical mechanics provides a fuller picture in terms of renormalization group transformations and fixed points than
scaling theory embedded in thermodynamics.  The second is in the provision of specific models which arise from assumptions about
the microstructure of the system.  We now consider the development represented by the right-hand column in
Fig.\ \ref{Fig1}, beginning with the basic structure of statistical mechanics.
\subsection{Inter-Theory Connecting Relationships}\label{brirel}
Let the microstate of the system be given by a value of the vector variable $\bsigma$ in the phase space $\Gamma$.
In the case of a fluid system $\bsigma$ will be a set of values for the positions and momenta of all the particles; for a spin system
on a lattice, like the Ising model in Appen.\ \ref{tim},  $\bsigma$ will be the set of values of all the spin variables.  The microscopic and macroscopic
structure of the system is then determined by the Hamiltonian. This is an explicit function of the independent couplings
with the independent extensive variables  imposing constraints on $\bsigma$.  Thus we have three cases:
\begin{enumerate}[(i)]
\item  When $(U,X,N)\in\Xi_0$  are the independent variables the Hamiltonian is $\hH_0(\bsigma;X,N)$, with values
constrained by
\begin{equation}
\hH_0(\bsigma;X,N)=U,
\label{microcan}
\end{equation}
and $\bsigma$ constrained, according to the nature of the particular model by $X$ and $N$.\footnote{$N$ fixes the number of particles in
$\bsigma$. If  the system is a fluid, with $X\cequals V$ the volume of it container,  then this constrains the range of the configuration component
of $\bsigma$.  Rather less physically achievable, if the system is a magnet,  with $X\cequals M$ the magnetization,  this will constrain
the spin configuration of the microsystems.}
\item  When $(\zeta_1,X,N)\in\Xi_1$  are the independent variables the Hamiltonian\break $\hH_1(\bsigma;\zeta_1,X,N)$ is a linear function of $\zeta_1$. The constraint (\ref{microcan})
is removed but $\bsigma$ remains constrained by $X$ and $N$.
\item  When $(\zeta_1,\zeta_2,N)\in\Xi_2$  are the independent variables the Hamiltonian\break $\hH_2(\bsigma;\zeta_1,\zeta_2,N)$ is a linear function of $\zeta_1$ and $\zeta_2$. The only
remaining constraint is from $N$.
\end{enumerate}
Connecting relationships are now invoked in three stages:
\[\mbox{\mydbox{
\vspace{-0.3cm}
  \begin{ftd}
  \spacecol\,\,
The independent variables in $\Xi_0$, $\Xi_1$ and $\Xi_2$ are endowed with their thermodynamic meanings.
\end{ftd}
\vspace{-0.2cm}
}}\]
\customlabel{ftd-01}{\arabic{ftd}}
 To proceed to the next stage of the inter-theory connecting process  we need to give a
form in cases (i), (ii) and (iii), respectively,  for   the entropy,  and the free energies $\Phi_1$ and $\Phi_2$.   Case (i) gives the \emb{microcanonical distribution}\footnote{Where  there is still
some dispute about the appropriate form for the entropy \citep[see, for example,][and references therein]{Lavis-nt}.} and  cases (ii)  and (iii) give, respectively, the \emb{canonical distribution}
and the \emb{constant pressure or  magnetic field  distribution}.   For the sake of simplicity we concentrate exclusively on case (iii), where the Gibbs free energy is {\em defined} by
\begin{equation}
\Phi_2(\zeta_1,\zeta_2,N)\cequals -\ln\{Z_2(\zeta_1,\zeta_2,N)\},
\label{gibfe}
\end{equation}
where
\begin{equation}
Z_2(\zeta_1,\zeta_2,N)\cequals \sum_{\{\bsigma\}} \exp\{-\hH_2(\bsigma;\zeta_1,\zeta_2,N)\},
\label{gibpf}
\end{equation}
is the Gibbs partition function.\footnote{Of course, according to the nature of the system this could involve an integral rather than a sum.}
Then
\[\mbox{\mydbox{
\vspace{-0.3cm}
  \begin{ftd}
  \spacecol\,\,
$\Phi_2$ is endowed with its thermodynamic properties and, using
 (\ref{althsp30})--(\ref{althsp41}),
\begin{equation}
U=\frac{\partial \Phi_2}{\partial\zeta_1},\tripsep X=-\frac{\partial \Phi_2}{\partial\zeta_2},\tripsep \zeta_3=-\frac{\partial \Phi_2}{\partial N},\label{bX}
\end{equation}
\begin{equation}
 \Phi_1=\Phi_2+\zeta_2 X,\pairsep S=\zeta_1 U-\Phi_1,\label{bY}
\end{equation}
establishes the connection between $U$, $X$, $\zeta_3$,  $\Phi_1$ and $S$ and their thermodynamic equivalents.
\end{ftd}
\vspace{-0.2cm}
}}\]
\customlabel{ftd-02}{\arabic{ftd}}
This completes a {\em sufficient} set of the connecting relationships. However, we can make some further links.  Suppose that
\begin{equation}
\hH_2(\bsigma;\zeta_1,\zeta_2;N)\cequals \hU(\bsigma)\zeta_1-\hX(\bsigma)\zeta_2.
\label{hamform}
\end{equation}
Then, from (\ref{gibfe})--(\ref{bY}),
\begin{equation}
U=\langle \hU(\bsigma)\rangle,\pairsep  X=\langle \hX(\bsigma)\rangle.
\label{exval}
\end{equation}
\[\mbox{\mydbox{
\vspace{-0.3cm}
  \begin{ftd}
  \spacecol\,\,
$U$ and $X$ are identified respectively as the expectation values of $\hU(\bsigma)$ and $\hX(\bsigma)$ with respect to the probability distribution with
density
\begin{equation}
\rho(\bsigma;\zeta_1,\zeta_2)\cequals \frac{ \exp[-\hH_2(\bsigma;\zeta_1,\zeta_2;N)]}{Z_2(\zeta_1,\zeta_2,N)}.
\label{gibprob}
\end{equation}
\end{ftd}
\vspace{-0.2cm}
}}\]
\customlabel{ftd-03}{\arabic{ftd}}
And it further follows from (\ref{gibfe})--(\ref{gibprob}) that
\begin{equation}
\Var[\hX(\bsigma)]=\frac{\partial^2\Phi_2}{\partial \zeta_2^2} = N\varphi_{\vtmT},
\label{flures}
\end{equation}
where $\varphi_{\vtmT}$ is the response function given by (\ref{cdrsfn}).  This is an example of a {\em fluctuation--response function} relationship.  Similar
relationships apply to $\hU(\bsigma)$ and  all uncontrolled extensive variables.
\subsection{Correlation Function  and Correlation Length}\label{corcoln}
As is already evident, thermodynamics is a `black-box' theory  with a set of macro-variables some of which are independent and controllable and
others whose values change in response to the changes in the independent variables. The only concession made to internal structure was,
 in Sect.\ \ref{tstruth},  to allow a counting of the number $N$ of mass units of the system.
 Now with the `enrichment' provided by statistical mechanics we are able to record the microstate $\bsigma$ of the system,
which is simply the aggregate of  the states of the individual microsystems.

\vspace{0.2cm}

\noindent Suppose that  we take the $d$-dimensional hypercubic lattice\footnote{The restriction of this presentation to a hypercubic lattice is in the interests of simplicity.
It can easily be generalized to other lattices.}  $\mcN_d$ with sites $\br\cequals(n_1,n_2,\ldots,n_d)\fraka$,
for $n_k=1,2,\ldots,N_k$ with $N=N_1N_2\cdots N_d$,   where $\fraka$ is the lattice spacing.\footnote{It is convenient for our discussions to
suppose that the microsystems are confined to the sites of a lattice.  It is, of course, the case that a whole area of statistical mechanics concerned
with fluid systems treats the case of  microsystems/molecules moving in a continuum.}
Then, given that the states of the microsystems on sites $\br$ and $\br'$ of the lattice  are $\sigma(\br)$ and $\sigma(\br')$, respectively, how does the state
of one effect the state of the other; that is to say,  how are their states {\em correlated}?  More specifically, how is the correlation between $\sigma(\br)$ and $\sigma(\br')$  affected by:
\begin{enumerate}[(i)]
\item  the distance $|\br-\br'|$ between the sites?
\item  the closeness of the thermodynamic state of the system to a critical region?
\end{enumerate}
To begin to answer these questions suppose that, as for the   Hamiltonian (\ref{hamform})  in the Ising model in Appen.\    \ref{tim},
 $\hX(\bsigma)$ is a linear
sum of the states on the sites of $\mcN$. And (temporarily) suppose that the coupling $\zeta_2$ takes different values
$\zeta_2(\br)$ at the sites. Then, denoting the set of couplings $\zeta_2(\br)$ by the vector $\bzeta_2$,
\begin{equation}
\hH_2(\bsigma;\zeta_1,\bzeta_2;N)\cequals \hU(\bsigma)\zeta_1-\sum_{\{\br\}}\sigma(\br)\zeta_2(\br)
\label{hamforn}
\end{equation}
and from (\ref{gibprob}), the expectation values of $\sigma(\br)$ is
\begin{equation}
\langle\sigma(\br)\rangle = \sum_{\{\bsigma\}} \sigma(\br) \rho(\bsigma;\zeta_1,\bzeta_2)=-\frac{\partial\Phi_2}{\partial\zeta_2(\br)}.
\label{expsig}
\end{equation}
If the states $\sigma(\br)$ and $\sigma(\br')$ are uncorrelated  $\langle\sigma(\br)\sigma(\br')\rangle$ will factor into $\langle\sigma(\br)\rangle\langle\sigma(\br')\rangle$.
So
\begin{equation}
\curGamma(\br,\br';\zeta_1,\{\zeta_2(\br)\})\cequals \langle\sigma(\br)\sigma(\br')\rangle-\langle\sigma(\br)\rangle\langle\sigma(\br')\rangle
=-\frac{\partial^2\Phi_2}{\partial\zeta_2(\br)\partial\zeta_2(\br')},
\label{prcfn}
\end{equation}
called the \emb{pair correlation function} is a measure of the degree of correlation between  $\sigma(\br)$ and $\sigma(\br')$.
If  all the couplings $\zeta_2(\br)$ are set equal to $\zeta_2$, it follows from (\ref{cdrsfn}) that
\begin{equation}
\sum_{\{\br,\br'\}}\curGamma(\br,\br';\zeta_1,\zeta_2)=N\varphi_{\vtmT},\label{flrs}
\end{equation}
which is a \emb{fluctuation-response function relationship}.
If translational invariance is assumed, then
$\curGamma(\br,\br';\zeta_1,\zeta_2)=\curGamma(\obr;\zeta_1,\zeta_2)$,  where $\vsmallsep\obr\cequals\br-\br'$
and
\begin{equation}
\sum_{\{\obr\}}\curGamma(\obr;\zeta_1,\zeta_2)=\varphi_{\vtmT}\smallsep\mbox{with}\smallsep\curGamma^\star(\bzero;\zeta_1,\zeta_2)=\varphi_{\vtmT},
\label{flrsm}
\end{equation}
where $\curGamma^\star(\bk;\zeta_1,\zeta_2)$ is the Fourier transform of $\curGamma(\obr;\zeta_1,\zeta_2)$.\footnote{Given by
\[\curGamma^\star(\bk;\zeta_1,\zeta_2)\cequals\sum_{\{\obr\}}\curGamma(\obr;\zeta_1,\zeta_2)\exp(-\rmi\,\bk\cdot\obr),\smallsep
\curGamma(\obr;\zeta_1,\zeta_2)=\frac{1}{N}\sum_{\{\bk\}}\curGamma^\star(\bk;\zeta_1,\zeta_2)\exp(\rmi\,\bk\cdot\obr).\]}
The correlation length $\curxi(\zeta_1,\zeta_2)$, given by\footnote{The prefactor $c(\frakd)$ is dependent on the number of dimensions $\frakd$, in which the system is
infinite.  It can be show from Ginzburg-Landau  theory that $c(\frakd)=1/(2\frakd)$ \citep[][Sect.\ 5.6]{lavisnew}.\label{cfrakd}}
\begin{equation}
\curxi^2(\zeta_1,\zeta_2)\cequals c(\frakd)\frac{\sum_{\{\obr\}}|\obr|^2\curGamma(\obr;\zeta_1,\zeta_2)}{\sum_{\{\obr\}}\curGamma(\obr;\zeta_1,\zeta_2)}
= - c(\frakd) \frac{\nabla^2_{\bk}\curGamma^\star(\bzero;\zeta_1,\zeta_2)}{\curGamma^\star(\bzero;\zeta_1,\zeta_2)},
\label{corlen}
\end{equation}
is a measure of distance over which microscopic degrees of freedom are statistically correlated.

\vspace{0.2cm}

\noindent  We are now able to augment the scaling theory, described in Sect.\ \ref{scathr}, by applying it to the correlation function and correlation length.
Again adopting the magnetic model used in of Sect.\ \ref{scathr},
suppose that near a critical point
these functions can be re-expressed in terms of the scaling fields $\theta_\vtmT$ and $\theta_\vtpH$;   $\obr$ and $\bk$ can also be treated as scaling fields which, on dimensional grounds will
have exponents $-1$ and $+1$ respectively. Then the relationships (\ref{flrsm}) between the correlation function and the response function $\varphi_{\vtmT}$,  together with the formula
(\ref{excf}) derived from Ginzburg-Landau theory suggests a scaling form\footnote{The exponent of minus one for $\obr$ is chosen on dimensional grounds. It is also equivalent to the rescaling of
length in the renormalization group (item (iii) in Sect.\ \ref{tgs}).}
\begin{equation}
\curGamma(\lambda^{-1}\obr;\lambda^{y_\vtmT}\theta_\vtmT,\lambda^{y_\vtpH}\theta_\vtpH)=\lambda^{\cureta+d-2}\curGamma(\obr;\theta_\vtmT,\theta_\vtpH),
\label{sccfn}
\end{equation}
for the correlation function, and,  hence
\begin{equation}
\curGamma^\star(\lambda\bk;\lambda^{y_\vtmT}\theta_\vtmT,\lambda^{y_\vtpH}\theta_\vtpH)=\lambda^{\cureta-2}\curGamma^\star(\bk;\theta_\vtmT,\theta_\vtpH),
\label{sccfnft1}
\end{equation}
for its Fourier transform. Then, from (\ref{corlen}), the scaling form for the correlation length is
\begin{equation}
\curxi(\lambda^{y_\vtmT}\theta_\vtmT,\lambda^{y_\vtpH}\theta_\vtpH)=\lambda^{-1}\curxi(\theta_\vtmT,\theta_\vtpH).
\label{corlsc}
\end{equation}
From (\ref{flrsm}), (\ref{sccfnft1}) and (\ref{cdrsfn}),  $d-2y_\vtpH=\cureta-2$ and, setting $\lambda=|\theta_2|^{-1/y_\vtmT}$ in (\ref{corlsc})  gives, from (\ref{excl})
\begin{equation}
\curnu=\curnu'=1/y_\vtmT.
\label{nurel}
\end{equation}
Then, from (\ref{betai4})  and (\ref{betai6}), $\curnu(2-\cureta)=\curgamma$,
which is the \emb{Fisher scaling law} \citep{2-f1}  and $d\,\curnu = 2 - \curalpha$,
which is the \emb{Josephson hyper-scaling law}  \citep{2-j1}.\footnote{This is the only scaling law which involves the
dimension $d$ of the system.  For reasons which become evident if Ginzburg-Landau theory is used in the Gaussian approximation \citep[][Sect.\ 5.6]{lavisnew}
it becomes invalid when $d>d_{\tU\tC}$, the upper-critical dimension.  This is the dimension such that, when $d\ge d_{\tU\tC}$, critical exponents become dimensionally independent
with the classical  values given by, for example, the van der Waals fluid. For  the Ising and similar non-quantum systems (see Appen. \ref{tim})  $d_{\tU\tC}=4$.\label{ucd}}
\subsection{Transfer-Matrix Methods}\label{extls}
 As we have already shown $S$, $\Phi_1$ and $\Phi_2$  are all extensive functions  of their extensive variables or none of them is.   The message
 \FSM--\refsf{fsm-02} sent
from statistical mechanics to thermodynamics is that the latter is the case, and in particular that
\begin{equation}
\phi_2\cequals \frac{\Phi_2(\zeta_1,\zeta_2,N)}{N}=\phi_2(\zeta_1,\zeta_2)
\label{extphi1}
\end{equation}
is true only as an approximation for large systems.\footnote{In fact the Sackur-Tetrode formula for the entropy of a perfect gas given by
(\ref{prefl2}) and treated there as an assumption is, when derived from statistical mechanics, also not completely extensive.  This condition
is achieved only when $N$ is large and the Stirling formula for $N!$ is applied.}
   We shall now substantiate this claim by considering a particular way to develop statistical mechanical models, namely the method of transfer matrices.
   Although, of course, statistical mechanics can model systems of microsystems (molecules) moving, as in a fluid, through a continuum of points, transfer
   matrix methods are restricted to microsystems confined to the points of a lattice. In principle lattices of any dimension can be considered, but we shall, for easy of presentation,
   consider only the two-dimensional case.  A virtue of this development is that it can be clearly seen how it unfolds as  the two lattice directions in which
    the system gets larger and then  infinite are applied   separately.

\vspace{0.2cm}

\noindent   Consider a square lattice, of lattice spacing $\fraka$, with $N_{\tH}$ sites in the horizontal direction, $N_{\tV}$ in the vertical direction, so that $N=N_\tH N_\tV$.
This situation is like the one considered for finite-size scaling in Sect.\  \ref{fssc}, when extensivity can be considered separately in the two directions.
  Periodic boundary conditions are applied so that the lattice forms a torus with  horizontal rings
of $N_\tH$ sites and  rings in a vertical plane of $N_\tV$ sites.\footnote{The point we are establishing with respect to extensivity is even more evident in systems with open boundaries.}
 We suppose that the sites of the lattice are occupied by identical microsystems having $\nu$ possible states.\footnote{The Ising model of Appen.\ \ref{tim} is an example of
 such a model with $\nu=2$.}
 The state of the whole system is $\bsigma\cequals(\tbsigma_1,\tbsigma_2,\ldots,\tbsigma_{N_\tH})$,
where $\tbsigma_i$,  the state of the $i$-th vertical ring of sites, has one of $N_\tR\cequals\nu^{N_\tV}$ values.
Given that contributions to the Hamiltonian arise (at least in the horizontal direction) only between first-neighbour sites
the Hamiltonian can be decomposed  into interactions between neighbouring rings of sites and within rings.
 The latter can be distributed between interacting pairs of rings so that the Hamiltonian takes the form
of the sum of contributions of interactions between rings and it  is straightforward to show that  the partition function is expressible in the form
\begin{equation}
 Z_2(\zeta_1,\zeta_2,N)=\Trace\{\bV^{N_\tH}\},
 \label{ztrm}
 \end{equation}
 where $\bV$ is the $N_\tR$-dimensional transfer matrix with elements  consisting of the exponentials of the negatives of the inter-ring interactions.
Assuming that $\bV$ is diagonalizable,\footnote{The condition for this to be the case is that $\bV$ is {\em simple} \citep[][p.\ 146]{3-l11}.}
 it is an elementary algebraic result that its trace is equal to the sum of its eigenvalues, which in decreasing order of magnitude we denote as
 $\Lambda^{(\ell)}(\zeta_1,\zeta_2,N_\tV)$, $\ell=1,2,\ldots,N_\tR$.  Then,
  from (\ref{gibfe}) and  (\ref{ztrm}),
\begin{equation}
 \Phi_2(\zeta_1,\zeta_2,N)=-\ln\{[\Lambda^{(1)}(\zeta_1,\zeta_2,N_\tV)]^{N_{\tH}} +\cdots+[\Lambda^{\left(N_\tR\right)}(\zeta_1,\zeta_2,N_\tV)]^{N_{\tH}}\}.
 \label{ztrmev}
 \end{equation}
As we can see the factors $N_\tH$ and $N_\tV$ of $N$  are `buried' at different places in this expression and it is clear that the extensivity condition (\ref{extphi1}) is not satisfied
and the negative aspect of the message \FSM--\refsf{fsm-02}  from statistical mechanics to thermodynamics is justified.
However,  we can make some progress because, if all the elements of $\bV$ are strictly positive, as will usually be the case, an important theorem of
\citet{3-p5} (see also,  \citealt{2-g11}, p. 64; \citealt{lavisnew}, p.\ 673) states that the largest eigenvalue of $\bV$  is real,  positive and non-degenerate.  This means that, in the approximation
when $N_\tH$ becomes large,
\begin{equation}
 \Phi_2(\zeta_1,\zeta_2,N)\simeq-   N_\tH\ln\{\Lambda^{(1)}(\zeta_1,\zeta_2, N_\tV)\}
 \label{largnh}
 \end{equation}
with extensivity achieve in the horizontal direction. Two strategies emerge at this point:

\vspace{0.2cm}

\noindent The first is to calculate an expression of the form
\begin{equation}
 \Lambda_1(\zeta_1,\zeta_2, N_\tV)\cequals [\psi(\zeta_1,\zeta_2)]^{N_\tV},
 \label{exres}
 \end{equation}
 valid in the limit $N_\tV\to\infty$ and giving
\begin{equation}
 \phi_2(\zeta_1,\zeta_2)=-   \ln\{\psi(\zeta_1,\zeta_2)\}
 \label{largnhs}
 \end{equation}
in the limit $N\to\infty$.  If this calculation can be carried out it is an effective proof of the existence of the thermodynamic limit,\footnote{Although, of course, the {\em current}
absence of such a calculation is not a proof of the contrary assertion.} which achieves complete extensivity, with free-energy density given
by (\ref{largnhs}).
It is, however, a strategy that has been successfully applied in only a few cases, of which  \citeauthor{1-o2}'s (\citeyear{1-o2})  solution of the
  two-dimensional zero-field Ising model and \citeauthor{2-b7}'s (\citeyear{2-b7})  solution of the eight-vertex model are the most well-known instances.

\vspace{0.2cm}

\noindent In the absence of a complete solution as represented by (\ref{largnhs}), the strategy most often adopted is to treat $N_\tV$ as
 a parameter indexing a sequence of models. That is
\begin{equation}
\Psi^{(N_\tV)}(\zeta_1,\zeta_2)\cequals \Lambda^{(1)}(\zeta_1,\zeta_2, N_\tV)
  \label{exres1}
 \end{equation}
 and
\begin{equation}
 \phi^{(n)}_2(\zeta_1,\zeta_2)\simeq-  \frac{\ln\{\Psi^{(n)}(\zeta_1,\zeta_2)\}}{n}.
 \label{largnhi}
 \end{equation}
In the case of the Ising and similar semi-classical models it can be shown by a method due to \citet{2-p8}  that  $\phi^{(n)}_2(\zeta_1,\zeta_2)$
is   a smooth function for all $n>0$ which exhibits maxima in response functions. A quantitative analysis using finite-size scaling theory  (see Sect.\ \ref{fssc})
shows that such maxima become increasingly steep for increasing values of $n$, with
 convergence to the singularity associated with the transition in the two-dimensionally infinite system as $n\to\infty$.
 In particular to the corresponding singularities in \citeauthor{1-o2}'s solution of the two-dimensional zero-field Ising model.
 However, in view of the discussion later in this work it should be noted that the limiting process is singular.  Although the maxima in the
finite-$N_\tV$ models converge to the singularities in the $N_\tV=\infty$ model they
 remain of a different (non-singular) character however large $N_\tV$ becomes.

 \vspace{0.2cm}

\noindent The pair correlation function and correlation length were defined in Sect. \ref{corcoln}.
In terms of this transfer matrix formulation
 it can be shown \citep[][Sect.\ 11.1.3]{lavisnew} that in the limit $N_\tH\to\infty$
\begin{equation}
\curxi(\zeta_1,\zeta_2,N_\tV) \simeq - \fraka{\left\{\ln\left|\Omega_2(\zeta_1,\zeta_2,N_\tV)\right|\right\}}^{-1},
\label{atm43}
\end{equation}
where $\fraka$, the lattice spacing,  is now the distance between neighbouring rings of sites,
\begin{equation}
\Omega_2(\zeta_1,\zeta_2,N_\tV)\cequals{\Lambda^{(2)}(\zeta_1,\zeta_2,N_\tV)}/{\Lambda^{(1)}(\zeta_1,\zeta_2,N_\tV)}
\label{atm43a}
\end{equation}
and
\begin{equation}
\curGamma_2(\br,\br';\zeta_1,\zeta_2,N_\tV)\sim \exp\{-|\br-\br'|/\curxi(\zeta_1,\zeta_2,N_\tV)\},
\label{cftme}
\end{equation}
in the limit $|\br-\br'|\to \infty$, where $\br$ and $\br'$ lie on the same vertical ring of sites
which establishes an asymptotic form for $f_d(|\obr|/\curxi)$ in (\ref{excf}).

\vspace{0.2cm}

\noindent  The situation where $N_\tH\to\infty$ and $N_\tV$ is finite corresponds to that to be discussed in Sect. \ \ref{fssc},
below, for finite-size scaling, where here $\frakd\cequals 1$ and the thickness of the lattice $\aleph\cequals N_\tV$,
with a maximum in  $\varphi_\vtmT$ and in other response functions signalling  an incipient singularity.\footnote{In Sect.\ \ref{fssc}
we are primarily concerned with fully-finite systems, although as we indicated there, the analysis also applies to cases where, like here, $0<\frakd\le d_{\tL\tC}<d$.}
The eigenvalue ratio $\Omega_2(\zeta_1,\zeta_2,N_\tV)$  can also be used as a means of detecting
an incipient singularity, but in a slightly different way. Since, in \citeauthor{1-o2}'s  solution  for the Ising model, the largest eigenvalue
 is degenerate along the first-order transition line below the critical temperature \citep[][p.\ 194]{1-d4}, we expect that $\Omega_2(\zeta_1,\zeta_2,N_\tV)$
will begin, as $N_\tV$ is increased, to form a `plateau'
 with small (negative) slope for small temperature temperatures. The end of this plateau, where the negative curvature is a
 maximum can then be construed as the location of an incipient singularity.\footnote{Similar arguments apply to the three-state Potts
model \citep[][Sect.\ 11.32]{lavisnew}.}
 The finite-size scaling argument  of Sect.\ \ref{fssc} can  be applied to
all these quantities showing that  the maxima converge towards the infinite-system critical value as $N_\tV$ increases.
However, of course, for finite $N_\tV$ we cannot expect these locations to exactly coincide. These perceptions are given
 further weight by the phenomenological renormalization group procedure described in Sect.\ \ref{wtfn} (c).

 As we have already indicated, the use of transfer matrix methods to determine exact solutions for infinite systems leads into our discussion in Sect.\ \ref{nftl}
 of the thermodynamic limit. In a similar way our account of incipient singularities resulting from an analysis of systems with $N_\tV$ finite leads into our discussion
 of phase transitions in finite systems is Sect.\ \ref{ptfs}.
\subsection{The  Renormalization Group Method}\label{thrgrp}
Once it became evident, around the turn of the twentieth century that the exponents associated
 with a critical point, both in experimental systems and theoretical models
were not those derived from classical models, like van der Waals equation, an interest developed
 in determining their exact values, in experimental systems and also
in theoretic models, where of course it was also necessary in many cases  to derive the critical temperature.
Before the advent of renormalization group methods the most successful way to do this
was by using high and low temperature series.  These were very successful in obtaining critical temperatures and exponents
at second-order critical points.  However, although they can be adapted to deal with first-order transitions, this is not their
main strength and they are also not designed to map out the whole  picture of phase transition curves
in thermodynamic space.
This contrasts with the renormalization group methods developed in the late
 sixties -- early seventies. They are able (when they work) not only to deal with critical points
  but also curves of first-order and second-order transitions. However, any account of these methods should be proceeded
   by some words of warning, like those of John \citeauthor{2-c7}. As he says \citep[][pp.\ 28--29]{2-c7}:
 \begin{quote}
 ``Not only are the words `renormalization'\footnote{A carry-over on the part of \citet{2-w13} from his work on the high-energy behaviour of renormalized quantum electrodynamics.}
 and `group'\footnote{It is in fact a semi-group since it has no inverse.}  examples of unfortunate terminology, the use of the definite article `the' which usually precedes them
 is even more confusing. It creates the   misleading  impression that the renormalization group is a kind of universal machine through which any problem may be processed, producing neat
 tables of critical exponents at the other end. This is quite false. It cannot be stressed too strongly that the renormalization group is merely a framework, a set of ideas, which has
 to be adapted to the nature of the problem at hand. In particular, whether or not a renormalization group approach is quantitatively successful depends to a large extent on the nature of the
 problem, but lack of success does not necessarily invalidate the qualitative picture it provides.
\end{quote}
  Here we shall concentrate solely on the approach to the renormalization group  which is usually
referred to as happening in `real space';  in contrast to the approach initiated by \citet{2-w13}
where renormalization is performed in wave-vector space resulting in expansions in the parameter $\epsilon\cequals d-4$.\footnote{A comprehensive collection of
articles on both types of renormalization group methods is contained
in the articles in \citet{2-d16}   and on real-space methods in the
volume edited by \citet{2-b17}. For a description of renormalization
methods in wave-vector space the reader is referred to \citet{2-m3} and
\citet{2-a1}. Accounts of both approaches are given by \citet{3-g10},
 \citet{2-b47} and \citet{2-c7}.}

 \vspace{0.2cm}

 \noindent  The core of real-space renormalization group (RSRG) methods is the construction of a semi-group of transformations on the independent couplings, or functions thereof.
 There is a variety of procedures for doing this.  Many are based on
the block-spin method of \citet{2-k1}, and another popular technique is decimation, where  the states of a proportion of the microsystems  is summed out of the
partition function.  In fact decimation applied to the one-dimensional Ising model, or related models like the Potts model, \citep[see, e.g.][Sect.\  15.5.1, and Sect.\ \ref{wtfn} (a) below]{lavisnew}
is one of the few examples of an exact  RSRG transformation.  Most transformation involve approximations, which thus means that the critical exponents are approximations
with, in many cases no obvious way to make improvements, unlike series methods where, in principle and often with a great deal of labour, improvements are made
by extending the series.

 In essence the RSRG transformation involves some fractional reduction in the number of degrees of freedom.  It would, therefore, seem to follow that there must
 have been a prior application of the thermodynamic limit. Whether this is required for the renormalization group and, more generally, whether it
  is needed at all in the statistical mechanics of critical phenomena is a question that we return to in Sect. \ref{tlnism}, following  a brief account of the ideas involved in
the RSRG.
\subsubsection{General Theory}\label{tgs}
Underlying the semigroup of transformations on couplings, which is the real-space renormalization group, is a mapping from a lattice $\mcN$  to a lattice $\wtmcN$.  For the sake of simplicity
we suppose that both  are hypercubic lattices with periodic boundary conditions.  Then:
\begin{enumerate}[(i)]
\item  The number of sites $N$ and $\wtN$ of $\mcN$ and $\wtmcN$ are related by $\wtN= N/\lambda^d$, where $\lambda>1$.
\item The lattice spacings $\fraka$ and $\tfraka$ of $\mcN$ and $\wtmcN$ are related by $\tfraka=\lambda \fraka$.
\item The size of $\wtmcN$ is reduced by a length scaling $|\tbr|= |\br|/\lambda$.
\end{enumerate}
The renormalization group is constructed by imposing onto the lattice transformation a statistical mechanical
transformation. To do this we modify the Hamiltonian (\ref{hamform}) to
\begin{equation}
\hH^\prime_2(\bsigma;\zeta_0,\bzeta;N)\cequals N\zeta_0 +\hH_2(\bsigma;\bzeta;N),
\label{newhamform}
\end{equation}
where, for reasons that will become evident below, we have  added a term including
a trivial coupling $\zeta_0$ and, as in the presentation of scaling theory at the beginning of  Sect.\ \ref{scathr},  generalized the number of non-trivial couplings from two to $n$,
with $\bzeta\cequals(\zeta_1,\zeta_2,\ldots,\zeta_n)$.\footnote{As we see in the example of the two-dimensional Ising model in Sect.\ \ref{wtfn}(b) below, the renormalization group
transformation will in many cases generate further couplings beyond the set dictated by the physics of the model under consideration.}  The terminology `trivial' signals the fact that, if  in (\ref{gibprob})  $\hH_2$ is replaced by $\hH^\prime_2$ and $Z_2$ by
\begin{equation}
Z^\prime_2(\zeta_0,\bzeta,N)\cequals \sum_{\{\bsigma\}} \exp\{-\hH^\prime_2(\bsigma;\zeta_0,\bzeta,N)\},
\label{gibpfh}
\end{equation}
then the probability density function is left unchanged and
\begin{equation}
\Phi_2(\bzeta,N)\cequals -\ln\{\exp(N\zeta_0)Z^\prime_2(\zeta_0,\bzeta,N)\}.
\label{gibfeh}
\end{equation}
Bearing in mind the remarks of \citeauthor{2-c7}, given above, a successful application of this method depends on being able
 to construct relationships between the couplings $\zeta_0,\bzeta$ in the system on $\mcN$ and the couplings $\tzeta_0,\tbzeta$
in the system on $\wtmcN$,  done in such a way that the values for the couplings for $\mcN$ place it in a critical region if and only the same is the case for
the values of the couplings for $\wtmcN$.   Since the critical properties of a system are contained within the partition function the invariance
\begin{equation}
Z^\prime_2(\tzeta_0,\tbzeta,\wtN)=Z^\prime_2(\zeta_0,\bzeta,N)
\label{inpf}
\end{equation}
of that function is a sufficient guarantee; and this is achieved by
the relationship
\begin{equation}
\exp\{-\hH^\prime_2(\tbsigma;\tzeta_0,\tbzeta,\wtN)=\sum_{\{\bsigma\}} w(\bsigma,\tbsigma)\exp\{-\hH^\prime_2(\bsigma;\zeta_0,\bzeta,N)\},
\label{rgt}
\end{equation}
where the \emb{weight function}  $w(\bsigma,\tbsigma)$ satisfies
\begin{equation}
\sum_{\{\tbsigma\}} w(\bsigma,\tbsigma)=1.
\label{wfcn}
\end{equation}
Running over the set of states $\tbsigma$  in (\ref{rgt}) will, {\em in principle}, produce \emb{recurrence relationships}\footnote{To be precise, the recurrence relationships are derived from (\ref{rgt}) as  relationships between the Boltzmann factors $\exp(\tzeta_j)$ and $\exp(\zeta_j)$,
$j=0,1,\ldots,n$ of the couplings.}
\begin{equation}
\tzeta_j=\pK_j(\bzeta),\tripsep j=1,2,\ldots,n,
\label{recr12}
\end{equation}
 and for $\tzeta_0$ a recurrence relationship which we choose, for convenience to express in the form
\begin{equation}
\tzeta_0=\lambda^d[\zeta_0+\pK_0(\bzeta)].
\label{ecr0}
\end{equation}
The `in principle' caveat entered here is important. As we shall see it is rarely possible to implement this programme and to choose a weight function without some kind
of approximation being applied.  And it is frequently the case that consistency can be achieved only by increasing the value of $n$ from its initial value.  When this
happens it is necessary, in order to apply repeated iterations, to back-track and for the extra couplings to be included from the start.

The importance of (\ref{ecr0}) is that it can be used, together with  (\ref{gibfeh}) and (\ref{inpf}) to obtain the relationship
\begin{equation}
\phi_2(\tbzeta)=\lambda^d\phi_2(\bzeta)-\lambda^d\pK_0(\bzeta).
\label{trgt7}
\end{equation}
between  the free-energy densities per lattice site at $\bzeta$ and $\tbzeta$.
Then, given that (\ref{recr12}) can be iterated to produce a sequence of points $\bzeta^{(0)}\to\bzeta^{(1)}\to\bzeta^{(3)}\to\cdots$
in $\Xi_2$,
\begin{equation}
\phi(\bzeta^{(0)}) = \sum_{s=0}^{\infty}{\frac{1}{\lambda^{sd}}}{\pK}_0(\bzeta^{(s)})\,
\label{fed4}
\end{equation}
is the free-energy density at an  initial point $\bzeta^{(0)}$.
Although this result seems to  imply the need for an infinite number of iterations, this is clearly not possible in practical computations. It is, therefore,
 fortunate that it is usually found that this series converges after a very few iterations, allowing densities and response functions to be calculated  (see the discussion  Sect.\ \ref{nftl}).

\vspace{0.2cm}

\noindent  A \emb{fixed point}  $\bzeta^\star$ of (\ref{recr12}) is associated with either a single-phase region or a critical region  $\mcC$  in $\Xi_2$.
To analyze its nature we linearize with $[\btriangle{\bzeta}^{(s)}]^\tT \cequals \bzeta^{(s)}-\bzeta^{\star}$ to give
$\btriangle\bzeta^{(s+1)}\simeq {\bL}^{\star}\btriangle\bzeta^{(s)}$,
where $\bL^{\star}$ is the fixed-point value of the matrix $\bL$ with elements $L_{ij}\cequals \partial\pK_i/\partial\zeta_j$.
In general $\bL^{\star}$ is not symmetric, with different left and right eigenvectors $\bw_j$ and $\bx_j$ for the eigenvalue
$\Lambda_j$.  It can then be shown\footnote{Assuming that $\bL^\star$ is a simple matrix.}
that in a neighbourhood of the fixed point there exist   scaling fields
$\theta_j = \theta_j(\btriangle\bzeta)$, $j=1,2,\ldots,n$ which are smooth functions of the couplings with
\begin{eqnarray}
\theta_j(0) = 0,\pairsep&&
\theta_j^{(s+1)} = \Lambda_j\theta_j^{(s)},
\label{fplrg10}\\
\theta_j \simeq\phantom{\sum_{j=1}^n}\hspace{-0.2cm} \bw_j\centerdot\btriangle\bzeta,
\pairsep&&
\btriangle\bzeta \simeq \sum_{j=1}^n \bx_j \theta_j.\label{fplrg12}
\end{eqnarray}
which is a realization of the relationship between scaling fields and couplings described in Sect.\  \ref{scathr}.

From (\ref{fplrg10})  $\theta_j^{(s+k)} = \Lambda^{k+s}_j\theta_j^{(0)}$ and the semi-group character of this transformation implies that
  $\Lambda_j\cequals \lambda^{y_j}$, for $j=1,2,\ldots,n$ and a set of exponents $y_1,y_2,\ldots,y_n$.
Then, in a neighbourhood of the fixed point $\bzeta^\star$
the couplings $\zeta_j$ and $\tzeta_j$ in (\ref{trgt7}) can be expressed as
\begin{equation}
\zeta_j=\zeta_j^\star +\sum_{i=1}^n x_i^{(j)}\theta_i,\pairsep
\tzeta_j=\zeta_j^\star +\sum_{i=1}^n x_i^{(j)}\lambda^{y_i}\theta_i,
\label{tzetoth}
\end{equation}
where $\bx_i\cequals(x_i^{(1)},x_i^{(2)},\ldots,x_i^{(n)})$.
In (\ref{trgt7})  the function $\pK_0(\bzeta)$ is regular.
So in a region around $\bzeta^\star$ the singular part $\phi_{\sing}(\triangle\bzeta)$
of $\phi_2(\bzeta)$,
with $\phi_{\sing}(0)=0$, can be re-expressed in terms of the scaling fields to give
\begin{equation}
{\phi}_{\sing}({\lambda}^{y_1}{\theta}_1,\ldots,{\lambda}^{y_n}{\theta}_n) = {\lambda}^d {\phi}_{\sing}
({\theta}_1,\ldots, {\theta}_n)\, ,
\label{egs4rg}
\end{equation}
which is a substantiation of (\ref{egs4}).
\subsubsection{Finite-Size Systems}\label{fssc}
This treatment of criticality, which plays an important role in our understanding of PTCP in real systems (see Sect. \ref{ourp}),  was initiated by \citet{2-f7} and \citet{2-f8}.\footnote{For a review
see \citet{2-b19} and, for a collection of papers on finite-size scaling, \citet{2-c4}.}
For simplicity we suppose, as in Sect.\ \ref{extls},  that the system under consideration   consists of $N$ identical microsystems
on the sites of a $d$-dimensional  hypercubic lattice $\mcN_d$
with $N_k$ sites} in the $k$-direction and $N_1N_2\cdots N_d=N$.
A partially-infinite system of \emb{thickness} $\aleph\cequals [N^{(\frakd)}]^{1/(d-\frakd)}$, where
$N^{(\frakd)}\cequals N_{\frakd+1}N_{\frakd+2}\cdots N_d$, is obtained if the thermodynamic limit $N_k\to\infty$ is taken
 only for $k=1,2,\ldots,\frakd<d$.
In  a fully-finite system $\frakd=0$ and $N^{(\frakd)}=N$.   We denote the critical region
in the partially-infinite system,  when $\frakd>d_{\tL\tC}$,  by $\mcC(\frakd;\aleph)$, with $\mcC(d;\infty)=\mcC$. Finite-size scaling theory can be  applied both
 to a partially-infinite system, where there is the possibility of a critical region consisting of some kind of singular behaviour,  and a  fully-finite system where there
  is not.  In a fully-finite system or a partially infinite system with $\frakd\le d_{\tL\tC}$ the critical region is replaced by:
 \begin{definition}\label{qcritreg}
 For a fully-finite, or partially-infinite system with $\frakd\le d_{\tL\tC}$, a  region $\mcI\mcS(\frakd;\aleph)$ in the space of couplings is one of \emb{incipiently singularity},\footnote{It should be noted that this is a slightly different usage from that in \citet[][Chap.\ 11]{lavisnew}, where such occurrences are called
`incipient phase transitions'.} if in the limit $\aleph\to \infty$, it maps into a critical region $\mcC$ of the infinite system.
 \end{definition}
 Expressed in a slightly different way a system has an incipient singularity
at certain size-dependent  values  of it couplings if,  as the system size $\aleph$ is increased, those values converge to ones where thermodynamic functions exhibit {\em properties that have no finite limits}.

\vspace{0.2cm}

\noindent The basic assertion of finite-size scaling is that
$\theta_\aleph\cequals 1/{\aleph}$ can be treated as another scaling field with $y_\aleph=1$, meaning that $\theta_\aleph$ is a relevant scaling field,  and   $\theta_\aleph=0$ for the infinite system.
The only condition required for this is that the system is sufficiently large for the renormalization group transformation in the space of all the other couplings
to be unmodified by the finite size of the system. That is to say, that the renormalized couplings can be represented in the system.
For simplicity we confine our attention to the simple magnetic system used in  Sect.\ \ref{scathr}.   The critical region for the infinite system is just a critical point
$T=T_\rmc$, $\pH=0$
with scaling fields $\theta_\vtmT$ and $\theta_\vtpH$, given by (\ref{scflh}), measuring departures from this point.
When the system has finite thickness ($\theta_\aleph\ne 0$),  the incipient singularity is at a different temperature, but because of the symmetry of the system still with $\pH=0$.
Again, for simplicity,  attention will be restricted  to the zero-field axis
 where two  temperatures come into play:
\begin{enumerate}[(i)]
\item For  a system of finite thickness $\aleph$,  $\tiT(\aleph)$  is the \emb{shift temperature}
such that, as $\aleph\to\infty$, $\tiT(\aleph)\to T_\rmc$,  the temperature at which  {\em  the infinite system has a singularity}.  If  $\frakd>d_{\tL\tC}$
then $\tiT(\aleph)$ is also a critical temperature, but for the system of finite thickness.
 If $\frakd\le d_{\tL\tC}$,  and in particular when $\frakd=0$
and the system is fully-finite, $T=\tiT(\aleph)$ is   a  \emb{quasicritical temperature} \citep{2-f8} which is exhibited by
 a maximum in the susceptibility.\footnote{Another
response function like the heat capacity can replace the susceptibility, with a slightly different quasicritical temperature.} This temperature is an example of an \emb{incipient singularity}. In keeping with the other assumptions of scaling theory it is assumed that this convergence is algebraic, so, with
scaling field
\begin{equation}
 \titheta_\vtmT(T,\aleph)\cequals\varepsilon\left(\ssfrac{1}{T}-\ssfrac{1}{\tiT(\aleph)}\right),
 \label{shipfd}
 \end{equation}
 the condition
{\jot=0.25cm
\begin{eqnarray}
\titriangle(\aleph)&\cequals&\theta_\vtmT(T)-\titheta_\vtmT(T,\aleph)
=\varepsilon\left(\ssfrac{1}{\tiT(\aleph)}-\ssfrac{1}{T_\rmc}\right)\nonumber\\*
&=&\theta_\vtmT(\tiT(\aleph))=- \titheta_\vtmT(T_\rmc,\aleph)
\simeq C_\rms\aleph^{-\curchi}\smallsep\mbox{as $\aleph\to\infty$},
 \label{shifte}
 \end{eqnarray}
where $\curchi>0$ is the \emb{shift exponent}, is sufficient to ensure convergence.
\item $\scT(\aleph)$, called the \emb{rounding temperature} is an important, but rather more elusive, property of the system.  It is the temperature at which the susceptibility
first shows significant deviation from that of the fully-infinite system.  With
 \begin{equation}
 \sctheta_\vtmT(T,\aleph)\cequals\varepsilon\left(\frac{1}{T}-\ssfrac{1}{\scT(\aleph)}\right),
 \label{roundd}
 \end{equation}
it is supposed that
\begin{eqnarray}
\sctriangle(\aleph)&\cequals& \titheta_\vtmT(T,\aleph)- \sctheta_\vtmT(T,\aleph)=\varepsilon\left(\frac{1}{\scT(\aleph)}-\ssfrac{1}{\tiT(\aleph)}\right)
\nonumber\\*
&=&\titheta_\vtmT(\scT(\aleph),\aleph)=- \sctheta_\vtmT(\tiT(\aleph),\aleph)\simeq C_\rmr \aleph^{-\curtau},\smallsep\mbox{as $\aleph\to\infty$},
 \label{rounde}
 \end{eqnarray}
where} $\curtau>0$ is the \emb{rounding exponent}.
\end{enumerate}
Scaling around the infinite system critical point is shown in Fig. \ref{Fig6}.
\begin{figure}[t]
\includegraphics[width=6cm]{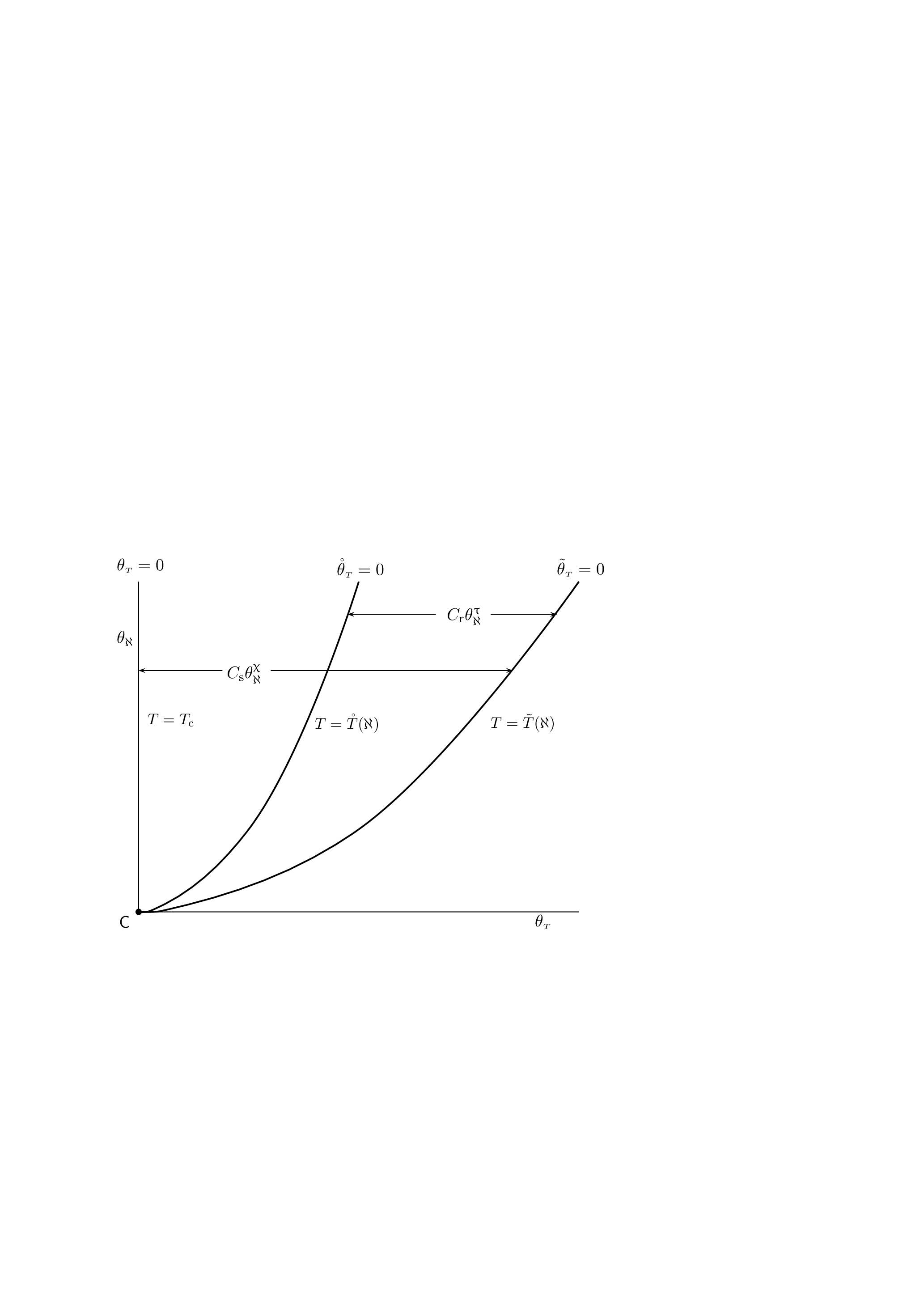}
\caption{Scaling around the critical point {\sfC}, showing the curves $\sctheta(T,\aleph)=0$ and $\titheta(T,\aleph)=0$ of rounding and shift temperatures.
}\label{Fig6}
\end{figure}
  Our interest in this work is in the occurrence of an incipient singularity; so henceforth  the assumption is that $\frakd\le d_{\tL\tC}$.\footnote{In Sect.\ \ref{extls} on transfer matrix methods,
and Sect.\     \ref{wtfn}(c)  on phenomenological renormalization,  $\frakd=d_{\tL\tC}=1$ and $d=2$. Our later discussion  in Sect.\ \ref{ourp}  is  concerned with phase transitions in fully-finite systems
where $\frakd=0$.} Thus we have three relevant scaling fields
with the critical region of the infinite system at the origin $(\theta_\vtmT,\theta_\vtpH,\theta_\aleph)=(0,0,0)$.  However, this is not the complete picture; in general there will be a number of irrelevant
scaling fields, which parametrize the critical region and affect its asymptotic properties.  For the sake of simplicity we just include the most nearly relevant.\footnote{That is $|y_\star|\cequals \min_{j\in s+1,\ldots,n} |y_j|$.} of these designated as $\theta_\star$,  with exponent $y_\star<0$. Then
 on the zero-field axis (\ref{egs4rg}) is replaced by
\begin{equation}
{\phi}_{\sing}(\lambda^{y_\vtmT}\theta_\vtmT,\lambda^{y_\star}\theta_\star,\lambda\theta_\aleph) = {\lambda}^d {\phi}_{\sing}(\theta_\vtmT,\theta_\star,\theta_\aleph)\, .
\label{egs4mf}
\end{equation}
As we have already seen, singular parts of
 thermodynamic functions like densities and response functions are  obtained by differentiations with respect to the scaling fields.
In particular, for the susceptibility $\varphi_\vtmT$, given by  (\ref{cp51}),
\begin{equation}
\varphi_\vtmT (\theta_\vtmT,\theta_\star,\theta_\aleph) = {\lambda}^{\curomega} \varphi_\vtmT(\lambda^{y_\vtmT}\theta_\vtmT , \lambda^{y_\star}\theta_\star,\lambda \theta_\aleph )\ ,\label{fss3p}
\end{equation}
with $\curomega \cequals  2y_\vtpH -d=\curgamma/\curnu$, where $\curgamma$ is given by (\ref{betai4}) and $\curnu\cequals 1/y_\vtmT$, given in  (\ref{nurel}), is the critical exponent of the correlation
length. Asymptotic behavior in a neighbourhood of the critical point, that is when $|\theta_\vtmT \ll 1$,  is then as usual exposed by choosing the
 scale parameter $\lambda\cequals|\theta_\vtmT|^{-1/y_\vtmT}$,  giving
\begin{equation}
\varphi_\vtmT (\theta_\vtmT,\theta_\star,\theta_\aleph) = |\theta_\vtmT|^{-\curgamma}  \varphi_\vtmT(\pm 1,\frakX_\star,\frakX_\aleph )\ ,
\label{scalY}
\end{equation}
where  the $\pm 1$ branches of  $\varphi_\vtmT(\pm 1,\frakX_\star,\frakX_\aleph )$ apply to the cases $\theta_\vtmT>0$ and $\theta_\vtmT< 0$, respectively, and
$\frakX_\star(T,\aleph)\cequals|\theta_\vtmT(T)|^{-y_\star\curnu}\theta_\star$,  $\frakX_\aleph(T,\aleph)\cequals |\theta_\vtmT(T)|^{-\curnu} \aleph^{-1}$ are scaling functions.
 In a similar way, with $\lambda\cequals\aleph$,
\begin{equation}
\varphi_\vtmT (\theta_\vtmT,\theta_\star,\theta_\aleph) = \aleph^\curomega \varphi_\vtmT(\frakX_\aleph^{-1/\curnu},\frakX_\star^{1/y_\star},1 )\ .
\label{scalYa}
\end{equation}
 {\em In the thermodynamic limit\/} $\aleph \to \infty$,  it follows from (\ref{scalY}) that the susceptibility  has the form
\begin{equation}
\varphi_\vtmT(\theta_\vtmT,\theta_\star,0) =A_\vtmT^{(\pm)}(\frakX_\star)  |\theta_\vtmT|^{-\curgamma},
\label{varphiThdLim}
\end{equation}
where the amplitudes
\begin{equation}
A_\vtmT^{(\pm)}(\frakX_\star)\cequals \varphi_\vtmT(\pm 1,\frakX_\star,0),
\label{defa}
\end{equation}
which are, in general,  different for $\theta_\vtmT>0$ and $\theta_\vtmT < 0$, are dependent on $\theta_\vtmT$ by virtue of the presence of the irrelevant scaling field $\theta_\star$.
This contribution  will become small, as $|\theta_\vtmT|^{-y_*\curnu} \to 0$ for $|\theta_\vtmT| \to 0$, eventually becoming negligible for sufficiently small $|\theta_\vtmT|$. The susceptibility will then
display an {\em asymptotic algebraic  singularity\/} of the form
\begin{equation}
\varphi_\vtmT \simeq A_\vtmT^{(\pm)}(0) |\theta_\vtmT|^{-\curgamma}\ ,\quad \mbox{as}~~~|\theta_\vtmT|\to 0\ .
\label{Ypowerlaw}
\end{equation}
 The singularity is a divergence, if $\curgamma > 0$, which is generally the case for response functions.

\vspace{0.2cm}

\noindent  Given that both (\ref{scalY}) and (\ref{scalYa}) are valid, and that a finite statistical mechanical system {\em cannot\/} exhibit non-analytic behaviour, whereas singular behaviour {\em does occur\/} at critical points in the limit of infinite system size, the scaling function $\varphi_\vtmT(\pm 1,\frakX_\star,\frakX_\aleph )$ in  (\ref{scalY}) {\em must\/} exhibit asymptotic behaviour of the form
\begin{equation}
\varphi_\vtmT(\pm 1,\frakX_\star,\frakX_\aleph) \simeq  B_\vtmT^{(\pm)}(\frakX_\star)\,  \frakX_\aleph^{-\curomega}.
\label{varpiAsympt}
\end{equation}
Since the susceptibility has  maxima along the curve $T=\tiT(\aleph)$  of shift temperatures in Fig.\ \ref{Fig6}   these maxima will be in one of the branches of
 $B_\vtmT^{(\pm)}(\frakX_\star)$    with the other branch being a monotonically decreasing function of $\frakX_\star$ in the vicinity of
 $\frakX_\star=0$.   Along the curve of shift temperatures, from (\ref{shifte}),   $\frakX_\star(\tiT(\aleph),\aleph)\simeq C_\rms^{-\curnu y_\star}\aleph^{\curchi\curnu y_\star}\theta_\star$
and $\frakX_\aleph(\tiT(\aleph),\aleph)\simeq  C_\rms^{-\curnu}\aleph^{\curchi\curnu-1}$.  On this curve  $\theta_\star\ne 0$, and if it is supposed that  the two shift functions have the same
asymptotic dependence on $\aleph$, the  shift exponent will be related to $y_\vtmT=1/\curnu$ and $y_\star<0$  by $\curchi=y_\vtmT/(1-y_\star)$ with the shift amplitude
$C_\rms\simeq [\frakX_\star(\tiT(\aleph),\aleph)/\frakX_\aleph(\tiT(\aleph),\aleph)\theta_\star]^\curchi$.

\vspace{0.2cm}

\noindent As already indicated finite-size corrections to the pure power-law behaviour of $\varphi_\vtmT$, as described by (\ref{Ypowerlaw}), will
 begin to be observed  whenever the system is finite
(with $\theta_\aleph\cequals\aleph^{-1} \ne 0$) at the rounding temperature $\scT(\aleph)$.  It has been argued \citep{2-f9}
 that  this is the temperature at which the size $\aleph$ of the system is of the same order as the correlation length $\xi(T)$.\footnote{This can quite easily
 be established
 explicitly  for a one-dimensional Ising model on a ring of $N$ sites, where the magnetization density $m(T,\pH,N)$ is given by
 $m(T,\pH,N)=\tanh[N/2\xi(T)]\times m(T,\pH,\infty)$.}
  It follows from (\ref{excl})
 that $|\titheta_\vtmT(\scT(\aleph),\aleph)|^{-\curnu} \aleph^{-1}\simeq C$, where $C$ is a  constant,
 which establishes, from  (\ref{rounde}), that $C\cequals C_\rmr$ and the rounding exponent $\curtau=1/\curnu=y_\vtmT$
  with $\curomega=\curgamma\curtau$. Thus on the basis of some plausible assumptions we have the condition $\curchi<\curtau$, which, for large systems, motivates the disposition of the curves in
  Fig.\ \ref{Fig6}.
\subsubsection{Renormalization Schemes}\label{wtfn}
The practical implementation of  the renormalization group procedure in Sect.\   \ref{tgs}
involves the choice of a weight function and leads to recurrence relationships between systems  related
by a size parameter $\lambda$, together with a method for calculating the free-energy density which satisfies
the scaling relationship. In (a) and (b) in this section we give examples of the implementation of  two of the most commonly used weight functions
and in (c) we briefly outline a different scheme  which, using transfer matrix methods, relates the correlation lengths of systems of different
sizes.

For $d$-dimensional  lattices, most weight functions are based on a division of
the lattice $\mcN$ into equal blocks of  $\lambda^d$ sites. The mapping from $\mcN$ to $\wtmcN$  is given
 by associating each lattice site $\tbr \in \wtmcN$ with a  blocks of sites in $\mcN$  denoted  by ${\mcB}(\tbr)$.
\begin{enumerate}
\newcounter{wf}
\itemsep=0.3cm
\usecounter{wf}
\def\labelenumi{{\bf (\alph{wf})}}
\item {\bf The decimation weight  function}

\vspace{0.1cm}

\noindent For this weight function the sites of $\wtmcN$ consist of a subset of the sites of $\mcN$,
chosen so that $\wtmcN$ forms  a lattice which is isomorphic to $\mcN$.
So we can take  $\tbr\in\mcB(\tbr)$ with
\begin{equation}
w(\bsigma,\tbsigma)\cequals\prod_{\{\tbr\}}\Krdelta(\tsigma(\tbr)-\sigma(\tbr))\, .
\label{twf1}
\end{equation}
The effect of this is that the summation on  the right-hand side of (\ref{rgt})
is a partial sum over all the sites of the lattice $\mcN$ except those
of $\wtmcN$. For a range of one-dimensional models (including the Ising and Potts
models), which can be solved exactly using transfer matrix methods, exact RSRG decimation transformations
can also be obtained.

\vspace{0.2cm}

\noindent For the one-dimensional ferromagnetic case of the Ising model  it can be shown \citep{2-n12,lavisnew} that the
most convenient variables are not those given in Appen.\ \ref{tim} but rather  $\zeta_1\cequals \tanh([2J+H]/2T)$, $\zeta_2\cequals \exp(-2H/T)$,
and for $\lambda\cequals 2$, with the partial summation
in (\ref{rgt}) over alternate sites, the recurrence relationships
take the form
\begin{equation}
\tzeta_1= {\frac{4{\zeta_1}^2 -(1-{\zeta_2})({\zeta_1}^2-1)}
{4+(1-{\zeta_2})({\zeta_1}^2-1)}},\pairsep
\tzeta_2 = {\frac{{\zeta_2}^2{(1 +{\zeta_1})}^2+{(1-{\zeta_1})}^2}
{2(1+{\zeta_1}^2)}}.\label{odim13}
\end{equation}
It is then not difficult to show that there is a fixed point $\zeta_1=\zeta_2=1$  ($T=H=0$),
with both scaling exponents equal to $d=1$.  As we saw in the discussion of scaling theory
in Sect.\ \ref{scathr}, an exponent equal to the dimension of the system is indicative of the possibility
of a first-order transition. In this case the critical point is at zero temperature on the zero field line,
meaning that the first-order coexistence curve has contracted to a point coinciding with the critical
point at zero-temperature. At this point there is a first-order transition across the zero-field axis with a change of sign of the magnetization.
 It can also be shown that the curve
\begin{equation}
\zeta_2=\left(\frac{1-\zeta_1}{1+\zeta_1}\right)^2,
\label{odim14x}
\end{equation}
which corresponds to the interaction $J$ between microsystems being set to zero,
is invariant under (\ref{odim13}).  At  every point it has exponents $0$ and $-\infty$; the first of these is marginal,
which  indicates that the line consists of fixed points, and the latter that it is `infinitely attractive'  to points not on the line.
The end points of the line $\zeta_1=0$, $\zeta_2=1$  ($H=\infty$, $T=0$) and $\zeta_1=1$, $\zeta_2=0$ ($T=\infty$, $H=0$)
are fixed points in their own right in the invariant subspaces $T=0$ and $H=0$ respectively.  The phase diagram is shown
in Fig.\ \ref{Fig7}.
\begin{figure}[t]
\includegraphics[width=6cm]{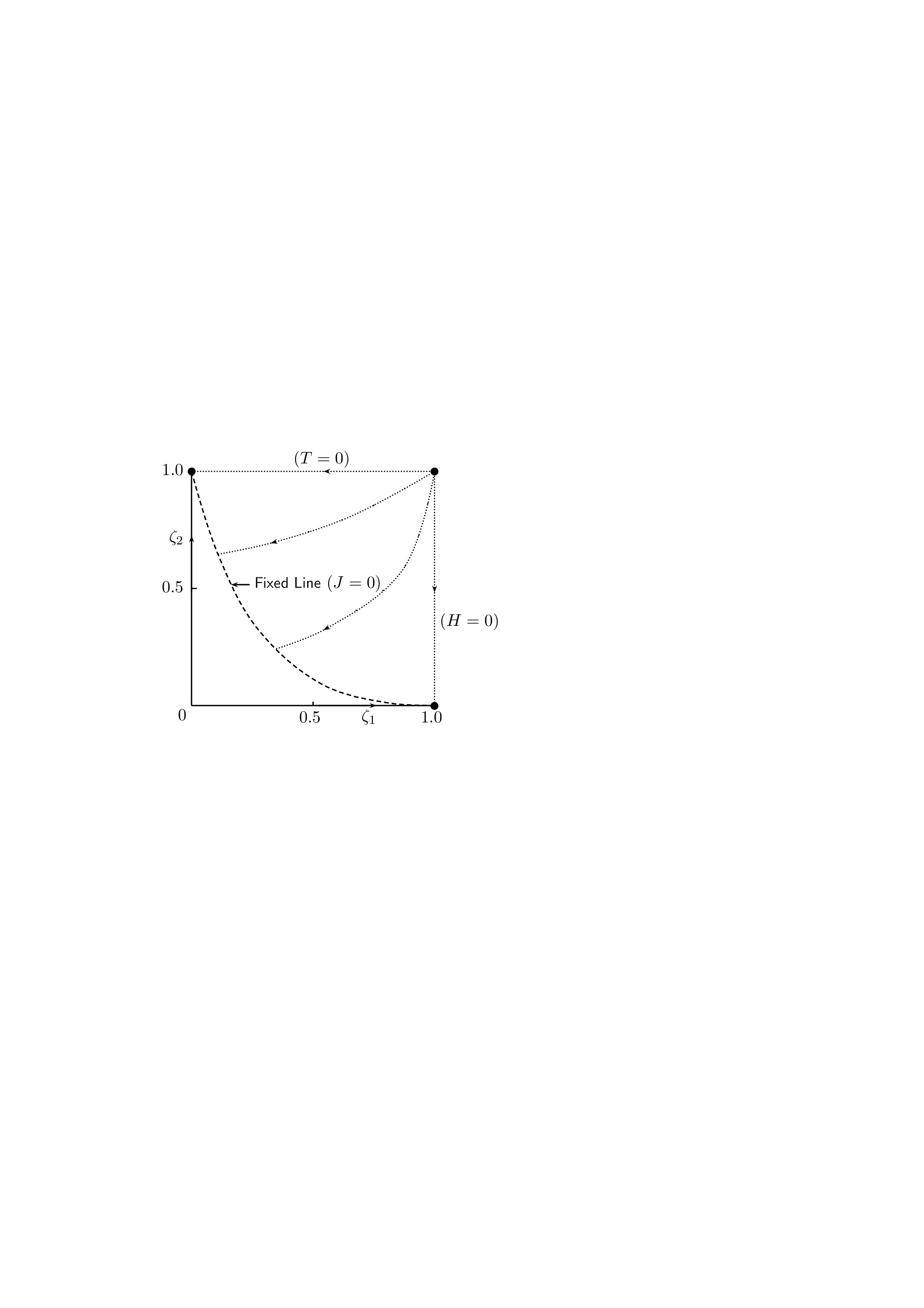}
\caption{The trajectory flows for the renormalization group
transformation of the one-dimensional Ising model.
}\label{Fig7}
\end{figure}

\vspace{0.2cm}

\noindent  Of course, for reasons just explained, the one-dimensional Ising model is less interesting than the two-dimensional model where
the ferromagnetic critical point is not at zero temperature. So, suppose that we try to carry out the same procedure in that case.  A possibility
is to choose blocks of two sites as shown in Fig.\ \ref{Fig8}.  The lattice $\wtmcN$ consists of the black sites and the partial summation
in (\ref{rgt}) is over the spin states on white sites.  But this will create an interaction between the {\em four} sites surrounding each white site.  So we would need to
back-track and increase $n$ from two to three,  inserting this interaction from the beginning.  But this would in turn generate an interaction between nine sites.  And so on.  This
proliferation of interactions is typical of the problems encountered with decimation.  The usual trick is to cut off the proliferation at a certain level.
Such an approximation for this model was investigated by \citet{2-w13}  with a rather poor outcome compared to the known exact results.
\begin{figure}[t]
\includegraphics[width=8cm]{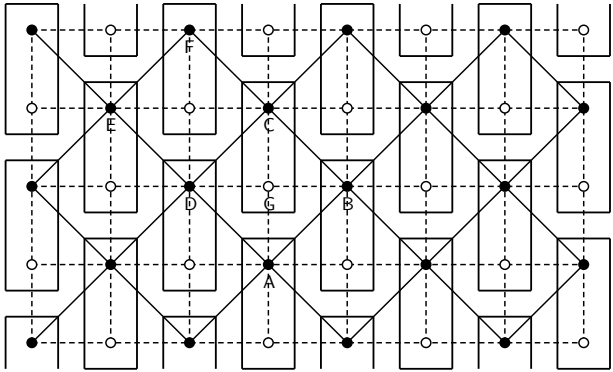}
\caption{Two site blocks for the first-neighbour Ising model
on a square lattice. The lattice $\mcN$ consists of both white and black sites and $\wtmcN$ of only black sites.}\label{Fig8}
\end{figure}
\item {\bf The majority-rule weight function}
\vskip 0.1cm
This weight function was introduced by \citet{2-n13,2-n10}.
The first step in assigning $\tsigma(\tbr)$ for the block
$\mcB(\tbr)$ can be described in terms of the `winner takes
all' voting procedure used in some democracies. Given that each microsystem has $\nu$ states and that among the
sites of $\mcB(\tbr)$ one of the $\nu$ state occurs
more that any other, ${\tsigma}(\tbr)$ is assigned to have
this value. If $\nu\cequals 2$ and the number of sites $\lambda^d$ in a
block is odd this rule works; a case in point being the treatment of the  Ising model
on the triangular lattice with a block of nine sites ($\lambda\cequals 3$) by \citet{2-s16}.
But unless these conditions hold it is clear that the simple majority rule is not sufficient to determine
$\sigma(\tbr)$ for every configuration of the block. A `tie'
can occur in the voting procedure and a strategy must be adopted to
deal with such cases. One possibility is to assign to
$\sigma(\tbr)$ one of these predominating values on the
basis of equal probabilities.
In some cases this may not, however, be the most appropriate choice.
In their work on the Ising model using a square
first-neighbour block  ($\lambda\cequals 2$) \citet{2-n14} divided the
configurations with equal numbers of up and down spins between
block spins up and down with equal probabilities. The rule (one of four)
which they chose ensured that the reversal of all the spins in the
block reversed the block spin.
\item {\bf Phenomenological renormalization}
\vskip 0.1cm
The idea of finite-size scaling,  introduced in Sect.~\ref{fssc}, leads quite naturally \citep[][Sect.\ IV]{2-b19}
to the  RSRG method  developed by \citet{2-n16}.
The essential feature of finite-size scaling  is that, for a $d$-dimensional system,    infinite in $\frakd$
dimensions and of  thickness $\aleph$, the quantity $1/{\aleph}$ is treated as an additional scaling field $\theta_{\aleph}$.  If attention is restricted
 to the simple magnetic system with the two other scaling fields $\theta_\vtmT$ and $\theta_\vtpH$,
 the response function $\varphi_\vtmT$  satisfies the scaling relationship (\ref{fss3p}).
 A similar inclusion of $\theta_\aleph$ in the scaling relationship (\ref{corlsc})  for the correlation length gives
 \begin{equation}
\curxi(\lambda^{y_\vtmT}\theta_\vtmT,\lambda^{y_\vtpH}\theta_\vtpH,\lambda\theta_\aleph)=\lambda^{-1}\curxi(\theta_\vtmT,\theta_\vtpH,\theta_\aleph).
\label{corlsfc}
\end{equation}
With the slight change of notation  ${\curxi}^{({\aleph})}(\theta_\vtmT,\theta_\vtpH)\cequals
{\curxi}(\theta_\vtmT,\theta_\vtpH,\theta_\aleph)$,  (\ref{corlsfc}) can be regarded as relating
the correlation lengths of two similar systems denoted by ${\mcL}_{\frakd}({\aleph})$
and ${\mcL}_{\frakd}(\widetilde{\aleph})$ with couplings $\zeta_\vtmT,\zeta_\vtpH$ and $\tzeta_\vtmT,\tzeta_\vtpH$ and thicknesses
$\aleph$ and $\widetilde{\aleph}\cequals \aleph/\lambda$,   $\lambda>1$,  respectively.
 The relationship  (\ref{corlsfc}) can be reexpressed as
\begin{equation}
{\curxi}^{({{\aleph}})}(\theta_\vtmT,\theta_\vtpH)
=\lambda{\curxi}^{({\tilde{\aleph}})}({\tilde\theta}_\vtmT,{\tilde\theta}_\vtpH),
\label{pr2}
\end{equation}
where
\begin{equation}
{\tilde\theta}_\vtmT=\theta_\vtmT({\tilde\zeta}_\vtmT,{\tilde\zeta}_\vtpH)= \lambda^{y_\vtmT}\theta_\vtmT(\zeta_\vtmT,\zeta_\vtpH),\tripsep
{\tilde\theta}_\vtpH=\theta_\vtpH({\tilde\zeta}_\vtmT,{\tilde\zeta}_\vtpH)= \lambda^{y_\vtpH}\theta_\vtpH(\zeta_\vtmT,\zeta_\vtpH)
\label{pr3}
\end{equation}
relate the scaling fields  for ${\mcL}_{\frakd}({\aleph})$ and ${\mcL}_{\frakd}(\widetilde{\aleph})$.
These relationships
form the basis of  \citeauthor{2-n16}'s  phenomenological renormalization  method, where the correlation lengths for systems of the two widths are obtained
from transfer matrix calculations using  (\ref{atm43}).  In the case of one scaling field ($\pH=0$) the method yields the critical temperature fixed point
$\theta^\star_\vtmT\cequals\theta_\vtmT=\tilde{\theta}_\vtmT$
 and the thermal exponent for a number of different models \citep{2-n16,2-n17,2-s21,2-k14}, which in the case where exact results are know are at a
  high level of accuracy.\footnote{However, if  there is more than one coupling ($\pH\ne 0$) then substituting $\theta^\star_\vtmT\cequals\theta_\vtmT=\tilde{\theta}_\vtmT$
  and $\theta^\star_\vtpH\cequals\theta_\vtpH=\tilde{\theta}_\vtpH$ in  (\ref{pr2}), does not completely define the critical point.
 A number of methods for dealing with this case have been developed, \citep{2-s21,2-k14}}
\end{enumerate}
\subsection{The Thermodynamic Limit}\label{tlnism}
In the development of statistical mechanics represented by the right-hand column in Fig.\ \ref{Fig1} system size appears twice. Firstly in the passage for $\textsf{SM1}$ to $\textsf{SM2}$,
where the system becomes large yielding approximate extensivity.  This is needed for the discussion of finite-size phase transitions represented by $\textsf{SM5}$.
Secondly in the other branch from $\textsf{SM2}$, via the thermodynamic limit, to an infinite system represented by $\textsf{SM3}$.
  This entails the identification of the infinite statistical mechanical  system \textsf{SM3} with thermodynamics, or at least the version, labelled  \textsf{TD3} in Fig.\ \ref{Fig1},  of thermodynamics with
some PTCP defined.  But are \textsf{SM3} and \textsf{TD3} actually identical?  The answer is clearly `no'.
\textsf{TD3}  is the result of a development in the left-hand column of Fig.\ \ref{Fig1}, from the basic structure through the assumption of extensivity
to a grafting on of a picture of PTCP, in the manner of \citet{1-p1} or \citet{2-b2}.   On the other hand, as we have just indicated, \textsf{SM3} is the result of a statistical
mechanical development in the right-hand column in Fig.\ \ref{Fig1}.  It retains its microstructure with a probability distribution, and in most cases it is the result of the implementation of the thermodynamic
limit for a particular model, the most well-known examples being  the two-dimensional zero-field  Ising model and the eight-vertex model.
 Thus it should be recognised for later reference (see Sect.\ \ref{limred})  that
this way of understanding the relation between thermodynamics and statistical mechanics involves the unwarranted  \emb{conflation of two quite different pictures}.  Although one can argue that \textsf{SM3} is an enrichment of \textsf{TD3}, since the former has all the features
of the latter together with the extra ones provided by microstructure and precise results concerning critical values and exponents.
 That having been said, one may still question whether the thermodynamic limit is:\footnote{A third possibility that the thermodynamic limit is neither necessary nor useful can surely be discounted, with
  respect to usefulness, after a cursory survey of the corpus of work on statistical mechanics.}
\begin{enumerate}[(1)]
\item Necessary, in principle,  because statistical mechanics is not complete without it.
\item Useful because  calculations become much simpler in the thermodynamic limit and the relationship  \FSM--\refsf{fsm-03} of \textsf{SM3} to \textsf{TD3} makes it
easier to identify the order of phase transitions.
\end{enumerate}
Although both of these possibilities  deserve consideration it is the the first which has received the most attention, principally because of the  role of the thermodynamic limit
in the understanding of PTCP; this will be discussed in detail is Sect.\  \ref{nftl}.

In this work we propose, in Sect.\ \ref{ourp},  a particular view of the usefulness of the thermodynamic limit in the context of phase transitions in finite systems.  However, it
is pertinent  to note the range of possible circumstances calling for the use of the thermodynamic limit.  In particular one might suppose an additional
kind of necessity interposed between the two items in our list:

\vspace{0.2cm}

\noindent (1a)  Necessary in practice,  because calculations for particular models are not tractable without its use.

\vspace{0.2cm}

\noindent However, of course, tractability, and hence necessity in practice, is ephemeral, evolving (one might hope) with an increase in computing power and technical ingenuity  into mere usefulness.
\subsubsection{Phase Transitions in Infinite Systems}\label{nftl}
The argument for the necessity  in principle of the thermodynamic limit for PTCP effectively  involves asserting the truth of the contradictory set of propositions:
 \begin{description}
\item[\textbf{P--IA:}]  PTCP occur in nature.
\item[\textbf{P--IB:}]  PTCP occur in nature as discontinuities in densities (first-order transitions)
and as singularities in response functions  (higher-order transitions).\footnote{This latter group also includes other sorts of weaker
singular behaviour.}
\item[\textbf{P--IIA:}] PTCP in thermodynamics are defined by  singularities in derivatives of first or higher order
in the free energies and are treated as such using scaling theory.
\item[\textbf{P--IIB:}] PTCP must necessarily be represented in thermodynamics by  singularities.
\item[\textbf{P--IIIA:}]  PTCP should be able to be modelled in statistical mechanics.
\item[\textbf{P--IIIB:}]  PTCP should be modelled in statistical mechanics in the same way that they are  in thermodynamics.
\item[\textbf{P--IV:}] Real systems are of finite size.
\item[\textbf{P--V:}] Thermodynamic functions for finite systems in statistical mechanics are regular functions.
\item[\textbf{P--VI:}] Thermodynamic functions for infinite systems in statistical mechanics can show singularities.
\end{description}
For later use it is relevant to compare this list with that of \citet[][p.\  589]{cal1} \citep[repeated by][pp. \ 13--14]{main2}:
\begin{description}
 \item[\textbf{CP--I:}] Real systems have finite [size].
 \item[\textbf{CP--II:}] Real systems display phase transitions.
 \item[\textbf{CP--III:}] Phase transitions occur when the partition function has a singularity.\footnote{In relation to this statement, see footnote \ref{zeros}.}
 \item[\textbf{CP--IV:}] Phase transitions are governed/described by classical or quantum statistical mechanics (through [the partition function]).
 \end{description}
A number of items in our list are indisputable and are not included in \citeauthor{cal1}'s list:
\begin{enumerate}[$\bullet$]
\item That PTCP are defined in thermodynamics  by singularities, can be confirmed
 by a visit to the thermodynamics section of any academic library ({\bf P--IIA} is true).
Whether  it is necessary for thermodynamics to be formulated in this way
 (that   {\bf P--IIB}  should be accepted),  given a possible denial  that   PTCP occur in nature as singularities
(that {\bf P--IB} is true)  is a different question.
\item The joint assertions that thermodynamic functions  are regular for finite systems but can have singularities for infinite systems
(included in our list as {\bf P--V} and {\bf P--VI}, respectively, but not contained  in   \citeauthor{cal1}'s list)  are facts
about the mathematical structure of statistical mechanics which cause the total list to be contradictory.
\end{enumerate}

\noindent And on \citeauthor{cal1}'s list:
\begin{enumerate}[$\bullet$]
\item It is difficult to argue that phase transitions do not occur in real systems (that  {\bf P--IA (CP--II)} is false), although it is plausible to deny
 that they arise as some kind of singularities (to argue that {\bf P--IB} (not in \citeauthor{cal1}'s list) is false),
 on the grounds that a first-order transition (say that between liquid water and water vapour) may look like a sudden change of density,
  but on closer observation would turn out to be a very steep continuous change.  Likewise, apparent singularities in compressibility
  in fluids and susceptibility in magnets may just  be very steep maxima.
\item It is also difficult to argue that real systems are not finite (that  {\bf P--IV  (CP--I)} is false),
 given that no system takes up the whole of the universe.\footnote{Which, in any event, may be finite.} A sort of argument  could be
 constructed on the basis that no system is completely isolated, but this would mean accepting
the need for computation, not with an infinite system as envisaged here, but with a system joined to a complicated and largely undetermined environment.
\item If the ability  to model PTCP  were not deemed to be a necessary part of statistical mechanics ({\bf P--IIIA   (CP--IV)} is rejected), then most of the work on statistical mechanics in the last half century
and more would be pointless. It is, however, relevant here to mention the work of the late Ilya Prigogine \citep[in particular,][]{prig7}. Although, in a sense he accepts {\bf P--IIIA},
it is a radically different form of statistical mechanics that he has in mind. From the assertion that ``[a]s long as we consider merely a few particles, we cannot say if they form
a liquid or gas'' (ibid, p.\ 45) he concludes that  ``[s]tates of matter as well as phase transitions are ultimately defined by the thermodynamic limit. $\ldots$
Phase transitions correspond to emerging properties. They are meaningful only at the level of populations and not of single particles'' (op. cit).
This entails for him the reformulation of statistical mechanics so that the underlying dynamics in not that of trajectories but of measure.\footnote{This being the approach that he
and his Brussels group also used to resolve the problem of irreversibility \citep[see, for example,][]{prig2}.}
\end{enumerate}
There remain {\bf P--IB}  and {\bf P--IIIB}, which together with {\bf P--IIB} is equivalent to {\bf CP--III}, and we now consider the consequences of denying one or both of them.
\begin{enumerate}[(i)]
\item If {\bf P--IB} is accepted, that is that PTCP in nature do occur as singularities,   then it is clearly necessary for thermodynamics to represent them in this way; {\bf P--IIB} must be accepted. Then we seem to be driven toward the conclusion that statistical mechanics should model them in the same way (that is the acceptance
of {\bf P--IIIB}) which leads back to the contradiction.  This  is avoided by  denying {\bf P--IIIB}. Then PTCP can be modelled in statistical mechanics without singularities, by, for example, transfer matrix methods,
while at the same time admitting that this is not the situation in reality.
\item   If {\bf P--IB} is denied then it can be argued either:
\begin{enumerate}[(a)]
\item That it  is not necessary for thermodynamics to model PTCP as singularities ({\bf P--IIB} is false).  In this case  {\bf P--IIIB} can be accepted,  with PTCP modelled without
singularities in statistical mechanics, with thermodynamics reformulated to do the same.\newline
\vspace{-0.25cm}
\hspace{-0.7cm}or
\vspace{0.5cm}
\item  That in statistical mechanics PTCP should be modelled without singularities, but because for large systems steep maxima in response functions and steep changes in densities
look very much like singularities and discontinuities, it is still necessary (on the grounds of tractability and simplicity) to model PTCP in thermodynamics as singular behaviour; {\bf P--IIB} is accepted and
 {\bf P--IIIB} is rejected.\footnote{To preview Sect.\ \ref{ourp}, this is the position we shall defend.}
\end{enumerate}
\end{enumerate}
 So given that all of  {\bf P--I} to {\bf P--VI} are accepted  is there any way out of the paradox?  One radical approach, which has already been noted,  is that  due to \citeauthor{prig7},
 where statistical mechanics is reformulated to `build in' the thermodynamic limit.\footnote{This involves an extension of the \citet{koop} formulation to a space beyond the Hilbert
 space in which it is set.}  Somewhat similar, but less radical, is the approach of Robert \citeauthor{bat9}, a philosopher of physics who has written extensively  on questions related
  to phase transitions, the renormalization group and the thermodynamic limit \citep{bat9,bat11,bat5,bat6,bat10,bat12}.  Rather than formulating a novel form of the mechanics underlying
 statistical mechanics,  his argument,  following the lead of \citet{kad2}, is that the renormalization group is itself a novel approach, revolutionary in the sense of \citet{kuhn},  which has the thermodynamic  limit built in.  His starting point  is that thermodynamics\footnote{For reference in the summary (a)--(f) of his position  on the renormalization group in Sect.\  \ref{tl&rg} the
  quotations from \citeauthor{bat11}'s work are given labels \bfA--\bfF.}
  \begin{quote}
 `` is correct to represent [phase transitions] mathematically as singularities.''   (\bfA: \citealt[][p.\ 234]{bat11}.)
\end{quote}

 And:
\begin{quote}
 ``Further, without the thermodynamic limit, statistical mechanics would completely fail to capture a genuine feature of the world. Without the thermodynamic limit,
   in fact, {\em statistical mechanics is incapable even of establishing the existence of distinct phases of systems.''} (\bfB: op. cit.)\,\,
\end{quote}
  If there is any doubt about his view of real systems, this is dispelled by
his forthright assertion that he wants
\begin{quote}
``to champion the manifestly outlandish proposal that despite the fact that real systems are finite, our understanding of them and their behaviour requires, in
a very strong sense, the idealization of infinite systems and the thermodynamic limit.'' (\bfC:  ibid, p. 231.)
\end{quote}
`Outlandish' or not his position is one which would appear, in our experience, to be that adopted implicitly or explicitly
 by many working physicists, including, albeit in a radical sense as indicated above, by  \citeauthor{prig7}, and \citet[][p.\ 238]{kadbk},\label{kadquote1} who asserts that the `` existence of a phase
 transition requires an infinite system. No phase transitions occur in systems with a finite number of degrees of freedom.'' \citeauthor{kad1} calls this the ``extended singularity theorem''
\citep[][pp.\  154--156]{kad2}  because ``these singularities have effects that are spread out over large regions of space'' \citep[][p.\  24]{kad1}.
Having asserted that
\begin{quote}
 ``the idea that we can find analytic partition functions that ``approximate'' singularities is mistaken, because the very notion of approximation required fails to make
sense when the limit is singular, [which it is in this case  because the]
 behaviour at the limit (the physical discontinuity, the phase transition) is {\em qualitatively} different from the limiting behaviour as that
limit is approached.'' (\bfD: ibid, p.\ 236)
\end{quote}
\citeauthor{bat11}'s  proposal for resolving the puzzle is to resort to the renormalization group. In the next section this possibility  is examined.
\subsubsection{Infinite Systems and the Renormalization Group}\label{tl&rg}
`Infinity'  as it arises in accounts of renormalization group methods consists not so much in the limiting process, evident in, say,  \citeauthor{1-o2}'s solution
of the zero-field two-dimensional Ising model, whereby the dimensions of the system are taken to infinity, but rather in the perception that to make the method
intelligible one must be working with a system which is already infinite \citep{pala2,pala1}.  To spell this out, a renormalization group scheme consists of the following:
\begin{enumerate}[(i)]
\item In the space of couplings (or of functions thereof)  a semigroup of transformations is derived which generates recurrence relationships under which
any critical regions are invariant.
\item In this `dynamic system'  the critical regions are the basins of attraction of critical fixed points. And there are sinks associated with non-critical regions (phases)
of the system.
\item A critical fixed point determines the universality class of the system at each point in its basin of attraction,  with an associated set of critical exponents.
\item In general a system may be able to be in more than one universality class  determined by the symmetry group of the Hamiltonian when there is a particular relationship
between the couplings.\footnote{ For example, a renormalization scheme for the spin-1 Ising model will have fixed points for both the spin-$\frac{1}{2}$ Ising model
 and the 3-state Potts model.}
\end{enumerate}
It is clear that this way to do statistical mechanics is very different from the standard procedures (mean-field and other classical approximations,  series expansions and exact solutions).
So much so that, as we have already indicated,  it is characterized by \citet{kad2} as a \citeauthor{kuhn}-type revolution,  a view endorsed by \citet{bat10}.  The argument presented by \citeauthor{bat10} concerning the whole question of singularites/real singular systems/the thermodynamic limit  needs to be carefully rehearsed  and for this his  \citeyear{bat10} tribute to Leo \citeauthor{kad2} provides
the clearest account.

 He presents his view in contradistinction to that of Jeremy \citeauthor{butt2} who contends \citep[][p.\ 1077]{butt2} that: ``The use of the infinite limit $\ldots$ is justified,
despite $N$ being actually finite, by its being mathematically convenient and empirically correct (up to the required accuracy).''  For an understanding of \citeauthor{bat10}'s view
two quotes are particularly useful.  In the first he asserts that:
\begin{quote}
  ``the RG is not just a theory of the critical point, but rather it is a {\em theory of the critical region.} And this covers large but finite systems. So contrary
to the line of reasoning presented [by \citeauthor{butt2}] the explanation of the behaviour of real finite systems requires the use of mathematical infinities, but does not require there to
{\em be} infinite real systems.'' (\bfE:  \citealt[][p.\ 571]{bat10}.)
\end{quote}
At this point we have cause to be grateful to a referee of his paper, who objected that this quote was actually in line with ``the claims of those supporting the idea that real phase
transitions aren't sharp''.  In response to this \citeauthor{bat10} added a footnote in which he clarified his position in the following way:
\begin{quote}
 ``It seems to me that if one is going to hold that the use of the infinite limits is a convenience, then one should be able to say how (even if inconveniently) one might go about finding a fixed
point of {\em the RG transformation} without infinite iterations. I have not seen any sketch of how that is to be done. The point is that the fixed point, as just noted, determines the behaviour of the
flow in its neighbourhood. If we want to explain the universal behaviour of finite large systems using the RG, then we need to find a fixed point and,
 to my knowledge, this requires an infinite system.''(\bfF: op. cit.)
\end{quote}
So to summarize his view (using the labelled  quotes \bfA--\bfF,  given above):
\begin{enumerate}[(a)]
\item Phase transitions are real discontinuities in experimental systems (\bfA). [An acceptance of {\bf P--IA,B} and {\bf P--IIA,B}].
\item  The thermodynamic limit is needed in statistical mechanics to exhibit phase transitions (\bfB). [An implicit acceptance of {\bf P--V} and {\bf P--VI} and an endorsement of {\bf  P--IIIA,B}.]
\item Real systems are finite but in order to understand them we need the idealization of infinite systems and the thermodynamic limit (\bfC). [An endorsement of {\bf  P--IV}, and more.]
\item The idea that the study of large systems can play a role here is wrong because the properties of large systems and infinite systems are qualitatively different (\bfD).
\item  To represent the situation correctly we need to engage with mathematical singularities but not real infinite systems (\bfE).
\item  We need infinite iteration (of the RG transformation) to obtain fixed points (and all the information they provide) (\bfF).
\end{enumerate}
And to summarize the summary of \citeauthor{bat10}'s position:
\begin{quote}
\emb{Although phase transitions in real systems are accompanied by singular behaviour, and in statistical mechanical models this singular behaviour is exhibited
only by infinite systems, we don't need infinite systems, just the use of mathematical singularities, these being required to derive the fixed points in renormalization group calculations.}
\end{quote}
At this point we wish to challenge the last part of this statement by providing\footnote{Based on practical experience \citep{y&l,s&l1,s&l2,l&s&b,l&so}.}
the `sketch' that \citeauthor{bat10} (quote \bfF) requires of the means of the determination of renormalization group fixed points.

\vspace{0.2cm}

\noindent  The first thing to note is that the recurrence relationships (\ref{recr12}) and (\ref{ecr0})  are derived (almost always with some
approximations involved) between the couplings of two {\em finite} systems with
sizes $N$ and $\wtN$ with $N/\wtN=\lambda^d>1$.  Once this is done no point in the space of couplings is intrinsically associated with a system of a particular size and, by the same
token,  fixed points,  obtained from (\ref{recr12}) with  $\tzeta_j=\zeta_j$,   are not associated with infinite systems.
However, {\em if we were to  choose} to associate a particular system-size $N$ with the first point of a trajectory,
 it would be necessary to assume only that we are working with  a system  large enough to allow the required number of
 iterations.\footnote{Here we are agreeing with \citeauthor{bat10} that an infinite system is not required.} (Hence the inclusion of  \textsf{SM2} in the path from    \textsf{SM1}
  to \textsf{SM4} in Fig.\ \ref{Fig1}.)
As \citet[][p.\ 222]{norton2} says, fixed points are the ``limit points'' of the sequences generated by the recurrence relationships; the ``mathematical pegs on which to hang limit properties''  which are
never reached in a finite number of iterations. They do not arise from an investigation of the properties of infinite limit systems, and, although they are properties of the transformation, iteration is not always needed for their determination.
 In some simple cases, like the one-dimensional Ising model described in Sect.\ \ref{wtfn},  the fixed points can be extracted by direct analytic solution of the fixed point equations.  But, in more complicated cases numerical computation comes into play.  Although in principle iteration of the recurrence relationships starting from a point in the basin of attraction of a fixed point will generate a
sequence of points approaching the fixed point, this is not usually a viable strategy for their determination. Since those of greatest interest,  associated with
critical regions, have both irrelevant directions of attraction within the critical region (the basin of attraction) and relevant directions along which the trajectory is driven away from the critical region.
Except in special cases it is difficult to start a trajectory in a critical region, but nearby points are useful and possible.  Then the trajectory will hover near the critical fixed point before it moves away
to the sink associated with the phase containing the trajectory.  These `hover points' can be spotted by inspection of the computer output and used as initial guesses for a numerical solution of the fixed
point equations. These kinds of numerical techniques, used also to map the critical regions themselves, provide a good picture of the whole phase diagram.  And linearization of the recurrence
relationships about the fixed points allows the critical exponents to be obtained.
\subsection{Phase Transitions in Finite Systems: Mainwood's Proposal}\label{ptfs}
Given, as we have concluded in the previous section,  that the thermodynamic limit is not necessary to enable renormalization group calculations
to provide the PTCP structure, is it still useful  in other statistical mechanical treatments of PTCP?
  An assessment of usefulness, as distinct from necessity,  is obviously  heavily influenced by the position adopted with respect to whether PTCP occur in nature as singularities ({\bf P--IB}).
 If  it is false and real systems, by virtue of their size ($\sim 10^{23}$ microsystems) exhibit behaviour approximating to singular behaviour, in the sense, say, that the maximum in the compressibility of a fluid is experimentally indistinguishable from a singularity, then we have the means to remove the contradiction in the set of statements at beginning of Sect.\ \ref{nftl}.
One way would be to deem it  unnecessary for  PTCP to be treated as singularities in thermodynamics (a denial of  {\bf P--IIB}).
 Although this would allow thermodynamics and statistical mechanics to be modelled in the same way (for {\bf P--IIIB} to be accepted)
 we would argue,  for the reasons given in Sect.\  \ref{ourp}, that it is not a tenable possibility.

The alternative, which is the one discussed in this section, and which is favoured by ourselves, is to accept that thermodynamics
 must represent PTCP in terms  of singularities ({\bf P--IIB}) on the basis that this is
 an appropriate  approximation to real systems. Thus rejecting the assertion that thermodynamics and statistical mechanics must model PTCP in the same way ({\bf P--IIIB}), since statistical mechanics   models phase transitions in finite systems.
Given that real systems are very large (in terms of the number of microsystems) and finite, with phase transition giving the appearance, but not the exact reality of singularities,
 can calculations avoid using the thermodynamic limit? Or,   more generally can recourse to a system where PTCP occur as singularities be avoided? Here  we examine
a proposal of  \citet{main2} which definitely answers the question in the negative and in the next section we propose  an answer which  is more nuanced.

\vspace{0.2cm}

\noindent The definition of a phase transition  provided by \citeauthor{main2} (ibid, p.\ 28)  can\footnote{With some changes of notation to give conformity with our usage.}  be described in the following way.
    For a statistical mechanical system $\frakS_\tmN$    of size $N$ with partition function $Z_2(\zeta_1,\zeta_2,N)$, the free energy $\Phi_2(\zeta_1,\zeta_2,N)$ is given
by (\ref{gibfe})  and satisfies (\ref{althsp31}) and (\ref{althsp41}).\footnote{We have chosen the system with two independent couplings for convenience.}  Suppose that the thermodynamic limit
\begin{equation}
\lim_{N\to\infty}\frac{\Phi_2(\zeta_1,\zeta_2,N)}{N}= \phi_2(\zeta_1,\zeta_2)
\label{entdenlfe}
\end{equation}
exists, with $\phi_2(\zeta_1,\zeta_2)$  the free-energy density of the system $\frakS_\infty$. Then:
\begin{definition}\label{mdeAf}
$(\zeta_1,\zeta_2)$  is a point with a particular criticality for $\frakS_\tmN$ iff  $(\zeta_1,\zeta_2)$ is a point where $\frakS_\infty$ has a singularity associated with this same
criticality.
\end{definition}
  And  \citeauthor{main2} (ibid, p.\ 29) asserts that:\footnote{\label{zeros}In relation to both this assertion and CP--III, the following quibble might not be out of place.
The thermodynamic limit is taken for {\em thermodynamic functions}  which are approximately extensive for large systems and become extensive in the thermodynamic
limit \citep{griff3}.  The partition function is not of this sort, as one can see by using a little `reverse engineering' to define the partition function of $\frakS_\infty$
as $Z_\infty(\zeta_1,\zeta_2,N)\cequals \exp\{-N\phi_2(\zeta_1,\zeta_2)\}$.   Apart from the retained dependence on $N$, a
 singularity, which is an infinity of $\phi_2(\zeta_1,\zeta_2)$ would be a zero of $Z_\infty(\zeta_1,\zeta_2,N)$.  In fact this brings to the fore a problem with CP--III.  Phase transitions
 do not correspond to points ``when the partition function has a singularity''.  For (say a lattice system) the partition function is, for a finite system, a polynomial whose zeros give singularities
 of the free energy, none of which lies on the positive real axis.  In the thermodynamic limit a phase transition corresponds to a point of accumulation of zeros on the real axis.
 The quibble is resolved by replacing `partition function' by `free energy' in CP--III and \citeauthor{main2}'s assertion.}
\begin{quote}
``Rather surprisingly, using this definition it is possible to hold on to all of \citeauthor{cal1}'s four statements [(given above as CP--I to CP--IV)] without contradiction; though only
in a Pickwickian sense --- it is a ``trick'' possible only due to his choice of wording. Namely, the singularity referred to in [CP--III] is one not in the partition function [of $\frakS_\tmN$]
but in [the partition function of $\frakS_\infty$].
\end{quote}
If this is regarded as a positive point in favour of \citeauthor{main2}'s definition, the overall conclusion seems to be more mixed.
 \citeauthor{main2} `worries' that:\footnote{It is convenient to take his worries in reverse order.}
\begin{enumerate}[(1)]
 \item The definition means that a phase transition can be predicted in a finite system, however small it might be  (ibid, p.\ 32).
\item  ``While there exist standard procedures for taking the thermodynamic limit, $\ldots$ these procedures are human inventions, and choices could
 have been made differently. $\ldots$ The definition of a phase transition thus seems arbitrary in a disastrous sense: we can choose whether one is occurring
 or not by modelling it differently, or taking the limit according to a different scheme'' (ibid, p.\ 31).
\item ``[T]he facts we need to decide whether or not [a physical system] is undergoing a phase transition should be physical facts, about actual
states of affairs $\ldots$ They should not exist only in an idealized model on a theoretician's blackboard'' (ibid, p.\ 29).
\end{enumerate}
Although \citeauthor{main2} adds (1) as a final difficulty it is probably the one which would first spring to mind, since the definition would imply a phase transition in
an Ising model of four spins in a square at the critical temperature given by \citeauthor{1-o2}'s solution.  \citeauthor{main2} thinks that ``this bullet can and should be
bitten'' (ibid, p.\ 32),  but the consequences are not, we think, ones which would recommend themselves to any working physicists;  not to put too fine a point on it, they
would bring chaos to discussions of critical phenomena.  The tractable alternative, also suggested by \citeauthor{main2}, is to restrict
 the definition to large systems.\footnote{This is also discussed by \citet{ard1}, who proposes to use the Lee--Yang formulation of phase transitions in terms of the zeros of the partition function to describe, and to understand, the emergence of anomalies of thermodynamic functions in terms of accumulations of Lee--Yang zeros in the vicinity of the critical temperature  on the real axis in the complex temperature plane.  While this does indeed provide a useful intuition, it is not substantially different from exactly solving the statistical mechanics of finite systems, and does not by itself allow us to predict the way in which anomalies approach singularities of the infinite system, when the system size is increased.}
This would seem to us to be an inevitable step, but it also has consequences which we discuss in more detail below.

At one level both (2) and (3) are examples of the standard concern with respect to modelling, namely that we may not have a very good model which is not giving
results which agree with experiment.  And \citeauthor{main2}'s response to this  is, as would be expected, that we should find a better model.  But worry (3)
also contains a second element, namely that his definition contains the use of a counterfactual,  an idealized infinite model. His argument here is more complex and draws
on a strong parallel with \citeauthor{lewd2}'s  (\citeyear{lewd2}) analysis of counterfactuals. On this basis he argues that
\begin{quote}
 ``it is the character of [the real finite system] that determines the nature of the infinite system that we then consider. When we draw conclusions about the nature of the phase transitions, they are
 conclusions about the character of [the real finite system], but by reference to the infinite model we can express them in a concise and illuminating form'' (ibid, p. 30).
\end{quote}
However we have worries of our own which do not seem to concern \citeauthor{main2}.  These can best be described by  considering the transfer matrix treatment in Sect.\ \ref{extls},
where, if we restrict attention to the two-dimensional square-lattice spin-$\frac{1}{2}$ Ising model in zero field, the exact critical temperature is known for the  model on an infinite lattice
(see Appen.\ \ref{tim}).
To apply the transfer matrix method (see Sect.\ \ref{extls}), the square lattice is taken to have $N_{\tH}$ sites in the horizontal direction and $N_{\tV}$ sites in the vertical direction, so that $N=N_\tH N_\tV$.
  Periodic boundary conditions are applied so that the lattice forms a torus with  horizontal rings of $N_\tH$ sites and  rings in a vertical plane of $N_\tV$ sites.
It is assumed that the system is large in the horizontal direction, so that,  parameterized by $N_\tV$, we have a sequence of one-dimensional models of increasing complexity.
 Each  exhibits a maximum in the heat capacity, including the simplest case  $N_\tV=1$   \citep[][p.\ 166]{1-d4}.\footnote{Although we have shown by an exact renormalization group method in
 Sect.\ \ref{wtfn} that when $N_\tV=1$ the critical fixed point is at zero temperature.}
These maxima (although they will differ slightly for all $N_\tV$ however large and finite) are taken as incipient singularities\footnote{As we have indicted in Sect.\  \ref{extls} maxima
in other response functions and also the behaviour of the ratio of the two largest eigenvalues of the transfer matrix can also be used as identifiers of incipient singularities.}
 and for increasing $N_\tV$ show good agreement with the \citeauthor{1-o2} result, which is the case $N_\tV=\infty$.

 However, the prescription to be applied by the \citeauthor{main2} proposal
is that their critical temperatures, {\em for all} $N_\tV$, are the \citeauthor{1-o2} value.  This would seem to us to reverse the order of the way of working of physicists.
We think it is probably true to say that, with notable exceptions like   \citet{kad3,kad1,kad2}, physicists involved in model calculations do not consider whether their interest is in very large systems
 or infinite systems.
  Their concern is whether a phase transition occurs. If they suppose that it does, one tool\footnote{Among others, including duality transformations and series expansions.} to determine its location is to use transfer matrix calculations
 \citep{RC1,RCS1,BN1,OB1,OCB1,lavis10}. The method is to determine incipient singularities for as large a vertical width of system as possible as an
estimate for the transition temperature for a very large/infinite width.  Here one cannot use \citeauthor{main2}'s prescription to assign the infinite-width result to the finite-width systems,
since the former is not known.\footnote{Although one can, in some cases, prove the existence of a phase transition, even if the transition temperature
is not known \citep{2-p8,2-h5,2-h6,2-h7}.}  When, as in the case of the zero-field spin-$\frac{1}{2}$ Ising model, the infinite-width result is known exactly or has been
 determined to a good approximation by series methods, the motivation for determining finite-width results is to test the efficacy of the method, or to cross-check with other results.

 In his discussion of \citeauthor{main2}'s proposal \citet[][p.\ 1130]{butt2} states it in a more restricted form. Again using our notation this is:
 \begin{definition}\label{mbdef}
A phase transition occurs in $\frakS_\tmN$  iff   $\frakS_\infty$ has non-analyticities.
\end{definition}
 This \citeauthor{main2}--\citeauthor{butt2} proposal has the advantage that it doesn't project a result from the infinite system onto finite systems of any size (or maybe onto just large-size systems).
 However, given that it asserts the existence of a phase transition in a finite system of any size $N$, where does this occur?  At the maximum of one of the response functions (heat capacity
 or susceptibility/compressibility), or by  extraction from the behaviour of the ratio of the two largest eigenvalue of the transfer matrix? These will all give different results,
 as will also the results of taking the limits in different ways and for differing numbers of dimensions, all of which in turn will differ with $N$.  If all these values are taken to be estimates
 of some `true' value will this be $N$-dependent or the same for all $N$,   including presumably $N=\infty$, when we would be back with the problems of \citeauthor{main2}'s original proposal?
 \section{Phase Transitions in Large Systems: Our Proposal}\label{ourp}
As we shall see, our discussion in previous sections of the structure of  thermodynamics and of statistical mechanics in general, and of PTCP in particular, will allow us to paint a more nuanced and quantitative picture of their relationship than that provided  by previous approaches.   In particular we are concerned with the role in that relationship
played by large finite systems. \citeauthor{main2} suggests that we `bite the bullet' by countenancing the possibility of  phase transitions in small systems. However, we suggest that he is proposing to
 bite the wrong bullet.  The one which should be bitten is the need for a criterion giving a demarkation in system size between small systems and large systems,
 and our proposal,  which uses the discussion of finite-size scaling in Sect.\ \ref{fssc},  is intended to encompass this need.

 \vspace{0.2cm}

\noindent Thermodynamics, on the one hand, characterises PTCP in terms of singularities of thermodynamic functions, which may occur at special values of externally controllable parameters. This characterisation appears, {\em at first sight\/}, to be warranted by the phenomenology of phase transitions as they are observed in nature -- apparent discontinuities of thermodynamic functions at first-order phase transitions, and apparent algebraic singularities of thermodynamic functions including divergent response functions at second-order phase transitions.
In statistical mechanics, on the other hand, singularities of thermodynamic functions can emerge {\em only\/}  in the limit of infinite system size. As realistic systems are clearly of finite size, this creates an internal inconsistency in the list {\bf P--I} to {\bf P--VI} of propositions given above, if indeed the characterisation of PTCP {\em as they occur in nature\/} in terms of singularities (that is proposition {\bf P--IB}) is accepted.

Our aim now is to present an argument, based on the account of finite-size scaling in Sect.\ \ref{fssc},  which shows  that this inconsistency can be resolved within statistical mechanics  {\em and in a fully quantitative manner}.  In  Sect.\ \ref{fssc}, and also here,  discussion is restricted to a system with a thermal coupling $\theta_\vtmT$ and a magnetic coupling $\theta_\vtpH$,
in the cases where (i)  it is fully-finite with thickness $\aleph$ and (ii) it is fully-infinite with $\aleph=\infty$.  In  case (ii) on the zero-field axis $\pH=0$, $\theta_\vtpH=0$ there is a critical  temperature $T=T_\rmc$ with $\theta_\vtmT=0$ where response functions are singular. There is no singularity in the finite system but maxima
appear in the response functions.  We now summarize the relevant conclusions of finite-size scaling:
\begin{description}
\item[\textbf{FSS--I:}] In the thermodynamic limit $\aleph \to \infty$ when $\theta_\vtmT$ is small, but not infinitesimal,  the asymptotic form for the susceptibility at $T=T_\rmc$,
 given by (\ref{varphiThdLim}), has a singular component
with exponent $\curgamma$, but amplitudes which, by virtue of the presence of an irrelevant field $\theta_\star$, are dependent on $\theta_\vtmT$.
\item[\textbf{FSS--II:}]  As $\theta_\vtmT\to 0$, the influence of $\theta_\star$ becomes negligible and the susceptibility exhibits a
 pure power-law singularity at $T=T_\rmc$ as described  by (\ref{Ypowerlaw}).
\item[\textbf{FSS--III:}] When $\aleph$ is finite there is no singular behavior and two temperatures are defined:\footnote{Each will, of course, depend of the particular response function under consideration.}
 the shift temperature $\tiT(\aleph)$ where the susceptibility
has a maximum and the rounding temperature $\scT(\aleph)$ at which the profile of the susceptibility in the finite system begins to diverge from that in the infinite system.
\item[\textbf{FSS--IV:}] Assuming, as in (\ref{roundd}) and  (\ref{rounde}), that $|T_\rmc-\tiT(\aleph)|\sim\mcO(\aleph^{-\curchi})$ and $|\scT(\aleph)-\tiT(\aleph)|\sim\mcO(\aleph^{-\curtau})$,
it can be shown that the shift exponent $\curchi=[\curnu(1-y_\star)]^{-1}$ and the rounding exponent $\tau=\curnu^{-1}$; that is that the rate of convergence of both  the incipient singularity
and  the range of influence of finite-size effects around the incipient singularity are determined by exponents present  {\em in the infinite system}.
\end{description}
This  renormalization group scaling  approach to the description of critical phenomena thus  explains in a quantitative way, how singularities that might occur in infinite systems are
smoothed out by  finite-size effects. This, being fully in line with the fundamental observation that statistical mechanical systems of finite size cannot exhibit any singularities, resolves the inconsistency in the list of propositions {\bf P--I} to {\bf P--VI}. In particular  {\bf FSS--IV}  gives {\em a quantitative measure\/} of the deviations of critical phenomena, as observed in finite systems, from the behaviour expected for  infinite system size.  From  (\ref{scalY}), deviations from critical behaviour characteristic of the infinite system will  be observable  in a narrow region around the infinite system critical point. This, however,  is precisely the region, where  one would stand the chance of observing {\em asymptotic singular behaviour\/}, as only in this region is the influence of irrelevant scaling fields on PCTP expected to be sufficiently small.
In order to observe asymptotic critical singularities it is thus required that  $|\theta_\vtmT|$ be {\em sufficiently small\/} to keep corrections to asymptotic critical singularities due to irrelevant scaling fields under control, {\em but also not too small\/}, in order to prevent finite-size corrections from becoming significant.
As the range of  $\theta_\vtmT$ within which finite-size corrections dominate critical behaviour shrinks with system size $\aleph$ like $\aleph ^{-1/\nu}$, one  has to choose systems {\em sufficiently large in a quantitatively well-defined sense} in order to be able to observe asymptotic critical singularities characteristic of the respective universality class of a system.

In the context of the list {\bf P--I} to {\bf P--VI} of propositions, it is important to realise that the characterisation of PTCP in terms of singularities of thermodynamic functions   constitutes an {\em extrapolation of empirical observations\/}, as properly establishing the existence of a discontinuity of a thermodynamic function would require experimental control of infinite precision, while establishing a divergence of a response function would require an actual measurement of an infinite quantity. Neither requirement can conceivably be met  in any realistic experiment. Given that realistic systems  contain $\mcO(10^{23})$ constituents,  the linear dimension $\aleph$ of such systems, measured in terms of atomic distances, is very large and the temperature range  over which finite-size corrections to singular behaviour would manifest themselves, will be very small. It is thus understandable that such effects have been beyond experimental resolution.\footnote{Except for fairly recently in thin films \citep{LB1,WWZZ1}.} On the other hand, in computer simulations of statistical mechanical systems, one can  handle only relatively small systems, and finite-size roundings of critical singularities are therefore quite prominent. In such situations such  roundings, as predicted (and captured) by finite-size scaling are indeed observed and routinely used to extract asymptotic critical exponents from finite-size data \citep{bin1}.
The renormalization group and its formulation of finite-size scaling theory thus predicts in a quantitative way, both, the emergence of critical singularities, described as pure power-law singularities sufficiently close to an infinite system critical point,\footnote{For the sake of completeness, it should be mentioned that under certain well understood conditions, logarithmic corrections to pure power laws can occur \citep{weg1}. They translate into analogous logarithmic corrections in finite-size scaling relations \citep{m&k1}.}  and their shifting and rounding in systems of finite size.

\vspace{0.2cm}

\noindent According to our definite of a incipient singularity (Def.\ \ref{qcritreg}, above) such will occur in a finite system at certain values  of their external parameters, if  at those values thermodynamic
\begin{figure}[t]
  \includegraphics[width=8cm]{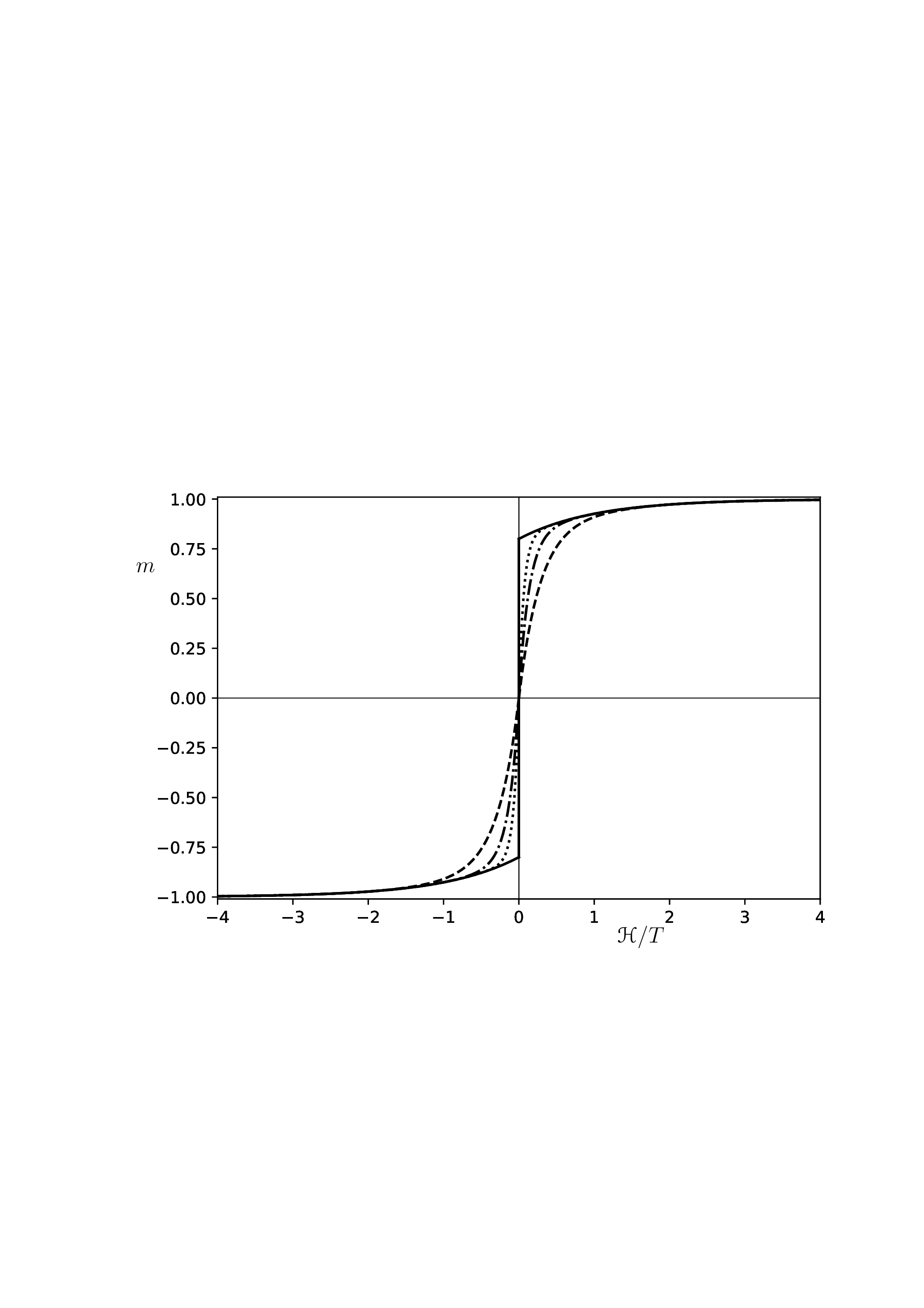}
\caption{Isothermal curves of magnetization density plotted against the field coupling. System size increases from the broken to the chain to the dotted curves with
the infinite system represented by the continuous line.}\label{Fig9}
\end{figure}
functions exhibit  properties that have no finite limits  as the system size $\aleph$ is increased. This could be a steep increase in the the slope of  magnetization as a function of the external field  across the zero-field axis at low temperatures, as shown in Fig.\ \ref{Fig9}, which is indicative of the possibility of a first-order transition in the infinite system. Or it could be the size-dependent height of the maximum of a response function as shown in (\ref{scalYa})  with $\curomega >0$, which is indicative of the possibility of a second-order transition in the infinite system.
However it is important to note  that an assertion of the occurrence of a incipient singularity in a finite system can never be made with absolute certainty by looking at the behaviour of a single system of any fixed finite size, but {\em only by comparing the behaviour of systems of different sizes.}
That said, our investigations have now provided us with a  well-defined notion of a  large system:
\begin{definition}\label{deflgsys}
\emb{For a system to be counted as large} it must be big  enough to exhibit a range of values of a thermodynamic variable (for example, the temperature) within which the following two phenomena
can both be avoided:
\begin{enumerate}[(i)]
\item the  corrections to scaling (due to the existence of non-zero irrelevant scaling fields) which require the system to be close to an incipient singularity,
\item the   noticeable finite-size corrections in a close neighbourhood of an incipient singularity  (due to a finite value of $\aleph$), which requires the system to be sufficiently far away from an incipient singularity.
\end{enumerate}
\end{definition}
Although, as we saw above, these two conditions pull in opposite directions this tension will become  less acute as the system size increases.
For such systems incipient singularities will be observable in a range of temperatures (or couplings), which are described by the asymptotic critical exponents of infinite systems. These exponents describe  incipient singularities which will never fully materialize in a system of finite extent. They do, however,  provide an economy of description, and lead to a  classification of systems according to their universality class, as described earlier. Quite often the full complexity of the crossover between behaviour described by asymptotic critical exponents and finite-size rounding of thermodynamic functions is far beyond the capabilities of available analytic tools. Taking the thermodynamic limit in a statistical mechanical analysis of a system is also often,\footnote{As in the case of the \citet{1-o2} solution
of the zero-field two-dimensional Ising model.} the only way to carry the calculation through to its end.

\vspace{0.2cm}

\indent The renormalization group approach to PTCP actually plays a dual role in the analysis of critical phenomena.\footnote{This aspect is also highlighted in \citet{HKT1}.} On the one hand it provides micro-reductive methods, firmly embedded in the arsenal of techniques of statistical mechanics, to evaluate critical exponents for given statistical mechanical systems, albeit in most cases only approximately. On the other hand it embodies a new way of looking at such systems, by describing statistical properties of systems at different length scales. It is this radically new way of analysing systems which allows it to put {\em systems with different microscopic properties into a common context\/}, which in turn leads to the identification of fixed points and their basins of attraction as universality classes, thereby revolutionizing the analysis of critical phenomena.

It is perhaps appropriate to add a final twist. Asymptotic critical exponents characterising singularities at phase transitions as they would occur in infinite systems, including exponents that describe corrections to scaling due to irrelevant scaling fields, are obtained from the eigenvalues of a renormalisation group transformation that is linearized in the vicinity of (one of) its fixed points. They are thus obtainable {\em without ever touching or contemplating systems of infinite size}! As we have seen in our discussion above, these critical exponents also govern the way in which finite-size corrections to critical phenomena manifest themselves. In some sense, therefore, it would be fair to say that {\em critical exponents are bona-fide properties of finite systems\/} -- rather than, as mostly discussed, simply properties of potentially infinite systems.

\vspace{0.2cm}

\noindent The aim of our analysis has been to eliminate some of the confusion that has characterised much of the discussion surrounding PTCP in the philosophical (and physics) literature. To summarize our position:
\begin{enumerate}[$\bullet$]
\item It cannot be denied that phase transitions occur in nature. ({\bf P--IA} is accepted).
\item The assertion that they are characterized by singularities is an unwarranted extrapolation of empirical findings. ({\bf P--IB} is rejected). (Asserting the existence of a singularity in an experimental result requires infinitely precise experimental control, or an actual `measurement of the infinite', which is clearly infeasible.)
\item Within thermodynamics, there is {\em no choice\/}  but to describe phase transitions in terms of singularities. (That is,  {\bf P--IIA} and  {\bf P--IIB} are valid statements about the structure of thermodynamics). Equations of state either have unique solutions -- in which case there is no phase transition -- or they may exhibit bifurcations in their solution manifolds, in which case singularities and discontinuities arise.
\item Phase transitions, as they occur in nature, are correctly described by statistical mechanics, the renormalization group and finite-size scaling. Thermodynamics, on the contrary, is fundamentally incapable of an adequate description as it is, from the outset, conceived as a theory of infinitely large systems. ({\bf P--IIIA} is accepted but  {\bf P--IIIB} is rejected).
\item Investigating systems in the limit of infinite system size provides added value in that it allows one to (i) identify exact asymptotic power laws, which the incipient singularities would follow if system sizes could be taken arbitrarily large, (ii) provide a classification of systems according to their universality class.
\end{enumerate}
\section{After-Thoughts on Reduction and Emergence}\label{emred}
Figure \ref{Fig1}  is a diagrammatic attempt to encapsulate the relationship between thermodynamics including scaling theory,
and the Gibbsian version of statistical mechanics including various approaches to PTCP:   the use of the thermodynamic limit,  the renormalization group and phase transitions
in finite systems. Apart from the formal links spelled out as messages \FSM--\ref{fsm-01}, $\ldots$,
  \FSM--\ref{fsm-04} from statistical mechanics to thermodynamics and the
    connecting relationships \FTD--\ref{ftd-01},  $\ldots$,  \FTD-\ref{ftd-03},
  provided by thermodynamics to statistical mechanics, there is another element of collaborative  interaction, as shown in Fig. \ref{Fig1},  in the direction
from statistical mechanics to thermodynamics; specifically from  the renormalization group to scaling theory.  This has two aspects \emb{substantiation and enrichment}:
\begin{enumerate}[$\bullet$]
\item The Kadanoff scaling relationship (\ref{egs4}) is introduced as a hypothesis, which is {\em substantiated} as
(\ref{egs4rg}) in renormalization group theory.
\item Scaling about an arbitrary origin in a critical region with relevant and irrelevant directions is a consequence of scaling theory.
This picture is {\em enriched} in renormalization group theory, where scaling origins are not arbitrary,
but fixed points of the flow determined by the recurrence relationships,  and corresponding to different universality classes. Relevant and irrelevant directions correspond to directions in which a
 fixed point is repulsive and attractive to the flow. Following a trajectory as it approaches one fixed point, but is finally repulsed towards another, is an example of crossover between different
 types of critical behaviour, that is between different universality classes.
\end{enumerate}
Having spelled out a picture of the anatomy of thermodynamics and statistical mechanics, as well as the relationships between their different parts,
we can now ask what consequences this  has for our understanding of reduction and emergence as regards PTCP. The literature on
 reduction and emergence is vast, even when restricted to the specific area of  PTCP.  So reviewing and discussing this entire literature is beyond the scope of
  this work,\footnote{For surveys of the various positions in the discussion concerning reduction see \citet{HL1} and \citet{vrvg1}.
  For surveys of the discussions of emergence see \citet{hump3} and the contributions of \citet{GHL1}.
  For an overview of discussions  of phase transitions see \citet{she1}.} with a more extensive treatment
  being the subject of a future paper. Our aim in this section is simply to sketch the main contours of the lie of the land in the light
   of the picture we have developed, hoping that this will serve as a springboard for further discussions.

   \vspace{0.2cm}

   \noindent  To aid our account, we introduce the following terminology. Let $ \frakT_{\!\tC}$ and $\frakT_{\!\tF}$ be two theories, where `$\sC$' stands for `coarse', meaning less detailed, and `$\sF$' stands for `finer', meaning more detailed.\footnote{ \citet[][p.\ 928]{butt1} replaces `$\sC$' by   `t'  to stand for top, tangible or tainted and `$\sF$'  by `b' to stand for bottom, basic or best.} Intuitively, $\frakT_{\!\tC}$ is the theory that is supposed to be reduced to $\frakT_{\!\tF}$. In the terminology that has become standard in the philosophical literature on the topic, $\frakT_{\!\tF}$ is supposed to be the \textit{reducing theory}  and $\frakT_{\!\tC}$ is supposed to be the \textit{reduced theory}. We say `supposed to be' because this is what reductionists would expect. The question is whether this expectation bears out, and if so in what sense of reduction.

Accounts of reduction might be divided into two broad families,  called \textit{limit reduction} and \textit{deductive reduction}.\footnote{The distinction goes back to \citet{nic1}.
\citet{bat13} calls them the ``physicist's sense of reduction'' and the ``philosopher's sense of reduction''. We avoid this terminology because, as we will argue, physicists do present deductive reductions.} We now have a look  at each in turn and consider whether they can account for the relation between $\frakT_{\!\tC}$ and $\frakT_{\!\tF}$ that emerges from our account.
\subsection{Limit Reduction}\label{limred}
 The core idea of limit reduction is that $\frakT_{\!\tC}$ reduces to $\frakT_{\!\tF}$ if the former turns out to be a regular limit of the latter. An example of such a reduction is letting the parameter $c$, the speed of light in the special theory of relativity, tend toward infinity and thereby recovering classical Newtonian mechanics \citep{nic1}.\footnote{We note that the terminology of what reduces to what varies. \citeauthor{nic1} says that in taking the limit $\frakT_{\!\tF}$ reduces to $\frakT_{\!\tC}$; in keeping with the terminology previously introduced, we say that $\frakT_{\!\tC}$ reduces to $\frakT_{\!\tF}$. Nothing in what follows depends on this purely terminological matter.} In general, let us call the relevant parameter $\alpha$; the limit,  denoted as $\lim_\alpha$, can be toward any value
 of $\alpha$, the most frequent cases being  $\alpha \rightarrow 0$ and $\alpha \rightarrow \infty$. \citet{bat13} adds the further requirement that the limit be regular, which means that the relevant formulae in $\frakT_{\!\tF}$ approach the relevant formulae in $\frakT_{\!\tC}$ smoothly as the parameter approach the relevant limit value.\footnote{See \citet{berr1} and for discussions of singular and regular limits see \citet{butt2} and \citet{NgF1}.} Taking these elements together yields the following:
\begin{definition}\label{lim-red}
\textbf{Limit Reduction:}\\
$\frakT_{\!\tC}$ \emb{limit-reduces} to $\frakT_{\!\tF}$ iff $\lim_{\alpha} \frakT_{\!\tF} = \frakT_{\!\tC}$ and the limit is \emb{regular}.
\end{definition}
This definition plays an important role in the discussion about the reduction of PTCP because $\frakT_{\!\tC}$ is commonly associated with thermodynamics and $\frakT_{\!\tF}$ with statistical mechanics.
The failure of the limit to be regular as the number of microsystems tends to infinity is then seen as an indication that reduction fails.

\vspace{0.2cm}

\noindent How does this argument play out in our scheme? To answer this question we first need to identify certain elements in Fig. \ref{Fig1} with $\frakT_{\!\tC}$ and $\frakT_{\!\tF}$. There are two possibilities:
\begin{enumerate}[(a)]
\item  Work within the renormalization group $\textsf{SM4}$. In this case, as described  in Sect.\  \ref{tgs}, the limiting process is implemented by  the renormalization transformation which applies
 a succession of  reductions in the number of lattices sites. This reduces the fluctuations (and correlation length) away from critical regions, but leaves the essential statistical mechanical structure intact.
\item Apply the infinite system limit $\textsf{SM2}\to\textsf{SM3}$.  Away from critical regions this removes fluctuations in the uncontrolled extensive variables, but  leaves
the microstructure and the probability distribution intact.
\end{enumerate}
\noindent However, neither of these is a reduction to a version of thermodynamics.   Both (a) and (b) are procedures lying entirely within statistical mechanics. That having been said,
 (b) is probably the closest to the above idea of reduction. However,  while it uses the thermodynamic limit, that limit does not take the system
 to a thermodynamic system, but to an  {\em infinite statistical mechanical system}  (\textsf{SM3}). To
arrive at  thermodynamics it is necessary to {\em conflate}  \textsf{SM3} with  \textsf{TD3}.
While  \textsf{SM3} like \textsf{TD3} contains the singular characteristics deemed necessary (by some) for the occurrence of phase transitions it also has
a microstructure which is lacking in \textsf{TD3}.

So there is no part of Fig.\  \ref{Fig1} which involves the kind of limit that would ground a limit reduction.
 However, far from being a problem, this is simply irrelevant to the issue of the reduction of PTCP.  As we have indicated in Sect.\ \ref{ourp} the role of the thermodynamic limit
 is, in the first instance,  to provide a condition for maxima in response functions to be incipient singularities; some finite systems do not show PTCP no matter how large they become.
  In the second instance it provides the critical exponents that can be regarded as properties of the real system.  Limits and renormalization group techniques are classification tools that enable us
  to separate phase transitions into different universality  classes.
\subsection{Deductive Reduction}{\label{dedred}
This notion of reduction  is closely associated with \citeauthor{nag1}. The broad idea  is that $\frakT_{\!\tC}$  is reduced to $\frakT_{\!\tF}$  if the laws of  $\frakT_{\!\tC}$ are deducible from the laws of $\frakT_{\!\tF}$ and some auxiliary assumptions. A mature formulation of this idea, known as the \emb{Generalised Nagel-Schaffner Model of Reduction}, is as follows:\footnote{The original reference is the first (1961) edition of \citet{nag1}, of which  \citet{sch1} provide a reformulation. An alternative account by \citet{butt1,butt2} uses  the notion of a definitional extension. Our presentation here follows that of \citet{DFH1}.}
\begin{definition}\label{nagred}
\noindent \textbf{Deductive Reduction:}\\
\noindent $\frakT_{\!\tC}$ reduces to $\frakT_{\!\tF}$ iff there is a {\em corrected} version $\frakT_{\!\tC}^\star$ of $\frakT_{\!\tC}$ such that:
\begin{enumerate}[(i)]
\item \emb{Connectability}: If $\frakT_{\!\tC}$ contains terms that do not appear in to $\frakT_{\!\tF}$, then for every such term there is a bridge law connecting it to a term in $\frakT_{\!\tF}$.
\item \emb{Derivability}: Given the associations in (i), $\frakT_{\!\tC}^\star$ is derivable from  $\frakT_{\!\tF}$ plus bridge laws  and, possibly, some auxiliary assumptions.
\item \emb{Strong Analogy}: $\frakT_{\!\tC}$ and $\frakT_{\!\tC}^\star$ are {\em strongly analogous} to one another.
\end{enumerate}
\end{definition}
As a simple example, consider the derivation of the perfect gas law $PV=NT$ (given as the second of equations (\ref{prefl1})) from the kinetic theory of gases. Here the perfect gas law is $\frakT_{\!\tC}$ and the kinetic theory is $\frakT_{\!\tF}$. $\frakT_{\!\tC}$ contains the term `temperature', which is not in $\frakT_{\!\tF}$. The bridge law $T\cequals 2U\varepsilon/(3N)$ (which is the first of equations  (\ref{prefl1}))
with the internal energy  $U$ identified as
 the expectation value of the kinetic energy of the gas, connects this term to $\frakT_{\!\tF}$. $\frakT_{\!\tC}^\star$ is the version of the perfect gas law in which, subject to the physical constraints on the system, $P$, $V$, and $T$ are variables that can fluctuate (something they cannot do in $\frakT_{\!\tC}$). $\frakT_{\!\tC}^\star$ and $\frakT_{\!\tC}$ are strongly analogous in that fluctuations are small (to the point of being negligible) in contexts in which $\frakT_{\!\tC}$ is applied.

\vspace{0.2cm}

 \noindent The introduction of $\frakT_{\!\tC}^\star$ is a concession to practice. Ideally one would be able to derive $\frakT_{\!\tC}$ from $\frakT_{\!\tF}$, but that is usually not possible.
  So one rests content with deriving a theory $\frakT_{\!\tC}^\star$  that is not identical with, but strongly analogous to, $\frakT_{\!\tC}$.
What does it mean to be strongly analogous? \citet{sch1} blocks the worry that an appeal to strong analogy is an entry ticket to `anything goes' by imposing the following two conditions:
\begin{enumerate}[(a)]
\item $\frakT_{\!\tC}^\star$ corrects $\frakT_{\!\tC}$ in that $\frakT_{\!\tC}^\star$ makes {\em more accurate predictions} than $\frakT_{\!\tC}$.
\item $\frakT_{\!\tC}$ is {\em explained} by $\frakT_{\!\tF}$ through $\frakT_{\!\tC}^\star$ being a deductive consequence of $\frakT_{\!\tF}$ and $\frakT_{\!\tC}^\star$ being strongly analogous to $\frakT_{\!\tC}$.
\end{enumerate}
With this in place, we can now ask whether the above schema indicates that a deductive reduction is taking place.
For this we first need to know which theories are in play:  what is reduced to what? Since we are interested in a reduction of PTCP,
we should focus on a version of thermodynamics with PTCP in it. So we set $\frakT_{\!\tC}\cequals$\textsf{TD3}.
 Then it might  seem tempting to choose \textsf{SM3} as the reducing theory because,  in Fig. \ref{Fig1},  \textsf{TD3} `communicates' with \textsf{SM3}.
This, however, would be the wrong choice. What we are interested in is the reduction of thermodynamics to a fundamental theory of large systems,
 and this  is \textsf{SM2}. This is because \textsf{SM2} contains the fundamental principles of statistical mechanics with the only added assumption being
 that systems are large; so $\frakT_{\!\tF}$\cequals\textsf{SM2} is appropriate. \textsf{SM3}, by contrast, contains a limit assumption  which does not belong in the
   fundamental theory.
 So the task we set ourselves here    is to check whether the reduction of $\textsf{TD3}$ to $\textsf{SM2}$ fits the mould of deductive reduction. We shall argue that it does
and,  to this end, we now consider this contention in relation how to the  elements (i)--(iii) of Def.\ \ref{nagred} play out in Fig.\ \ref{Fig1}.
\begin{description}
\item[\textbf{For (i)}] connectivity requires a number of bridge laws.  We have avoided this  designation for the relationships
\FTD--\ref{ftd-01}, \FTD--\ref{ftd-02} and \FTD--\ref{ftd-03}  in Fig.\ \ref{Fig1}, preferring to call them `inter-theory connections'.  However, now we shall consider the possibility
that they can assume the role of bridge laws as required in the present context.  The paradigmatic example of a bridge law in the philosophical literature is
provided, as indicated above, by the perfect gas. There the bridge law identifies the temperature  in statistical mechanics using the underlying identification of the expectation value of  kinetic energy
of the gas with its internal energy. But, on closer examination, this example glosses over two other identifications, of volume and pressure.\footnote{And also of the number
of particles of the gas if it is enclosed in a permeable container.}  In a perfect gas contained in a cylinder closed by a movable piston,
the piston position will fluctuate; that is to say, from a statistical mechanical point of view, the volume of the gas is a fluctuating quantity. So, just as the internal energy must be identified
with the expectation value of the kinetic energy,  the thermodynamic volume must be identified with the expectation value of the statistical mechanical volume.  Other instances
of the same kind are provided by other systems and they are all covered by \FTD--\ref{ftd-03}, which in the current context plays the role of a bridge law.   In the case of the perfect fluid the identification of internal
energy and the expectation value of the kinetic energy and of the thermodynamic volume with the expectation value of the statistical mechanical volume is sufficient to provide a bridge for
temperature, pressure and for entropy via the Sackur-Tetrode formula and consequentially for all other thermodynamic variables, as described by the connecting relationships
\FTD--\ref{ftd-01}  and \FTD--\ref{ftd-02}.  These could, therefore, be regarded as consequences of the underlying bridge law  \FTD--\ref{ftd-03}, rather than as bridge laws in their own right.
In more complicated situations, where there is a need to connect a larger set of thermodynamic and statistical mechanical variables, it is a reasonable economy to regard them, together with
 \FTD--\ref{ftd-03} as comprising an exhaustive set of bridge laws.
 \vspace{0.2cm}
\item[\textbf{For (ii)}] by definition $\frakT_{\!\tC}^\star$ is a corrected version of $\frakT_{\!\tC}$ that can be derived from $\frakT_{\!\tF}$ plus bridge laws. In the current context $\frakT_{\!\tC}^\star$ is a version of \textsf{TD3} in which the relevant quantities are allowed to fluctuate, and the fluctuations show roughly the pattern given in \textsf{SM2} (but without $\frakT_{\!\tC}^\star$ containing any of the microstructure of matter specified in statistical mechanics). It is obvious that $\frakT_{\!\tC}^\star$ thus defined is a deductive consequence of \textsf{SM2}: it is obtained simply by applying the bridge laws to \textsf{SM2}.\footnote{Terminological note: the term `corrected', which is customary in the discussion of reduction, is somewhat ill-chosen, because it might suggest that that \textsf{TD3} is in some way faulty, which it is not.  It is in fact one of the most successful and enduring models in physics. The term `corrected' here should be read in a unemphatic (and non-pejorative) way, as simply indicating that conditions (a) and (b), listed above, are met.}
    \vspace{0.2cm}
\item[\textbf{For (iii)}]  we need to show that $\frakT_{\!\tC}^\star$ and $\frakT_{\!\tC}$ stand in the proper strong analogy relationship.
In effect the derivation of \textsf{SM3} from \textsf{SM2} through the thermodynamic limit and the fact that \textsf{SM3} corresponds to \textsf{TD3}
amounts to saying that there is a strong analogy between \textsf{SM2} and \textsf{TD3}. However, a more detailed analysis is useful and for this
 we check whether \citeauthor{sch1}'s two criteria are satisfied:
 \vspace{0.1cm}
\begin{description}
\item[\underline{For (a)}] the messages \FSM--\ref{fsm-01} and \FSM--\ref{fsm-02} are relevant.
  \FSM--\ref{fsm-01}  asserts that uncontrolled thermodynamic variables fluctuate  with  variances of $\mcO(N)$ related to response functions. This means that the variances
  of the corresponding density variables are $\mcO(1/N)$.  That these  fluctuations are small for large systems is related to, but not exactly equivalent to the fact, asserted in \FSM--\ref{fsm-02},
 that extensivity is an approximate property of large systems. So  $\frakT_{\!\tC}^\star$ modifies $\frakT_{\!\tC}$ by replacing equality in the basic relationship with approximate equality, valid
 when the system is large.  It also contains fluctuation--response function relationships between fluctuations, which are recognised in $\frakT_{\!\tC}^\star$ but not in $\frakT_{\!\tC}$,
and response functions which appear in both. Thus $\frakT_{\!\tC}^\star$  makes more adequate predictions than $\frakT_{\!\tC}$ because real systems \textit{do} show fluctuations.
\item[\underline{For (b)}] the way that $\frakT_{\!\tC}\cequals$\textsf{TD3} is explained by $\frakT_{\!\tF}\cequals$\textsf{SM2} follows straightforwardly from Sect.\ \ref{ourp}
once the bridge laws are accepted and we have in place the definition of an incipient singularity (Def. \ref{qcritreg}).  Maxima in response functions are identified as incipient singularities
if they map into real singularities in the thermodynamic limit, which is the step from \textsf{SM2} to \textsf{SM3}.  And, as we have already noted,
\textsf{TD3} communicates with \textsf{SM3} in the sense that it communicates its understanding of the singularities in \textsf{SM3} to \textsf{TD3}.
\end{description}
\end{description}
From the above we conclude, that \textsf{TD3} reduces to \textsf{SM2} in the sense of deductive reduction.
However, the structure of Fig. \ref{Fig1} prompts a consideration
of the possibility of a further relationship of deductive reduction higher in the figure.  In particular does $\frakT_{\!\tC}\cequals$\textsf{TD4} and   $\frakT_{\!\tF}\cequals$\textsf{SM4}
satisfy the required conditions?\footnote{Replacing \textsf{SM4} with  \textsf{SM5}, would rather complicate the situation, since \textsf{TD4} has extensivity in all $d$ dimensions, whereas this is the case for only $\frakd$ dimensions (which includes the fully-finite case $\frakd=0$) in \textsf{SM5}.}
It is straightforward to see that connectability and derivability, where  $\frakT_{\!\tC}^\star$ is a version of \textsf{TD4} that has certain of the features of \textsf{SM4} built into it, are satisfied as before.
Scaling in \textsf{TD4} is a phenomenological means of capturing the structure of the way thermodynamic functions in critical regions depend on variables (in the form of homogeneous functions of controllable variables).  It can in a sense be regarded as being built from  renormalization group theoory with the scaffolding removed. This is what we referred to above as the substantiation
of scaling theory by the renormalization group. On the other hand the {\em values} of critical exponents and the interpretation of the origin of scaling as the fixed point of a semi-group transformation
is absent from \textsf{TD4} but present in \textsf{SM4}.  In that sense the later provides an {\em explanation} or {\em enrichment} of the former.
\subsection{Emergence}{\label{emerg}
  In the case of emergence things are even more difficult than with reduction. As \citeauthor{hump3} notes in a recent review of the field, not only is there no unified framework or account of emergence, there is not even a generally agreed set of core examples of emergent phenomena on which a discussion could build \citep{hump3}. Our aim here is not, therefore,  to comprehensively review the field; we rather discuss some senses of emergence that have played a role in the debate and assess whether, in the light of our analysis, PTCP are emergent in these senses.

\vspace{0.2cm}

\noindent For \citeauthor{butt1}, whose view of reduction is essentially Nagelian, there is no conflict between reduction and emergence. The view that reduction and emergence are compatible is based on an understanding of emergence as there being  ``properties or behaviour of a system which are \textit{novel} and \textit{robust} relative to some appropriate comparison class'' (\citeyear{butt1}, p.\ 921, orig. emph.). He adds the comment that this is intended to cover the case where a system consists of parts, where the idea is that a composite system's ``properties and behaviour are novel and robust compared to those of its component systems, especially its microscopic or even atomic component'' (op. cit.). We agree that thus understood, there is emergence in the large but finite systems we are studying and PTCP can be regarded as both emergent and reduced. Illustrative of this is the transfer matrix approach where
 maxima in response functions and the correlation length (or critical properties if $\frakd>d_{\tL\tC}$), calculated for a lattice which is infinite in $\frakd$ dimensions, converge towards the critical properties of a $(\frakd+1)$-dimensional system as the size in that dimension is increased. This account,  affords a understanding of dimensional crossover between universality classes, with  the  `gradual emergence'  of critical behavior.

\vspace{0.2cm}

\noindent  \citet{hump3} introduces the triplet of conceptual emergence, ontological emergence and epistemological emergence, which we now consider:
\begin{enumerate}[(1)]
\item We have \emb{conceptual emergence} ``when a reconceptualization of the objects and properties of some domain is required in order for effective representation, prediction, and explanation to take place'' (op. cit. p.\ 762). This is close to \citeauthor{butt1}'s notion of reduction, and  there is emergence in this sense because various notions that are not native to statistical mechanics, have been introduced into the theory in order to deal with PTCP, both through inputs from thermodynamics (\textsf{FTD-1}, \textsf{FTD-2}, and \textsf{FTD-3}) and through the introduction of the notion of a large system at level \textsf{SM2}.  As we have argued in Sect.\ \ref{ourp} and in our discussion of transfer matrix methods, it is precisely in such large systems that PTCP are manifested
    in the form of incipient singularities.
\item \emb{Ontological emergence} amounts to the following:``$\textit{A}$ ontologically emerges from $\textit{B}$ when the totality of objects, properties, and laws present in $\textit{B}$ are insufficient to determine $\textit{A}$'' (op. cit. p.\ 762). As we have seen in Sect.\ \ref{ourp}, the properties of a system’s micro-constituents together with the laws that govern them are sufficient to determine PTCP; in fact they can be shown to happen in finite systems. So PTCPs are not ontologically emergent.
\item \emb{Epistemological emergence} is present when the limitations in our knowledge prevent us from predicting the relevant phenomenon. As \citeauthor{hump3} puts it, $\textit{A}$ epistemically emerges from $\textit{B}$ ``when full knowledge of the domain to which $\textit{B}$ belongs is insufficient to allow a prediction of $\textit{A}$ at the time associated with $\textit{B}$'' (op. cit. p.\ 762). This is also the notion of emergence that \citeauthor{mor1} appeals to when she notes that  ``what is truly significant about emergent phenomena is that we cannot appeal to microstructures in explaining or predicting these phenomena, even though they are constituted by them'' \citep[][p.\ 143]{mor1}.\footnote{For detailed discussion of \citeauthor{mor1}'s position see \citet{HKT1}.} We submit that PTCP are not epistemically emergent because, as we have seen in Sect.\  \ref{ourp}, they in fact can be deduced and predicted from the underlying micro-theory. What is important here is PTCP appear in {\em finite} systems.
\end{enumerate}

\vspace{0.2cm}

\noindent \citeauthor{bat6}'s  account of emergence \citep{bat6}, centres around the application of the renormalization group. As we have seen in Sect.\ \ref{tl&rg}
he (and Kadanoff) regard the use of renormalization group as a wholly different type of approach to PTCP from which novel properties emerge.  In particular
the fixed points of the renormalization transformation which allocate the universality classes.  We agree with this except for two reservations:
\vspace{0.2cm}
\begin{enumerate}[(i)]
\item  Batterman takes the thermodynamic limit as an essential feature of this method.  As we have indicated in Sect.\ \ref{tl&rg} we do not regard this as
being necessary.
\item  There is nothing automatic about setting up a renormalization group analysis of a system.  It does not arise in a straightforward algorithmic way from
the basic structure of statistical mechanics. Indeed {\em physical insight} is required both in the
 the choice of the lattice scaling $\mcN\to\wtmcN$ and  of the weight function. These must be compatible with the nature of the ordered state and the critical phenomena to be
 explored.  The recurrence relationships are determined by these choices, and the fixed points `emerge' as properties of the recurrence relationships.  These in turn have exponents which give the universality classes of the various critical regions. As we have already indicated,
most renormalization schemes involve some degree of approximation, with a consequent variation in fixed points and their exponents.\footnote{ An example of such variations, in the case of the Ising model on a triangular lattice, for the exponents $y_\vtmT$ and $y_\vtpH$ and for the location of the fixed point (the Curie temperature)
 is provided by Table IV on page 482 of   \citet{2-n11}.}
However, weight-function dependent variations can also occur
 even when  no approximation is involved.  An example of this is the one-dimensional Ising model with the scheme described in Sect.\
\ref{wtfn} with $\lambda\cequals2$, but with $J<0$, that is the antiferromagnetic case.  In principle one expects a fixed point associated with antiferromagnetism,
 but, although the free-energy density is correctly computed the fixed point is missing.  For this to appear, as is shown by \citet{2-n12},  one needs to take $\lambda\cequals 3$; that is blocks of three sites.
That, in general, different fixed points and hence different universality classes emerge from different choices of lattice scaling and weight function {\em for the same system}  means that  this is a qualified type of emergence.
 \end{enumerate}

 \vspace{0.2cm}

\noindent Finally, emergence is often characterised as the failure of reduction \citep[][p.\ 21]{kim3}. That is, reduction and emergence are taken to be mutually exclusive
 and a property is emergent only if it fails to be reducible. PTCP are not emergent in this sense because, as we have seen above, they are reducible in the sense of a deductive reduction.

\vspace{0.2cm}

\section{Conclusions}\label{concl}
We have presented a picture of the way that thermodynamics and statistical mechanics coexist and collaborate within the envelope of thermal physics.
We showed that the relationship between the two developments, represented by the columns in Fig.\ \ref{Fig1} depends,
on the one hand, on  inter-theory connecting relationships from thermodynamics to statistical mechanics, one of which,  \FTD--\ref{ftd-03}, can, in the context of deductive reduction
be regarded as a bridge law, with the remaining two, \FTD--\ref{ftd-01} and \FTD--\ref{ftd-02}, being consequences of \FTD--\ref{ftd-03}.
On the other hand, from statistical mechanics to thermodynamics, there is
 also  a sequence of `messages' that are effectively warnings about the idealized nature of thermodynamics.

We address the problem that real systems are finite, and singular behaviour associated with PTCP can occur only in infinite systems, using finite-size scaling and a
clear specification of a large system.  This enables us to develop a picture of the way that PTCP in finite systems can be defined in terms of incipient
singularities.  Within this picture the role of the infinite system is threefold: (a)  the existence of  a critical region in the thermodynamic limit   is a necessary condition for there to be
a region of incipient singularity in the real finite system,
(b) as one (but not the only) way   to  determine quantitative properties like the value of critical exponents of the real system (c) to
simplify calculations.  In these senses the infinite system is an indispensable, idealized  approximation to the real finite system.

The usual arguments for limit reduction are based on an unwarranted  conflation between a thermodynamic system with critical behaviour (\textsf{TD3}) and an infinite statistical mechanical
system (\textsf{SM3}).  On the other hand, the arguments for the deductive reduction of  \textsf{TD3} to the statistical mechanics of a large system (\textsf{SM2}) are valid.
Next we argue that PTCP are neither ontologically or epistemologically emergent, but they are conceptually emergent.  Rather less frequently remarked upon are the ways that statistical mechanics
both substantiates and enriches  the picture of PTCP in thermodynamics.

\vspace{0.5cm}

\noindent \textbf{\large Acknowledgements}

\vspace{0.2cm}

\noindent  We are grateful to the reviewers of this work for their detailed comments and helpful criticisms.

\appendix

\vspace{0.5cm}

\noindent \textbf{\large Appendices}
\addcontentsline{toc}{section}{Appendices}
\section{Response Functions and Critical Exponents}\label{refnce}
In terms of densities and fields the response functions are
\begin{equation}
c_{{x}}  \cequals
k_{\tB} {T} {\left({\frac{\partial s}{\partial{T}}}\right)}_{\ms{4}{x}},\smallsep
c_{\xi}  \cequals
k_{\tB} {T} {\left({\frac{\partial s}{\partial{T}}}\right)}_{\ms{4}\xi},
\smallsep
\varphi_{\vtmT}  \cequals  T
{\left({\frac{\partial {x}}{\partial \xi}}\right)}_{\ms{4}\vtmT}, \smallsep
\alpha_{\xi} \cequals {T} {\left( {\frac{\partial {x}}{\partial{T}}}
\right)}_{\ms{4}\xi}.
\label{cp35a}
\end{equation}
And in the coupling--density representation the densities are given by
\begin{equation}
u=\frac{\partial\phi_2}{\partial \zeta_1}, \pairsep x=-\frac{\partial\phi_2}{\partial \zeta_2}
\label{cpden}
\end{equation}
with the response functions  by
\begin{equation}
\begin{array}{l}
\displaystyle{ c_{\xi} = - k_{\tB} \left( {{\zeta_1}}^2 {\frac{\partial^2{\phi_2}}{\partial {{\zeta_1}}^2}} + 2 {{\zeta_1}}{{\zeta_2}}
{\frac{\partial^2 {\phi_2}}{\partial {\zeta_1} \partial
{\zeta_2}}} + {{\zeta_2}}^2{\frac{\partial^2 {\phi_2}}{\partial {{\zeta_2}}^2}}\right),}\\[0.5cm]
\displaystyle{\varphi_{\vtmT}  = - {\frac{\partial^2 \phi_2}{\partial {\zeta_2}^2}},}
\tripsep\displaystyle{\alpha_{\xi}= {\zeta_1} {\frac{\partial^2 {\phi_2}}{\partial {\zeta_2} \partial {\zeta_1}}}
+{\zeta_2} {\frac{\partial^2 {\phi_2}}{\partial {\zeta_2}^2}}.}
\end{array}
\label{cdrsfn}
\end{equation}
The simplest way to treat $c_{x}$ is to use the standard formula
\begin{equation}
 c_{\xi}-c_{{x}} = k_{\tB}  {{{\alpha}_{\xi}^2}/{\varphi_{\vtmT}}}.
\label{cp43}
\end{equation}
In the standardized notation of PTCP
\citep[][]{2-b2}\label{2-b2-4-001} the critical exponents $\curalpha$,
$\curalpha^\prime$, $\curbeta$, $\curdelta$, $\curgamma$ and
$\curgamma^\prime$ are defined by {\eqspace\jot=0.4cm
\begin{eqnarray}
c_{{x}}  &\sim&  \left\{ \begin{array}{@{}ll}
{({T} - {T}_{\rmc})}^{-\curalpha}\, ,
&\mbox{\ along the critical isochore, ${T} > {T}_{\rmc}$,}
\\[0.3cm]
{({T}_{\rmc} - {T})}^{-\curalpha^\prime}\, ,
&\mbox{\ along the coexistence curve, ${T} < {T}_{\rmc}$,}
\end{array}
\right.
\label{cp60}\\
{x} - {x}_{\rmc}  &\sim&  {({T}_{\rmc} - {T})}^{\curbeta}\, ,
\mbox{\ along the coexistence curve, ${T} <{T}_{\rmc}$,}
\label{cp48}\\
\varphi_{\vtmT}  &\sim& \left\{ \begin{array}{@{}ll}
{({T} - {T}_{\rmc})}^{-\curgamma}\, ,
&\mbox{\ along the critical isochore, ${T} > {T}_{\rmc}$,}\\[0.3cm]
{({T}_{\rmc} - {T})}^{-\curgamma^\prime}\, ,
&\mbox{\ along the coexistence curve, ${T} < {T}_{\rmc}$,}
\end{array}
\right.
\label{cp51}\\
\xi - \xi_{\rmc}  &\sim&  ({x} - {x}_{\rmc}){|{x} - {x}_{\rmc} |}
^{\curdelta - 1}\, ,
\mbox{\ along the critical isotherm.}
\label{cp49}
\end{eqnarray}
It} also convenient to define the
exponents $\cursigma$ and $\cursigma^\prime$ according to
\begin{eqnarray}
c_{\xi}  &\sim&  \left\{ \begin{array}{@{}ll}
{({T} - {T}_{\rmc})}^{-\cursigma}\, ,
&\mbox{\ along the critical isochore, ${T} > {T}_{\rmc}$,}
\\[0.3cm]
{({T}_{\rmc} - {T})}^{-\cursigma^\prime}\, ,
&\mbox{ along the coexistence curve, ${T} < {T}_{\rmc}$.}
\end{array}
\right.
\label{cp55}
\end{eqnarray}
In addition to these purely thermodynamic critical exponents three more exponents
 $\nu$, $\nu'$ and $\eta$  arise, from statistical mechanics,  for the pair correlation function and
correlation length defined in Sect.\ \ref{corcoln}. For the correlation length the exponents $\nu$ and $\nu'$ are given by
\begin{eqnarray}
\curxi  &\sim& \left\{ \begin{array}{@{}ll}
{({T} - {T}_{\rmc})}^{-\curnu}\, ,
&\mbox{\ along the critical isochore, ${T} > {T}_{\rmc}$,}\\[0.3cm]
{({T}_{\rmc} - {T})}^{-\curnu^\prime}\, ,
&\mbox{\ along the coexistence curve, ${T} < {T}_{\rmc}$.}
\end{array}
\right.
\label{excl}
\end{eqnarray}
which encapsulates the asymptotic behaviour of the correlation length in a neighbourhood of that critical point.
The situation for the correlation function is rather more complicated since we are concerned not only with its dependence on the couplings
near to a critical region but also on its asymptotic form for a pair of widely separated lattice sites.
However, the result
\begin{equation}
\curGamma(\obr;\zeta_1,\zeta_2)=\frac{f_d(|\obr|/\curxi)}{ |\obr|^{d-2-\cureta}},\label{excf}
\end{equation}
 from Ginzburg--Landau theory \cite[see, for example,][Chap. 5]{lavisnew} in which dependence on the couplings is mediated  through the correlation length
  is believed to have wide applicability.
\section{The Ising Model}\label{tim}
For simplicity we consider a $d$-dimensional hypercubic lattice $\mcN_d$ of $N$ sites with periodic boundary conditions.
At   $\br\in\mcN_d$ there is a microsystem  $\sigma(\br)$ with values $\pm 1$\footnote{These states are usually thought of as spin directions up (+1) and down (-1).} so that the microstate of the system is $\bsigma\cequals\{\sigma(\br)\}$
and the Hamiltonian is
\begin{equation}
\hH(\zeta_\vtmT,\zeta_{\vtpH};\bsigma)\cequals
 -\zeta_\vtmT\sum^{\pop}_{\{{\br},{\br}'\}}\sigma({\br})\sigma({\br}') - \zeta_\vtpH\sum_{\{{\br}\}}\sigma({\br})\,
\label{eu0}
\end{equation}
where the first summation is over all first-neighbour pairs of lattice sites and the thermal and field couplings are
\begin{equation}
\zeta_\vtmT\cequals{J}/{T},\pairsep \zeta_\vtpH\cequals {\pH}/{T},
\label{isingcop}
\end{equation}
respectively, $J$ being an energy parameter, so that $J>0$ corresponds to ferromagnetic behavior, where the states of all first-neighbour pair of sites are aligned,  and $J<0$ corresponds to antiferromagnetic behaviour, where the states of all first-neighbour pairs are anti-aligned;\footnote{A perfect arrangement like this is possible for hypercubic lattices, but not for so-called `close-packed' lattices like the plane triangular lattice and the face-centred cubic lattice.} $\pH$ is a magnetic field.
It will be noted that these are the couplings introduced for the simple magnetic system in Sect.\ \ref{scathr} except that there we used $\varepsilon$ instead of $J$ and it was assumed that
$\varepsilon>0$. This two-state model which was first  solved for
$d=1$ by \citet{2-i1},\footnote{In fact it was suggested to \citeauthor{2-i1}
by his research director Wilhelm Lenz and historians of science like
\citet{3-b2} and \citet{3-n1} often call it the {\em Lenz-Ising model}.}
\begin{figure}[h]
  \includegraphics[width=4cm]{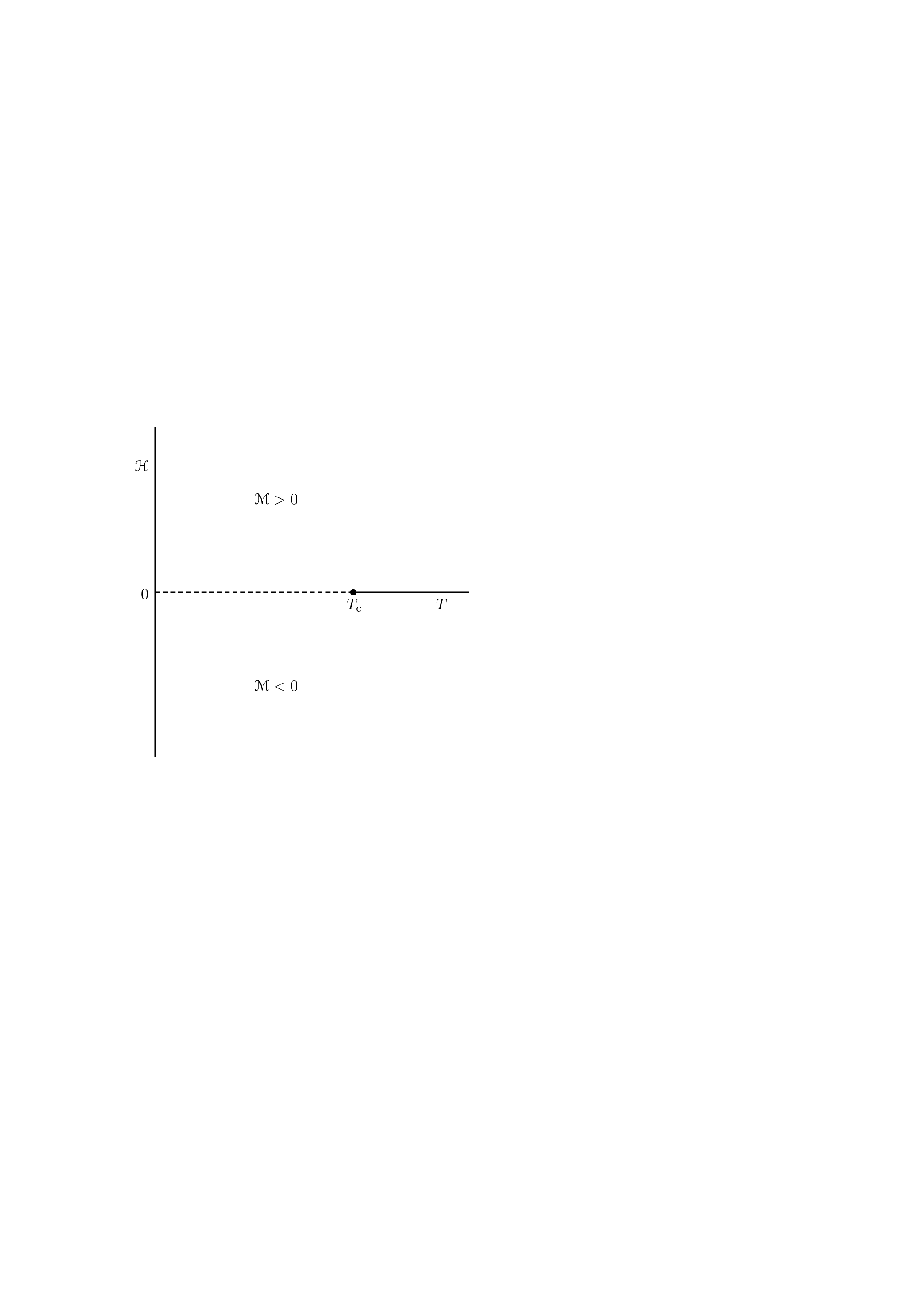}
\caption{The phase diagram for the Ising model. The  first-order
transition is shown as a broken line.}\label{Fig10}
\end{figure}
 is usually now called the \emb{spin-$\frac{1}{2}$ Ising model}.\footnote{The corresponding three-state model with states $0,\pm 1$ being the spin-1 Ising model.}
On the basis of this solution, and a wealth of other results for larger dimensions (including the exact  zero-field solution for a square lattice by \citet{1-o2})  the ferromagnetic phase diagram in the
space of the temperature $T$ and the field $\pH$ is known to have the form  shown in Fig.\ \ref{Fig10},  where, for $d=1$,  $T_\rmc=0$, for $d=2$, $T_\rmc=2.2692\, J$ and, for $d=3$,
$T_\rmc=4.5108\,J$.  Apart from in the one-dimensional case where the critical point is at zero temperature (see Sect.\ \ref{wtfn}(a)),
 there is a line of first-order transitions along the interval $[0,T_\rmc)$ of the zero-field axis
across which the magnetization $\pM$ changes between equal and opposite values.  The universality class of the second-order transition at the critical point depends on the dimension of the system.  The critical exponents for $d=2$ are $\curalpha=0$ (a logarithmic singularity), $\curbeta=\frac{1}{8}$, $\curgamma=\frac{7}{4}$, $\curdelta=15$.  For $d=3$ the exponents are obtained to a high level of accuracy by series methods with $\curalpha=0.11008$,   $\curbeta=0.326419$, $\gamma=1.237075$ and $\delta=4.78984$.
When $d\ge d_{\tU\tC}=4$ the critical exponents take their classical values  $\curalpha=0$, $\curbeta=\frac{1}{2}$, $\curgamma=1$ and $\curdelta=3$.\footnote{For  a definition of the
 upper-critical dimension see footnote   \ref{ucd}.}

\end{document}